\documentclass[12pt, draftclsnofoot, onecolumn]{IEEEtran}
\ifCLASSINFOpdf
  \usepackage[pdftex]{graphicx}
  \DeclareGraphicsExtensions{.pdf,.jpeg,.png,.eps}
\else
\fi
%
%

%
\usepackage[cmex10]{amsmath}
%
\interdisplaylinepenalty=2500
\usepackage{amsfonts}

\usepackage{multirow}

\hyphenation{op-tical net-works semi-conduc-tor}
\newtheorem{theorem}{Theorem}
\newtheorem{proposition}{Proposition}

\newtheorem{lemma}{Lemma}
\newtheorem{definition}{Definition}

\begin{document}
%
\title{Stochastic Geometry Analysis of Ultra-Dense Networks: Impact of Antenna Height and Performance Limits}
%
%
%

\author{Marcin~Filo,~\IEEEmembership{}
	Chuan~Heng~Foh,~\IEEEmembership{}
        Seiamak~Vahid,~\IEEEmembership{}
        and~Rahim~Tafazolli~\IEEEmembership{} \thanks{This paper was presented in part at the IEEE International Conference on Communications Workshops (ICC Workshops), May 2017.}
}

\maketitle

\begin{abstract}
The concept of Ultra Dense Networks (UDN) is often seen as a key enabler of the next generation mobile networks. However, existing analysis of UDNs, including Stochastic Geometry, has not been able to fully determine the potential gains and limits of densification. In this paper we study performance of UDNs in downlink and provide new insights on the impact of antenna height and careful site selection on the network performance. We focus our investigation on the probability of coverage, average cell rate and average area spectral efficiency for networks with regular and random BS deployments. We show that under a path-loss model which considers antenna height there exists an upper limit on network performance which is dependent on the path-loss model parameters. Our analysis shows an interesting finding that even for over-densified networks a non-negligible system performance can be achieved.

\end{abstract}

\begin{IEEEkeywords}
Ultra dense networks, stochastic geometry, regular networks, irregular networks, SINR.
\end{IEEEkeywords}

%
\IEEEpeerreviewmaketitle

\section{Introduction}

\IEEEPARstart{T}{he} concept of Ultra Dense Networks (UDN) is often seen as a key enabler of the next generation mobile networks \cite{6476879, 6845056}. UDNs are built on the concept of Heterogeneous cellular Networks (HetNet) which provide an effective way for increasing network capacity and achieving higher data rates. In contrast to HetNets, UDNs are expected to provide full network coverage and thus they usually consist of significantly higher number of cells densely deployed within a region. Besides, further densification of cells also offers higher area spectral efficiency which can lead to further increase in network capacity. This potential network capacity gain has triggered growing interest in UDNs from research community, industry and standardization bodies.

Investigation of cellular network performance is crucial for proper assessment of different approaches for increasing network capacity. In general, the methods for investigation of cellular network performance (including UDN) can be subdivided into two main groups. The first group consists of approaches based on analytical modeling with simplified assumptions to maintain tractability whilst the second group attempts to capture system operation in detail by performing system-level simulations. Although in general the simulation based methods provide more accurate results (since they can capture the detail of system operation), the investigation based on the analytical models allow for better understanding of network operation and can provide insightful design guidelines which are often difficult to obtain from simulations. However, due to the complexity of the wireless network behavior, capturing its characteristics by means of analytical modeling is a challenging task. Recently, Stochastic Geometry (SG) has emerged as one of the most prominent approaches as it captures spatial characteristics of wireless networks for performance analysis. This feature is especially important for investigation of UDN since its performance behavior is highly dependent on spatial information of the network. This motivated us to use SG as our main tool for the analysis of UDN performance.

Despite the growing interest and effort from research community to further our understanding of UDN performance behavior, several key questions still remain unanswered. One of the fundamental questions is related to the impact of densification on the UDN performance (e.g. area spectral efficiency, coverage)\footnote{See \cite{ding2017performance2} for other examples}. Whilst efforts have been made, conclusive demonstration on how system performance changes with Base Station (BS) density and what happens when BS density tends to infinity remains missing. Another important question is related to the impact of network deployment (e.g. network site planning, antenna height) on the system performance. For instance, although it is a common belief that site planning is not suitable in the context of UDNs due to high cost and complexity, the impact of site planning on UDN performance have not been quantified. In the following, we review key developments of UDN performance investigation.

\subsection{Related work}

One of the first works investigating the relationship between network density and system performance is due to Andrews et al. \cite{andrews2011tractable}. The work shows surprising results where coverage probability as well as mean achievable data rate per cell, do not depend on BS density (also known as SINR invariance property \cite{zhang2015downlink}). The constant mean achievable data rate per cell implies that continuingly increasing the number of BSs in wireless networks could lead to limitless overall network performance improvements. Their conclusions are based on the assumptions of simple unbounded path-loss model, noise-less networks and BS locations following a Poisson point process. Under these assumptions, as BS density increases, the change in aggregated interference power is counter-balanced by the change in signal power, and thus the SINR remains unchanged regardless of network density \cite{andrews2011tractable}. Following the same assumptions, Dhillon et al. \cite{dhillon2012modeling} shows SINR invariance property for HetNets.

The applicability of the above conclusions has been recently questioned in \cite{zhang2015downlink, gupta2015sinr, andrews2015we} where they show that under a multi-slope path-loss model the coverage probability and mean data rate per cell are dependent on BS density. This is in contrast to the earlier SINR invariance property that suggested potentially infinite aggregated data rate of the network resulting from BS densification. Moreover, it has been shown that for certain critical near-field path-loss exponents an optimal network density exists to maximize the coverage probability and mean data rate per cell. Beyond this optimal network density, the coverage probability decreases as BS density increases. Similar results can be also found in \cite{nguyen2016coverage}, where Nguyen and Kountouris also show that SINR invariance does not hold for a multi-slope path-loss model. These works indicate that the path-loss model is a critical factor affecting the network performance of a UDN.

Further studies have also demonstrated that other features of path-loss model and network deployment may also invalidate SINR invariance property in UDN. In the aspect of path-loss model, UDN performance was analysed with various path-loss models in the literture. For unbounded path-loss model including Line Of Sight (LOS) and Non Line of Sight (NLOS) consideration, Ding et al. show in simulation and analytically in \cite{ding2016performance} that the network performance depends on BS density. In \cite{gupta2015potential}, Gupta et al. extend their earlier study of multi-slope path-loss model to include a scenario in which BSs and users are located in 3-dimensional space. They came to a similar conclusion that near-field path-loss exponent influences the performance of the network and SINR invariance property may only be valid under certain conditions. Most recently, Ding and Perez studied the impact of antenna height in UDN with results claiming that the area spectral efficiency goes to zero as BS density goes to infinity \cite{7842150,ding2017performance}. Their results also dismiss the SINR invariance property for the considered scenario.Similar to Ding and Perez, in our previous work we studied the impact of antenna height on the network performance. In contrast to their work, our results hinted however that area spectral efficiency does not necessarily need to go to zero \cite{7962695}. In this paper we provide, among others, additional evidence supporting this claim. 

In the aspect of network deployment, a comparative simulation-based study of irregular and regular BS deployments has been presented in \cite{chen2012small} where Chen et al. show that the area network performance peaks at the certain BS density and then starts to decline showing that network performance in their considered scenarios does depend on BS density for the deployment cases considered. They also show an interesting finding that for certain network densities, the difference in performance between irregular and regular BS deployments is constant approximately.

\subsection{Contributions}
Despite providing new and interesting insights, there are still a lot of questions related to UDN performance which need to be answered. As discussed, SINR invariance property in wireless networks allows Area Spectral Efficiency (ASE) to continue to improve as BS density increases. This is a desirable feature for performance improvement in UDN since it maintains densification gain as network densifies. While SINR invariance property has been shown to appear in some scenarios and system models, as discussed above, recent studies show SINR invariance property does not hold in many other scenarios. In this paper, we focus on the impact of antenna height and placement on UDN performance. In particular, we attempt to answer the following questions:

\begin{itemize}
\item What is the role of antenna height on the SINR invariance property in UDN? Existing papers \cite{ding2016performance,nguyen2016coverage, zhang2015downlink, gupta2015sinr, andrews2015we} study the impact of multi-slope feature as well as LOS/NLOS feature in a path-loss model, and they show that SINR invariance property does not hold with these features. In \cite{7842150}, Ding and Perez consider BS antenna height in the path-loss model and show the invalidation of SINR invariance property for the scenario. In this paper, similar to our earlier work in \cite{7962695}, we not only show the invalidation of SINR invariance property for any nonzero BS antenna height, but also formulate the relationship showing how UDN performance is affected by the BS antenna height. Interestingly, we found conditions where SINR invariance property does hold for the considered path-loss model. We call this \textit{density countering condition} for maintaining the desirable SINR invariance property.

\item For our scenario, what will happen to Area Spectral Efficiency (ASE, in bps/Hz/m$^2$) when BS density approaches infinity? In the case where SINR invariance does not hold, it is important to understand whether continuing densification will still lead to ASE improvement. A recent work claims that ASE goes to zero when BS density approaches infinity \cite{7842150, ding2017performance}. In this paper, we show that ASE does not necessarily go to zero when BS density approaches infinity. We have presented new results to calculate the ASE.

\item Can we benefit from careful site planning in UDN? A recent paper \cite{chen2012small} shows with simulation that irregular and regular BS deployments do give difference ASE in UDN for some network densities. In this paper, we show analytically the performance of regular networks, and investigates the conditions in which regular networks can outperform irregular networks.
\end{itemize}

\section{System Model}

\subsection{Network model}

We consider a single-tier cellular network utilizing a multiple access technique which ensures orthogonal resource allocation within a cell. All BSs in the network transmit with the same power. We further assume that a mobile user always connects to the BS with the strongest received signal which is usually the closest BS to the mobile user. The mobile user density $\lambda_u$ is assumed to be much higher than the BS density $\lambda$ (i.e. $\lambda_u>> \lambda)$ such that each BS always has a user to serve. Regular and irregular BS deployment models for one-dimensional (1D) and two-dimensional (2D) Euclidean space are considered in this work.

In network of regular BS deployment, BSs are arranged in a regular geometrical structure within the network. In the case of 1D regular deployment, BSs are deployed regularly every fixed distance along a line. In the case of 2D, BSs are deployed in a regular hexagonal layout with each BS placed at the center of a hexagon. In other words, the 2D locations of BSs are
\begin{equation*} \Phi^{HEX}=\{ (\Upsilon(m + n/2), \Upsilon(n \sqrt{3}/2))|\, m, n \in \mathbb{Z} \} \in \mathbb{R}^2\end{equation*}

and the 1D locations of BSs are
\begin{equation*} \Phi^{LINE} = \{ \Upsilon i|\, i \in \mathbb{Z} \} \in \mathbb{R}\end{equation*} 
where $\mathbb{Z}$ and $\mathbb{R}$ are sets of integer and real numbers, respectively. The quantity $\Upsilon$ is the inter-site distance between two adjacent BSs. The spatial density of BSs in the regular deployment is $\lambda = \frac{2}{\Upsilon^2 \sqrt{3}}$ and $\lambda = 1/\Upsilon$ for 2D and 1D cases respectively. Mobile users are located uniformly in the network. An illustration of both deployments is given in Figure \ref{fig:system_model}.

In the irregular deployment, the locations of BSs are modelled according to a homogeneous Poisson point process (PPP) $\Phi^{PPP}$ with a spatial density of $\lambda$. The mean inter-site distances for the 2D and 1D cases are $\Upsilon = \sqrt{\frac{2}{\lambda \sqrt{3}}}$ and $\Upsilon=1/\lambda$, respectively. Mobile users are uniformly distributed in the Voronoi cell of its corresponding BS.

\begin{figure}[!t]
\centering
\includegraphics[scale=0.23]{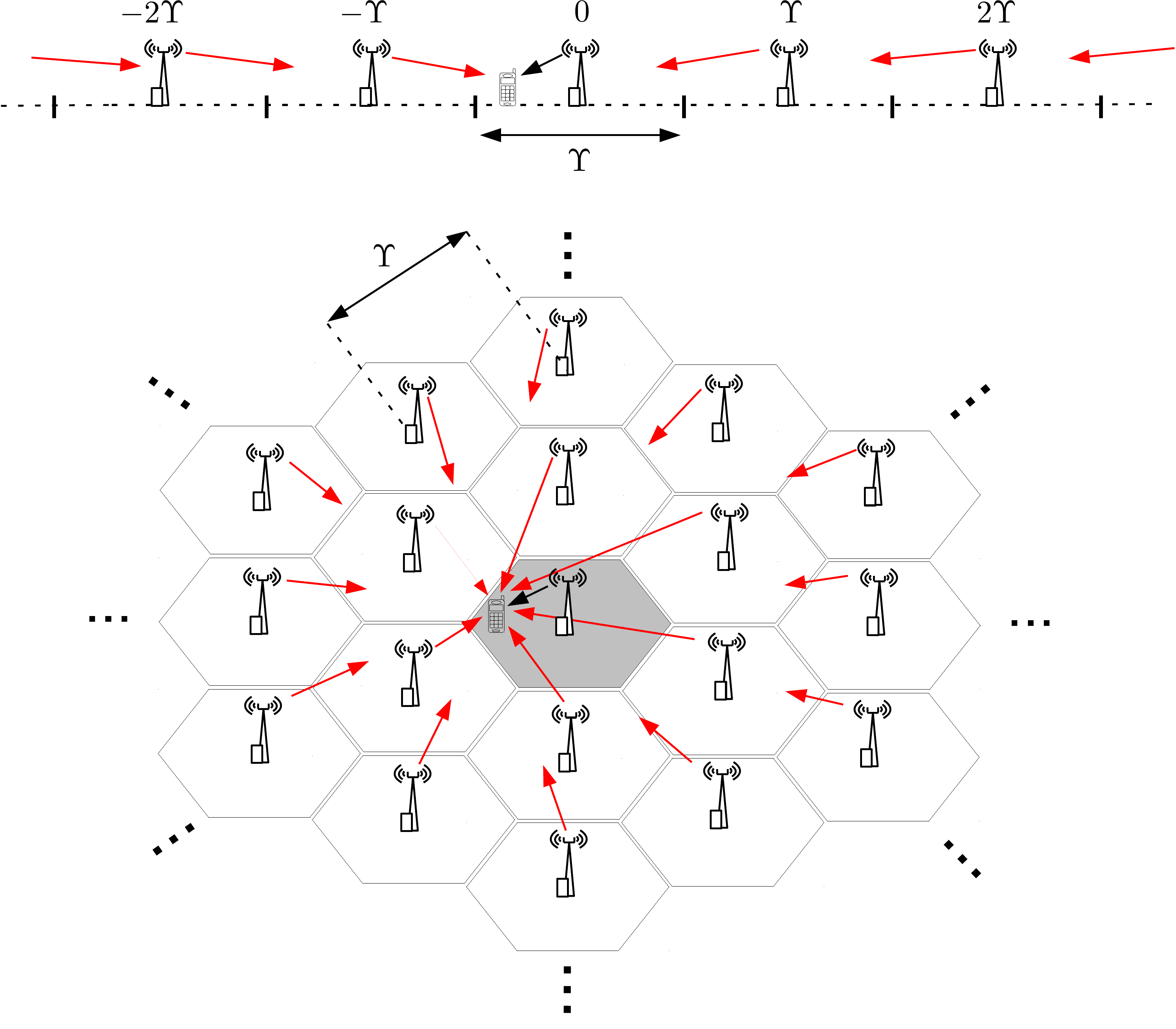}
\caption{Regular BS deployment for 1D (top) and 2D (bottom) Euclidean space}
\label{fig:system_model}
\end{figure}

\subsection{Channel model}

As discussed in the literature, the channel model, particularly the path-loss plays an important role in the performance of the UDN. In this work, we consider a path-loss model function that can describe antenna height, $l_1(r) = (h^2 + r^2)^{-\alpha/2}$, where the parameter $h$ is the difference between BS and mobile user antenna heights, $r$ is the horizontal distance between the mobile user and the BS, and $\alpha$ is the path-loss exponent with $\alpha>1$ for 1D and $\alpha>2$ for 2D. A special case of this path-loss model is $l_0(r) = r^{-\alpha}$ when the antenna height is set to zero. The path-loss model $l_0(r)$ is a commonly used model in network performance analysis (see e.g. \cite{andrews2011tractable} and \cite{dhillon2012modeling}). However, we shall show that $l_0(r)$ produces overly optimistic performance results in UDNs. In this paper, we mainly focus on $l_1(r)$ but also use $l_0(r)$ for comparison.

Unless otherwise stated, the random channel effects are modelled as Rayleigh fading with mean 1, thus the received power (for the desired or interference signal) at a user is $g\cdot l_i(r)$, where $g\sim exp(\mu)$ is an exponentially distributed random variable with a mean value of $1/\mu$ which is equal to the BS transmit power, and $l_i(r)$ is one of our considered two path-loss model functions ($i \in \{0, 1\}$).

It is worth highlighting here that due to high density of BSs in UDNs, desired signal power as well as interference power are significantly higher compared to Gaussian noise. We also show that when the BS density approaches infinity, the UDN with noise in the path-loss model gives exactly the same performance as that without noise (see Theorem \ref{theorem_aase}). This allows us to neglect noise in order to simplify expressions and conduct our investigations.

\section{Coverage Probability Analysis}
\label{sec:cpa}
The coverage probability is a measure of chances that a particular user can be served by the network \cite{andrews2011tractable}. Formally, it is defined as $p_c \overset{\scriptscriptstyle\Delta}{=} \mathbb{P}(SINR > T)$ which is the complementary cumulative distribution function (CCDF) of SINR. For completeness, the results in this section are derived for regular as well as irregular deployments (even though some of the considered cases do not lead to tractable expressions). Based on Slivnyak's theorem, in our considered irregular network where BS locations are modeled as a homogeneous PPP, the coverage probabilities of mobile users are statistically identical \cite{baccelli2009stochastic}. Thus investigating the performance of a typical user in the irregular network is sufficient.

Assuming that a user is always associated with the BS of the strongest received signal, the SINR of a user can be written as
\begin{equation}SINR = \frac{g \cdot l(r)}{I_r + \sigma^2},\end{equation}
where $r$ is the distance between the considered user and its serving BS, $\sigma^2$ is constant receiver noise power, and
\begin{equation}I_r = \sum_{x \in \Phi \setminus b_o} g_xl(\|x-u_0\|),\end{equation}
is the interference power which is the sum of received power from all surrounding BSs except the user's serving BS at $b_o$ and $\|x-u_0\|$ is the distance between each BS and the user located at $u_0$.

Depending on the considered BS deployment the distance $r$ separating a typical user from its serving BS can have different distributions. For irregular BS deployment, as shown in \cite{andrews2011tractable}, $r$ is shown to be Rayleigh distributed. The following expression presents a distribution function of $r$ for a $d$-dimensional Euclidean space 
\begin{equation} f_r(r) = c_d d r^{d-1} \lambda e^{-c_d\lambda r^d} \,, r \geq 0\end{equation}
where $d \in \{1, 2\}$ is the dimension of the Euclidean space and $c_d \overset{\scriptscriptstyle\Delta}{=} \frac{\pi^{d/2}}{\Gamma(1+d/2)}$ is volume of the $d$-dimensional unit ball (see \cite{haenggi2009interference}). Given 1D and 2D space, $c_1 = 2$, $c_2=\pi$.

In case of regular BS deployment, $r$ is uniformly distributed leading to following expressions (for 1D and 2D Euclidean space, respectively)
\begin{equation}f_r(r) = \frac{2}{\Upsilon} \,\,,  r \leq \frac{\Upsilon}{2} \end{equation}
\begin{equation}
\label{eqn_fr_reg_2d}
f_r(r) = \frac{6 \cdot 2 r}{\Upsilon^2\sqrt{3}} \,\,,  0 \leq r \leq \frac{\Upsilon }{2 \sin(\theta + \frac{\pi}{3})}  \end{equation}
where $\Upsilon$ is the inter-site distance and $0 \leq \theta \leq \frac{\pi}{3}$.

\begin{definition}
\label{definition_1} 
Based on Theorem 1 in \cite{andrews2011tractable}, the coverage probability in the downlink for a generic path-loss model function, assuming Rayleigh fading, can be represented as
\begin{equation}
\label{eqn_pc_generic}
p_c^{\{d,i\}} (T, \lambda, \alpha) = \int_{\mathbb{R}^d} e^{-\frac{\mu \sigma^2 T}{l(r)}} \mathcal{L}_{I_r}\left(\frac{\mu T}{l(r)}\right) f_r(r) dr, 
\end{equation}
where $\mathcal{L}_{I_{r}}\left(s\right)$ is the Laplace transform of the random variable $I_r$, $d$ is the dimension of the Euclidean space and $i \in \{0, 1\}$ indicates the path-loss model function. \end{definition}

In the following subsection, we derive formulas for coverage probability for both regular and irregular networks.

\subsection{Irregular network}

We first study the extreme condition where BS density approaches infinity. In the following, we show that the coverage probabilities for irregular 1D and 2D networks under $l_1(r)$ tends to zero when BS density approaches infinity. In case of $l_0(r)$, the probability of coverage is independent of BS density (see also \cite{andrews2011tractable})\footnote{The formula for 2D network is already given in \cite{andrews2011tractable}. Here we provide the formula for 1D network.}. We start by formulating the coverage probability for irregular 1D and 2D networks in the following Lemma, assuming a generic path-loss model.

\begin{lemma}
\label{lemma_1}
The coverage probabilities for irregular 1D and 2D networks under $l_1(r)$ path-loss model function can be expressed as
\begin{equation} 
\label{lemma1_pc_1}
p_c^{\{1,1\}} (T,\lambda, \alpha) = 2\int_0^\infty e^{-2k} e^{-\mu T \sigma^2(\frac{k^2}{\lambda^2}+h^2)^{\alpha/2}} \cdot e^{-2  \xi(T,\alpha,k)} dk
\end{equation}
and

\begin{equation}
\label{lemma1_pc_2}
p_c^{\{2,1\}} (T,\lambda, \alpha) = \pi e^{-\lambda \pi h^2 \rho_2(T,\alpha)} \int_0^\infty e^{-\mu T \sigma^2 (\frac{k}{\lambda}+h^2)^{\alpha/2}} \cdot e^{-\pi k(1+ \rho_2(T,\alpha))} dk
\end{equation}
respectively where
\begin{equation}
\xi(T,\alpha,k) = \int_k^\infty \frac{1}{1 + T^{-1} \left(\frac{t^2 + (h\lambda)^2}{k^2 + (h\lambda)^2}\right)^{\frac{\alpha}{2}}} dt 
\end{equation}
and
\begin{equation} 
\label{eqn_pc_d1_2}
\rho_d(T,\alpha) = T^{d/\alpha}\int_{T^{-d/\alpha}}^\infty   \frac{1}{1+u^{\alpha/d}}  du.
\end{equation}

\end{lemma}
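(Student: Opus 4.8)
The plan is to specialize the generic coverage formula~\eqref{eqn_pc_generic} of Definition~\ref{definition_1} to the PPP deployment in three moves: insert the contact-distance density $f_r$, evaluate the Laplace transform $\mathcal{L}_{I_r}$ of the interference in closed form under $l_1$, and then absorb $\lambda$ into the integration variable. First I would record the ingredients. Specializing the given $f_r$ to $d=1,2$ with $c_1=2$, $c_2=\pi$ gives $f_r(r)=2\lambda e^{-2\lambda r}$ and $f_r(r)=2\pi\lambda r e^{-\pi\lambda r^2}$, and the noise exponent in~\eqref{eqn_pc_generic} becomes $\mu\sigma^2 T/l_1(r)=\mu\sigma^2 T\,(h^2+r^2)^{\alpha/2}$.

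Next I would compute $\mathcal{L}_{I_r}(s)$. Conditioning on the serving BS being at distance $r$, the interfering BSs form a homogeneous PPP of intensity $\lambda$ on $\mathbb{R}^d\setminus B(0,r)$ (the standard reduction used in~\cite{andrews2011tractable}). Applying the probability generating functional of the PPP and integrating out the i.i.d.\ $\exp(\mu)$ fading, for which $\mathbb{E}_g[e^{-sgl}]=\mu/(\mu+sl)$, yields
\begin{equation*}
\mathcal{L}_{I_r}(s)=\exp\!\left(-\lambda\int_{\mathbb{R}^d\setminus B(0,r)}\frac{dx}{1+\mu/\bigl(s\,l_1(\|x\|)\bigr)}\right).
\end{equation*}
With $s=\mu T/l_1(r)=\mu T(h^2+r^2)^{\alpha/2}$ the integrand becomes $\bigl(1+T^{-1}\bigl((t^2+h^2)/(r^2+h^2)\bigr)^{\alpha/2}\bigr)^{-1}$ with $t=\|x\|$; the shell integral equals $2\int_r^\infty(\cdots)\,dt$ for $d=1$ and $2\pi\int_r^\infty(\cdots)\,t\,dt$ for $d=2$. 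In the 2D case the substitution $v=(t^2+h^2)/(r^2+h^2)$ followed by $u=T^{-2/\alpha}v$ collapses the inner integral, and comparison with~\eqref{eqn_pc_d1_2} identifies it as $\pi\lambda(r^2+h^2)\,\rho_2(T,\alpha)$; this single reduction is the only step that needs any care.

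Finally I would assemble~\eqref{eqn_pc_generic} and change variables: $k=\lambda r$ in the 1D integral and $k=\lambda r^2$ in the 2D integral, which turn $f_r(r)\,dr$ into $2e^{-2k}\,dk$ and $\pi e^{-\pi k}\,dk$ respectively and send $(h^2+r^2)^{\alpha/2}$ to $(k^2/\lambda^2+h^2)^{\alpha/2}$ and $(k/\lambda+h^2)^{\alpha/2}$. In 1D a further rescaling of the interference integration variable by $\lambda$ converts the exponent $-2\lambda\int_{k/\lambda}^\infty(\cdots)\,dt$ into $-2\,\xi(T,\alpha,k)$, with the $(h\lambda)^2$ terms emerging exactly as written; in 2D the factor $e^{-\pi\lambda(r^2+h^2)\rho_2}$ splits as $e^{-\pi\lambda h^2\rho_2}\,e^{-\pi k\rho_2}$ and combines with the $e^{-\pi k}$ coming from $f_r$ to give $e^{-\pi k(1+\rho_2)}$ times the pulled-out constant $\pi e^{-\lambda\pi h^2\rho_2}$. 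Collecting terms yields~\eqref{lemma1_pc_1} and~\eqref{lemma1_pc_2}.

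I do not anticipate a genuine obstacle here: the content is the PGFL-plus-Rayleigh computation of the Laplace transform together with two changes of variable. The part most prone to error is bookkeeping — tracking the powers of $\lambda$, $\mu$ and $\pi$ through the substitutions, and keeping the lower limit of the interference integral at $r$ rather than $0$, which is forced by the nearest-BS association.
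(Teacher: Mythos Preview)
Your proposal is correct and follows essentially the same route as the paper: insert the PPP contact-distance density and the PGFL/Rayleigh Laplace transform into the generic formula, then apply the change of variables $k=\lambda r$ (1D) and $k=\lambda r^2$ (2D), with an additional rescaling $t=\lambda v$ in the 1D interference integral. The only cosmetic difference is that the paper performs the 2D substitution identifying $\rho_2$ in one step, $u=(v^2+h^2)/\bigl(T^{2/\alpha}(r^2+h^2)\bigr)$, whereas you split it into two; this is immaterial.
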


\begin{proof}
The proof is based on the proof of Theorem 1 in \cite{andrews2011tractable}. By substituting $f_r(r) = c_d d r^{d-1} \lambda e^{-c_d\lambda r^d}$ and $l(r) = l_1(r)$ in (\ref{eqn_pc_generic}), the coverage probability of $d$-dimensional irregular network can be expressed as
\begin{equation}
\label{eqn_pc_2d2_proof_1}
p_c^{\{d,1\}} (T,\lambda, \alpha) = dc_d\lambda\int_0^\infty r^{d-1}e^{-c_d\lambda r^d} e^{-\mu T \sigma^2 (r^2+h^2)^{\alpha/2}} \cdot \mathcal{L}_{I_r}\left(\mu T(r^2+h^2)^{\alpha/2}\right) dr
\end{equation} where
\begin{equation} \mathcal{L}_{I_{r}}\left(s\right) = \exp \left( -dc_d\lambda \int_{r}^\infty  \left(  1- \frac{\mu}{\mu +s (h^2+v^2)^{-\frac{\alpha}{2}}}  \right) v^{d-1}dv \right).
\end{equation} 
By assuming $d=2$ and employing a change of variables $u = \frac{v^2+h^2}{T^{\frac{2}{\alpha}}(r^2 + h^2)}$ we obtain the following expression for 2D
\begin{equation} \mathcal{L}_{I_{r}}\left(\mu T(h^2+r^2)^{\alpha/2}\right) = e^{-\lambda \pi (r^2+h^2)\,\rho_2(T,\alpha)}. \end{equation}

When we substitute the above result in (\ref{eqn_pc_2d2_proof_1}) and employ a change of variables $k = \lambda r^2$, we obtain the final result for 2D. 

Similarly, by assuming $d=1$ and employing a change of variables $k = \lambda r$ we obtain following expressions for 1D
\begin{multline}
p_c^{\{d,1\}} (T,\lambda, \alpha) = 2\int_0^\infty e^{-2k} e^{-\mu T \sigma^2 (\frac{k^2}{\lambda^2}+h^2)^{\alpha/2}}  \cdot \exp \left( -2\lambda \int_{\frac{k}{\lambda}}^\infty \frac{1}{1 + T^{-1} \left(\frac{(v\lambda)^2 + (h\lambda)^2}{k^2 + (h\lambda)^2}\right)^{\frac{\alpha}{2}}} dv\right) dr.
\end{multline}

Next, by employing a change of variables $t = \lambda v$, we obtain the final result for 1D, thus concluding the proof.

\end{proof}

Based on Lemma \ref{lemma_1}, we formulate the following theorem.

\begin{theorem} 
\label{theorem_1}
The coverage probability of 1D and 2D irregular networks under path-loss model $l_1(r)$ tends to 0 for BS density $\lambda \to \infty$, that is
\begin{equation} \lim_{\lambda \to \infty} p_c^{\{d,1\}} (T,\lambda, \alpha) =  0. \end{equation}  
\end{theorem}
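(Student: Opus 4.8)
The plan is to work directly from the closed forms for $p_c^{\{1,1\}}$ and $p_c^{\{2,1\}}$ established in Lemma~\ref{lemma_1} and show that each tends to $0$ as $\lambda\to\infty$, treating the two dimensions separately. Throughout I assume $h>0$ (this is exactly what distinguishes $l_1$ from $l_0$), and I note that $\rho_d(T,\alpha)$, defined in (\ref{eqn_pc_d1_2}), is a finite, strictly positive constant for $\alpha>d$, since $\int_{T^{-d/\alpha}}^\infty (1+u^{\alpha/d})^{-1}\,du$ converges when $\alpha/d>1$ and is positive for $T>0$.

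The 2D case is essentially immediate from (\ref{lemma1_pc_2}). The prefactor $e^{-\lambda \pi h^2 \rho_2(T,\alpha)}$ tends to $0$ as $\lambda\to\infty$ because $h>0$ and $\rho_2(T,\alpha)>0$, so it suffices to bound the remaining integral uniformly in $\lambda$. Since $k/\lambda\ge 0$ we have $e^{-\mu T \sigma^2 (k/\lambda+h^2)^{\alpha/2}}\le 1$, hence
\begin{equation*}
\int_0^\infty e^{-\mu T \sigma^2 (k/\lambda+h^2)^{\alpha/2}}\,e^{-\pi k(1+\rho_2(T,\alpha))}\,dk \;\le\; \int_0^\infty e^{-\pi k(1+\rho_2(T,\alpha))}\,dk \;=\; \frac{1}{\pi(1+\rho_2(T,\alpha))} \;<\; \infty ,
\end{equation*}
and therefore $p_c^{\{2,1\}}(T,\lambda,\alpha)\to 0$.

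For the 1D case I would invoke dominated convergence in (\ref{lemma1_pc_1}). The integrand is bounded, for every $\lambda$, by $2e^{-2k}$, which is integrable on $(0,\infty)$, so it is enough to show the integrand converges to $0$ for each fixed $k>0$. The factor $e^{-\mu T \sigma^2(k^2/\lambda^2+h^2)^{\alpha/2}}$ converges to the constant $e^{-\mu T \sigma^2 h^{\alpha}}\le 1$, so everything hinges on showing $\xi(T,\alpha,k)\to\infty$. For fixed $k>0$ and each $t>k$, the ratio $(t^2+(h\lambda)^2)/(k^2+(h\lambda)^2)\to 1$ as $\lambda\to\infty$, so the integrand of $\xi$ converges pointwise in $t$ to $1/(1+T^{-1})=T/(T+1)>0$; by Fatou's lemma $\liminf_{\lambda\to\infty}\xi(T,\alpha,k)\ge \int_k^\infty \frac{T}{T+1}\,dt=\infty$. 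Hence $e^{-2\xi(T,\alpha,k)}\to 0$, the integrand in (\ref{lemma1_pc_1}) tends to $0$ pointwise, and dominated convergence yields $p_c^{\{1,1\}}(T,\lambda,\alpha)\to 0$.

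The one step that needs genuine care — the main obstacle — is this interchange of limits in the 1D case: the integrand of $\xi$ converges to a positive constant but over an \emph{unbounded} domain, so $\xi\to\infty$ must be argued rather than read off. Fatou handles it cleanly; equivalently, one can lower-bound $\xi(T,\alpha,k)$ by its integral over a finite window $[k,k+M]$, where $(t^2+(h\lambda)^2)/(k^2+(h\lambda)^2)$ is uniformly close to $1$ for large $\lambda$, obtaining $\xi\gtrsim M\cdot T/(T+1)$ and then letting $M\to\infty$. Everything else is bookkeeping, and note the argument is insensitive to whether noise is present ($\sigma^2>0$) or not ($\sigma^2=0$), since the $\sigma^2$-exponential only contributes a bounded factor.
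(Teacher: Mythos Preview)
Your proof is correct and follows essentially the same route as the paper's: both work directly from Lemma~\ref{lemma_1}, dispatch the 2D case via the vanishing prefactor $e^{-\lambda\pi h^2\rho_2(T,\alpha)}$ together with a uniform bound on the remaining integral, and handle the 1D case by showing $\xi(T,\alpha,k)\to\infty$ so that the integrand in (\ref{lemma1_pc_1}) tends to $0$. The only minor difference is in the limit-interchange tools: the paper appeals to monotone convergence throughout, whereas you use dominated convergence (with the clean dominant $2e^{-2k}$) for the outer integral and Fatou for the divergence of $\xi$, which is arguably the tidier justification.
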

\begin{proof} 
By applying the Monotone Convergence Theorem to formula (\ref{lemma1_pc_1}) for 1D network, and by using basic properties of limits, we can rewrite (\ref{lemma1_pc_1}) as 
\begin{multline} 
\lim_{\lambda \to \infty} p_c^{\{1,1\}} (T,\lambda, \alpha) = 2\int_0^\infty e^{-2k} \lim_{\lambda \to \infty} e^{-\mu T \sigma^2(\frac{k^2}{\lambda^2}+h^2)^{\alpha/2}} \\ \cdot \exp \left(-2  \int_k^\infty \lim_{\lambda \to \infty} \frac{1}{1 + T^{-1} \left(\frac{t^2 + (h\lambda)^2}{k^2 + (h\lambda)^2}\right)^{\frac{\alpha}{2}}} dt\right) dk.
\end{multline}

Simplifying the above, yields 
\begin{equation} 
\lim_{\lambda \to \infty} p_c^{\{1,1\}} (T,\lambda, \alpha) = 2\int_0^\infty e^{-2k} e^{-\mu T \sigma^2 h^\alpha}  \cdot \exp \left(-2  \int_k^\infty \frac{1}{1 + T^{-1}} dt\right) dk.
\end{equation}

We finally conclude the proof for 1D network by exploiting the fact that 
\begin{equation} \int_k^\infty \frac{1}{1 + T^{-1}} dt = \infty, \end{equation}
which brings the coverage probability to 0 as $\lambda \to \infty$.

Similar to the 1D case, by applying the Monotone Convergence Theorem to formula (\ref{lemma1_pc_2}) for 2D network, and by using basic properties of limits, we have 
\begin{multline}
\lim_{\lambda \to \infty} p_c^{\{2,1\}} (T,\lambda, \alpha) = \lim_{\lambda \to \infty} \pi e^{-\lambda \pi h^2 \rho_2(T,\alpha)} \cdot \int_0^\infty \lim_{\lambda \to \infty}  e^{-\mu T \sigma^2 (\frac{k}{\lambda}+h^2)^{\alpha/2}} e^{-\pi k(1+ \rho_2(T,\alpha))} dk.
\end{multline}
From the above results, we can see that
\begin{equation} \lim_{\lambda \to \infty} \pi e^{-\lambda \pi h^2 \rho_2(T,\alpha)} = 0, \end{equation}
and
\begin{equation} \lim_{\lambda \to \infty} e^{-\mu T \sigma^2 (\frac{k}{\lambda}+h^2)^{\alpha/2}} = e^{-\mu T \sigma^2 h^\alpha}, \end{equation}
thus concluding the proof for 2D.

\end{proof}

It is worth noting that Theorem \ref{theorem_1} shows the dependence of coverage probability on BS density for the considered path-loss model. This is in line with the recent works \cite{zhang2015downlink, gupta2015sinr, andrews2015we, ding2016performance} which dismiss the SINR invariance property for other path-loss models. In other words,  for $h > 0$, the increase in the interference power is not counter-balanced by the increase in the signal power.

Interestingly, we also observe that changes in the BS density $\lambda$ can be counter-balanced by the adjustments of the path-loss model parameter $h$ to maintain the same coverage probability. We call this the \textit{density countering condition} which leads to maintenance of SINR invariance property. We describe this condition in the following theorem.

\begin{theorem}
\label{theorem_2}
The coverage probability of noise-less irregular networks for path-loss model $l_1(r)$ is constant, when $\lambda\, h^d=c$ and $c$ is constant.
\end{theorem}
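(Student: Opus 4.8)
The plan is to read the statement off the closed forms established in Lemma~\ref{lemma_1}: set the noise power to zero, and then observe that the pair $(\lambda,h)$ enters the surviving expressions only through the single scalar $\lambda h^d$. Once that is checked, fixing $\lambda h^d=c$ pins $p_c^{\{d,1\}}$ down to a function of $(T,\alpha,c)$ alone, which is exactly the claim, so there is nothing left to estimate.

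First I would treat the 1D case. Putting $\sigma^2=0$ in \eqref{lemma1_pc_1} leaves
\begin{equation}
p_c^{\{1,1\}}(T,\lambda,\alpha)=2\int_0^\infty e^{-2k}\,e^{-2\xi(T,\alpha,k)}\,dk,
\end{equation}
so the entire dependence on $\lambda$ and $h$ is carried by $\xi(T,\alpha,k)$. But in its definition $\lambda$ and $h$ appear \emph{only} through the group $(h\lambda)^2$, occurring identically in the numerator and the denominator of the integrand. Hence along the curve $\lambda h=c$ we may substitute $(h\lambda)^2=c^2$ once and for all, so that $\xi$ --- and with it $p_c^{\{1,1\}}$ --- becomes a function of $T$, $\alpha$, $c$ only and does not change as $\lambda$ slides along $\lambda h=c$. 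A small point to record is that $\xi(T,\alpha,k)<\infty$ for every $k\ge 0$, since its integrand decays like $t^{-\alpha}$ at infinity and $\alpha>1$, so fixing the value of $(h\lambda)^2$ is harmless.

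Next I would do the 2D case. With $\sigma^2=0$, \eqref{lemma1_pc_2} collapses to an elementary integral,
\begin{equation}
p_c^{\{2,1\}}(T,\lambda,\alpha)=\pi\,e^{-\lambda\pi h^2\rho_2(T,\alpha)}\int_0^\infty e^{-\pi k(1+\rho_2(T,\alpha))}\,dk=\frac{e^{-\lambda\pi h^2\rho_2(T,\alpha)}}{1+\rho_2(T,\alpha)},
\end{equation}
where convergence of the $k$-integral uses that $\rho_d(T,\alpha)$ in \eqref{eqn_pc_d1_2} is finite for $\alpha>2$. Since $\rho_2$ depends only on $T$ and $\alpha$, the sole residual dependence on the geometry sits in the exponent $\lambda\pi h^2\rho_2(T,\alpha)$, i.e.\ through the product $\lambda h^2$. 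Imposing $\lambda h^2=c$ therefore gives $p_c^{\{2,1\}}=e^{-\pi c\,\rho_2(T,\alpha)}/(1+\rho_2(T,\alpha))$, independent of $\lambda$, which finishes the argument.

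I do not expect a genuine obstacle here: the real computation was already absorbed into Lemma~\ref{lemma_1}, and what remains is the bookkeeping observation that, once noise is removed, every occurrence of $\lambda$ and $h$ is through the scale-invariant combination $\lambda h^d$. The only lines that need care are the convergence of $\xi$ (which uses $\alpha>1$) and of $\rho_2$ (which uses $\alpha>2$), both guaranteed by the standing hypotheses, and they justify evaluating the inner 2D integral and handling the 1D integrand pointwise. It is also worth remarking that this same bookkeeping shows \emph{why} noise must be dropped: the factor $e^{-\mu T\sigma^2(k^2/\lambda^2+h^2)^{\alpha/2}}$ in the 1D formula, respectively $e^{-\mu T\sigma^2(k/\lambda+h^2)^{\alpha/2}}$ in the 2D formula, is not invariant under $(\lambda,h)\mapsto(\lambda/s^d,\,sh)$ --- it still feels $h$ on its own --- so the density-countering condition is precisely a feature of the interference-limited regime.
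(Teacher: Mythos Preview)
Your proof is correct and follows essentially the same route as the paper: both start from Lemma~\ref{lemma_1}, set $\sigma^2=0$, and then read off that the remaining dependence on $(\lambda,h)$ is solely through $\lambda h^d$, yielding the displayed 1D and 2D expressions. Your additional remarks on convergence of $\xi$ and $\rho_2$ and on why the noise term spoils the scaling are helpful context but go beyond what the paper's proof records.
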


\begin{proof}
The proof is based on Lemma \ref{lemma_1}. By substituting $\sigma^2 \to 0$ and $\lambda h^d$ into (\ref{lemma1_pc_1}) for 1D and (\ref{lemma1_pc_2}) for 2D, we obtain the following expressions which are constant for a constant $c$ where
\begin{equation} 
p_c^{\{1,1\}} (T,\lambda, \alpha) = 2\int_0^\infty e^{-2k}  \cdot \exp \left(-2  \int_k^\infty \frac{1}{1 + T^{-1} \left(\frac{t^2 + c^2}{k^2 + c^2}\right)^{\frac{\alpha}{2}}} dt \right) dk
\end{equation}
and
\begin{equation}
p_c^{\{2,1\}} (T,\lambda, \alpha) = \frac{e^{-c \pi \rho_{2}(T,\alpha)} }{1+ \rho_{2}(T,\alpha))}.
\end{equation}

\end{proof}

The coverage probabilities for irregular 1D and 2D networks under $l_1(r)$ can be computed numerically using Lemma \ref{lemma_1}. The results permit simple expressions for noise-less condition. For the case of 2D network, we can derive a closed form expression. In case of 1D network, an expression which requires a single numerical integration can be obtained for some specific $\alpha$ values, such as $\alpha=2$. Both expressions are presented in the following proposition.

\begin{proposition}
\label{proposition_1}
The coverage probability under $l_1(r)$ path-loss model function for noise-less 1D (given $\alpha = 2$) and 2D networks can be expressed as follows
\begin{equation} 
\label{eqn_pc_2d1_first}
p_c^{\{1,1\}} (T,\lambda, 2) = 2\int_0^\infty e^{-2k}  \cdot e^{2 \frac{T((\lambda h)^2 + k^2) (\arctan \left( k^{-1} \sqrt{ T ((\lambda h)^2 + k^2) + (\lambda h)^2 }\right) )   }{ \sqrt{T ((\lambda h)^2 + k^2) + (\lambda h)^2}} }dk
\end{equation}

\begin{equation} 
\label{eqn_pc_2d2_first}
p_c^{\{2,1\}}  (T,\lambda, \alpha) =  \frac{e^{-\lambda \pi h^2 \rho_{2}(T,\alpha)} }{1+ \rho_{2}(T,\alpha))}. \end{equation}

\end{proposition}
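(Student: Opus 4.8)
The plan is to obtain both expressions by starting from Lemma~\ref{lemma_1} in the noise-less regime (equivalently, from the two displays appearing in the proof of Theorem~\ref{theorem_2}) and carrying out the single remaining integration in each dimension.

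The 2D case is essentially immediate. Setting $\sigma^2\to 0$ in (\ref{lemma1_pc_2}) removes the factor $e^{-\mu T\sigma^2(k/\lambda+h^2)^{\alpha/2}}$, leaving $p_c^{\{2,1\}}(T,\lambda,\alpha)=\pi e^{-\lambda\pi h^2\rho_2(T,\alpha)}\int_0^\infty e^{-\pi k(1+\rho_2(T,\alpha))}\,dk$. Since $\rho_2(T,\alpha)>-1$ (indeed $\rho_2>0$ from its defining integral), this last integral is a convergent elementary exponential integral equal to $\big(\pi(1+\rho_2(T,\alpha))\big)^{-1}$, and substituting it back gives (\ref{eqn_pc_2d2_first}).

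For the 1D case I would start from the noise-less form of (\ref{lemma1_pc_1}), namely $p_c^{\{1,1\}}(T,\lambda,2)=2\int_0^\infty e^{-2k}e^{-2\xi(T,2,k)}\,dk$, and evaluate $\xi(T,2,k)$ in closed form. Writing $a:=h\lambda$ and clearing the denominator inside $\xi$,
\[
\xi(T,2,k)=\int_k^\infty\frac{T(k^2+a^2)}{T(k^2+a^2)+t^2+a^2}\,dt=T(k^2+a^2)\int_k^\infty\frac{dt}{t^2+B},
\]
where $B:=T(k^2+a^2)+a^2>0$ is constant in $t$. Using the standard antiderivative $\int dt/(t^2+B)=B^{-1/2}\arctan(t/\sqrt B)$ and letting $t\to\infty$, this equals $\xi(T,2,k)=\frac{T(k^2+a^2)}{\sqrt B}\big(\frac\pi2-\arctan\frac{k}{\sqrt B}\big)$. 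Finally I would apply the identity $\frac\pi2-\arctan x=\arctan(1/x)$, valid for $x>0$, to rewrite $\frac\pi2-\arctan\frac{k}{\sqrt B}=\arctan\frac{\sqrt B}{k}=\arctan\!\big(k^{-1}\sqrt{T(k^2+a^2)+a^2}\big)$, restore $a=h\lambda$, and substitute the resulting $\xi$ back into the outer integral to reach an expression of the form (\ref{eqn_pc_2d1_first}).

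I do not expect a genuine obstacle: both statements are essentially bookkeeping layered on Lemma~\ref{lemma_1}. The only points that need care are in the 1D derivation --- checking that the change of variables $u=t/\sqrt B$ is legitimate (it is, since $B$ does not depend on the integration variable $t$), that the improper integral converges at $t\to\infty$ (it does, to $\pi/(2\sqrt B)$, the integrand decaying like $t^{-2}$), and that the arctan identity is invoked on the correct branch (both $k$ and $\sqrt B/k$ are positive for $k>0$, so no ambiguity arises). It is also worth noting \emph{why} the proposition restricts the 1D result to $\alpha=2$: this is precisely the value for which the integrand defining $\xi$ is a rational function of $t$ with a quadratic denominator, so that the arctan antiderivative applies and a single outer numerical integration suffices; for general $\alpha$ no comparable elementary evaluation of $\xi$ is available.
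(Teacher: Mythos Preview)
Your proposal is correct and follows precisely the paper's approach: the paper's own proof is the one-line remark that the result follows from Lemma~\ref{lemma_1} upon setting $\sigma^2\to 0$, and you have simply filled in the two integrations (the elementary exponential integral in 2D and the arctan antiderivative for $\xi$ in 1D) that the paper leaves implicit. Note that your computation correctly produces the exponent $-2\xi<0$, matching the form $\arctan(k/\sqrt B)-\pi/2$ that reappears in Section~\ref{sec:aara}; the positive sign in the exponent of (\ref{eqn_pc_2d1_first}) as displayed is a typo in the statement, not a defect in your argument.
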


\begin{proof}
By assuming $\sigma^2 \to 0$, the proof follows directly from Lemma \ref{lemma_1}.\end{proof}

Notice that formulas for $l_1(r)$ reduce to the coverage probability for $l_0(r)=r^{-\alpha}$ in \cite{andrews2011tractable}, when $h = 0$. For completeness, the formula for noise-less 1D and 2D networks under $l_0(r)=r^{-\alpha}$ is presented below
\begin{equation} p_c^{\{d,0\}}  (T,\lambda, \alpha) =  \frac{1}{1+ \rho_{d}(T,\alpha))}. \end{equation}

\subsection{Regular 1D network}
\label{pc_regular_1D}
We consider a network where the locations of BSs are generated using a deterministic points process $\Phi^{LIN}$. We study the coverage probability of a user located at a particular point in this network. Without loss of generality we assume that this user is located in the origin and thus the BS locations form the following point process on $\mathbb{R}$ 
\begin{equation}\hat{\Phi}^{LIN}_r = \{x - y\,|\, x \in \Phi^{LIN}\},\end{equation}
where $r = \|y\|$ is a distance between a user and its serving (closest) BS at point $y$.

We now formulate two lemmas for the coverage probability in 1D regular network under the path-loss models $l_0(r)$ and $l_1(r)$.

\begin{lemma}
\label{lemma_2}
The coverage probability for 1D regular network under $l_0(r)$ path-loss model function can be expressed as
\begin{equation}
p_c^{\{1,0\}} (T,\lambda, \alpha) = \frac{2}{\Upsilon} \int_0^{\frac{\Upsilon}{2}} e^{-\mu T \sigma^2 r^{\alpha}} \mathcal{L}_{I_r}\left(\mu T r^{\alpha}\right) dr,
\end{equation}
where $\Upsilon = \frac{1}{\lambda}$, and 
\begin{equation} \mathcal{L}_{I_{r}}\left(s\right) = 2\prod_{\substack{i=1}}^{\infty}\left(\frac{1}{1 + (r + i\, \Upsilon)^{-\alpha} \frac{s}{\mu}} \right). \end{equation}

\end{lemma}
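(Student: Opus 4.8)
The plan is to follow the same template as Theorem~1 of \cite{andrews2011tractable} (recorded here as Definition~\ref{definition_1}), with two substitutions dictated by the regular geometry: the Rayleigh serving-distance law of the PPP case is replaced by the uniform serving-distance law of the lattice, and the probability generating functional of a PPP is replaced by an ordinary product over a deterministic set of interferers.

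First I would fix the serving distance. Since $\Phi^{LIN}$ is the lattice $\{i\Upsilon:i\in\mathbb{Z}\}$ and the user is dropped uniformly, the lattice offset seen by the user is uniform over an interval of length $\Upsilon$, so the distance $r$ to the closest BS is uniform on $[0,\Upsilon/2]$; this is exactly the density $f_r(r)=2/\Upsilon$ on $r\le\Upsilon/2$ already quoted in the excerpt, with $\Upsilon=1/\lambda$. Conditioning on $r$ gives $p_c^{\{1,0\}}(T,\lambda,\alpha)=\int_0^{\Upsilon/2}\frac{2}{\Upsilon}\,\mathbb{P}(SINR>T\mid r)\,dr$. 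Next, using $SINR=g\,r^{-\alpha}/(I_r+\sigma^2)$ and $g\sim\exp(\mu)$, the event $\{SINR>T\}$ conditioned on $r$ and $I_r$ has probability $e^{-\mu T (I_r+\sigma^2) r^{\alpha}}$; taking the expectation over $I_r$ yields
\[
\mathbb{P}(SINR>T\mid r)=e^{-\mu T\sigma^2 r^{\alpha}}\,\mathbb{E}\!\left[e^{-\mu T r^{\alpha} I_r}\right]=e^{-\mu T\sigma^2 r^{\alpha}}\,\mathcal{L}_{I_r}\!\left(\mu T r^{\alpha}\right),
\]
which, substituted above, is the first displayed equation of the lemma.

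It remains to evaluate $\mathcal{L}_{I_r}(s)$, and this is the only step where the deterministic structure is used. Given that the serving BS is at distance $r$ (and $r\le\Upsilon/2\le\Upsilon-r$, so it really is the nearest), the interfering BSs occupy the fixed distances $r+i\Upsilon$ and $i\Upsilon-r$, $i\ge1$, on the two sides of the user. Because the fading variables $g_x$ are i.i.d.\ $\exp(\mu)$ and independent of these non-random positions,
\[
\mathcal{L}_{I_r}(s)=\prod_{x}\mathbb{E}\!\left[e^{-s g_x\|x\|^{-\alpha}}\right]=\prod_{x}\frac{\mu}{\mu+s\|x\|^{-\alpha}}=\prod_{x}\frac{1}{1+(s/\mu)\|x\|^{-\alpha}},
\]
and grouping the contributions from the two sides of the user gives the product form claimed in the lemma. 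The interchange of expectation and infinite product is legitimate because the partial products lie in $(0,1]$ and decrease, and the product converges since $\|x\|\sim i\Upsilon$ makes $\sum_i\|x\|^{-\alpha}$ finite precisely when $\alpha>1$ --- which is exactly the standing hypothesis in the 1D case.

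I do not expect a serious obstacle: the derivation is routine once the geometry is set up. The only places that need care are (i) the bookkeeping that identifies the interferer distances of the shifted lattice $\hat\Phi^{LIN}_r$ as $\{i\Upsilon\pm r:i\ge1\}$, including the check that the BS at distance $\Upsilon-r$ is not closer than the serving BS, and (ii) the convergence of, and interchange with, the infinite product, which is the technical reason the condition $\alpha>1$ must be imposed. The boundary value $r=\Upsilon/2$ has measure zero and can be ignored.
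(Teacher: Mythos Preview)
Your proposal is correct and follows essentially the same approach as the paper's own proof: both instantiate Definition~\ref{definition_1} with the uniform serving-distance law $f_r(r)=2/\Upsilon$, use the exponential tail of $g$ to separate the noise factor from $\mathcal{L}_{I_r}$, and then factor $\mathcal{L}_{I_r}$ as a product over the deterministic shifted-lattice interferer locations using i.i.d.\ exponential fading. Your write-up is in fact slightly more careful than the paper's, as you explicitly justify the interchange with the infinite product via the $\alpha>1$ hypothesis and check the nearest-BS condition at $r\le\Upsilon/2$; the paper omits these remarks.
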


\begin{proof}
The proof of Lemma \ref{lemma_2} is based on the proof of Theorem 1 provided in \cite{andrews2011tractable}. As the desired signal is assumed to be exponentially distributed, by substituting $f_r(r) = \frac{2}{\Upsilon}$ in (\ref{eqn_pc_generic}), the coverage probability in 1D regular network under a generic path-loss model function can be expressed as
\begin{equation} 
\label{lemma_2_pc}
p_c^{\{1,i\}} (T,\lambda, \alpha) = \frac{2}{\Upsilon} \int_0^{\frac{\Upsilon}{2}} e^{-\frac{\mu T \sigma^2}{l(r)}} \mathcal{L}_{I_r}\left(\frac{\mu T}{l(r)}\right) dr. 
\end{equation}

Using the definition of the Laplace transform we can show that
\begin{multline} \mathcal{L}_{I_{r}}\left(s\right) = \mathbb{E}_{I_r}[e^{-sI_r}] = \mathbb{E}_{\{g_x\}}\left[\exp(-s\sum_{x \in \hat{\Phi}^{LIN}_r \setminus \{b_o\}} g_x l(\|x\|))\right]  \overset{(a)}{=} \prod_{x \in \hat{\Phi}^{LIN}_r \setminus \{b_o\}} \mathbb{E}_g [\exp (-sg\, l(\|x\|))] \\ \overset{(b)}{=} \prod_{x \in \hat{\Phi}^{LIN}_r \setminus \{b_o\}} \frac{\frac{\mu}{l(\|x\|)}}{\frac{\mu}{l(\|x\|)} + s}\end{multline} 
 
where (a) follows from the i.i.d distribution of $g_x$ and its independence from the point process $\hat{\Phi}^{LIN}_r$ and (b) from the Laplace transform of an exponential random variable with mean $\frac{\mu}{l(\|x\|)}$. Using the above result, and taking into account that locations of BSs in 1D regular network follow ${\Phi}^{LIN}$, the Laplace transform can be further expressed as 
\begin{equation} \mathcal{L}_{I_{r}}\left(s\right) = \prod_{\substack{i=-\infty \\ i \neq 0}}^{\infty}\left(\frac{1}{1 + l(\|r + i\, \Upsilon\|) \frac{s}{\mu}} \right), \end{equation} where $\|r + i\,\Upsilon\|$ is the distance from the typical user to the \textit{i-th} base station.

By substituting $l(r) = l_0(r)$ into (\ref{lemma_2_pc}) and removing the absolute value expression from the product in the expression for the Laplace transform presented above, we immediately obtain the final result.
\end{proof}

\begin{lemma}
\label{lemma_3}
The coverage probability for 1D regular network under $l_1(r)$ path-loss model function can be expressed as
\begin{equation}
   p_c^{\{1,1\}} (T,\lambda, \alpha) = \frac{2}{\Upsilon} \int_0^{\frac{\Upsilon}{2}} e^{-\mu T \sigma^2 (r^2 + h^2)^{\alpha/2}} 
\cdot \mathcal{L}_{I_r}\left(\mu T (r^2 + h^2)^{\alpha/2}\right) dr,
\end{equation}

where $\Upsilon = \frac{1}{\lambda}$, and
\begin{equation} \mathcal{L}_{I_{r}}\left(s\right) = 2\prod_{\substack{i=1}}^{\infty}\left(\frac{1}{1 + ((r + i\, \Upsilon)^2 + h^2)^{-\alpha/2} \frac{s}{\mu}} \right). \end{equation}

\end{lemma}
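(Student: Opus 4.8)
The plan is to follow the proof of Lemma~\ref{lemma_2} essentially verbatim, the only substantive change being that $l_0(r)=r^{-\alpha}$ is everywhere replaced by $l_1(r)=(h^2+r^2)^{-\alpha/2}$. First I would start from the generic downlink coverage expression~(\ref{eqn_pc_generic}), insert the serving-distance density $f_r(r)=2/\Upsilon$ valid for $0\le r\le\Upsilon/2$ (with $\Upsilon=1/\lambda$) together with $l(r)=l_1(r)$, and read off
\[
p_c^{\{1,1\}}(T,\lambda,\alpha)=\frac{2}{\Upsilon}\int_0^{\Upsilon/2} e^{-\mu T\sigma^2(r^2+h^2)^{\alpha/2}}\,\mathcal{L}_{I_r}\!\left(\mu T(r^2+h^2)^{\alpha/2}\right)dr .
\]

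Next I would compute $\mathcal{L}_{I_r}(s)$ by the same chain of equalities used in Lemma~\ref{lemma_2}: conditioning on the deterministic point process $\hat{\Phi}^{LIN}_r$ and using that the fading variables $\{g_x\}$ are i.i.d.\ $\exp(\mu)$ and independent of the geometry, the expectation $\mathbb{E}[e^{-sI_r}]$ factorises as $\prod_{x\in\hat{\Phi}^{LIN}_r\setminus\{b_o\}}\mathbb{E}_g[e^{-sg\,l(\|x\|)}]=\prod_x \tfrac{\mu/l(\|x\|)}{\mu/l(\|x\|)+s}$, each factor being the Laplace transform of an exponential variable with mean $\mu/l(\|x\|)$. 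Putting $l=l_1$, the interfering BS at signed offset $i\Upsilon$ from the serving BS contributes the factor $1/\bigl(1+((r+i\Upsilon)^2+h^2)^{-\alpha/2}\,s/\mu\bigr)$; folding the product over $i\in\mathbb{Z}\setminus\{0\}$ into $2\prod_{i=1}^{\infty}$ and dropping the absolute values exactly as in Lemma~\ref{lemma_2} then yields the stated formula.

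I do not expect a genuine obstacle here, since the argument is a direct transcription. The only point worth a sentence is convergence of the infinite product: because $\alpha>1$ each factor is $1-O(i^{-\alpha})$ as $i\to\infty$, so the product converges absolutely, and replacing $r^{-\alpha}$ by $(r^2+h^2)^{-\alpha/2}$ only makes the path-loss bounded near the origin, which cannot worsen convergence relative to the $l_0$ case already treated in Lemma~\ref{lemma_2}. If anything needs care it is merely the bookkeeping of the left/right offsets when collapsing the two-sided product into a one-sided one, which is handled exactly as in the preceding lemma.
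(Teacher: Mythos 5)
Your proposal matches the paper's proof exactly: the paper proves Lemma~\ref{lemma_3} by stating that it "follows that of Lemma~\ref{lemma_2}," i.e., by repeating the same substitution of $f_r(r)=2/\Upsilon$ into~(\ref{eqn_pc_generic}) and the same factorisation of $\mathcal{L}_{I_r}(s)$ over the deterministic lattice, with $l_0$ replaced by $l_1$. Your added remark on the convergence of the infinite product is a harmless (and correct) extra observation not present in the paper.
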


\begin{proof}
The proof follows that of Lemma \ref{lemma_2}.
\end{proof}

It is worth noting here that setting $h=0$ in $l_1(r)$ reduces $p_c^{\{1,1\}}(T, \lambda, \alpha)$ to $p_c^{\{1,0\}}(T, \lambda, \alpha)$.

Based on Lemma \ref{lemma_2} and Lemma \ref{lemma_3}, we have the following theorems (i.e. Theorem \ref{theorem_3} - \ref{theorem_5}).

\begin{theorem}
\label{theorem_3}
The coverage probability of noise-less 1D regular network under the standard path-loss model function $l_0(r)$ does not depend on BS density $\lambda$.
\end{theorem}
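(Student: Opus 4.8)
The plan is to start from the integral representation of $p_c^{\{1,0\}}$ given in Lemma~\ref{lemma_2}, impose the noise-less condition $\sigma^2 \to 0$, and then exploit the scale invariance of the pure power-law path-loss $l_0(r) = r^{-\alpha}$ through the rescaling $v = r/\Upsilon = \lambda r$. Under this substitution every occurrence of the inter-site distance $\Upsilon = 1/\lambda$, which is the only place $\lambda$ enters, should cancel, leaving an expression depending on $T$ and $\alpha$ alone.

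Concretely, with $\sigma^2 = 0$ the exponential noise factor becomes $1$, and writing $s/\mu = T r^{\alpha}$ in the product of Lemma~\ref{lemma_2} gives
\[
p_c^{\{1,0\}}(T,\lambda,\alpha) = \frac{2}{\Upsilon}\int_0^{\Upsilon/2} 2\prod_{i=1}^{\infty}\frac{1}{1 + T\left(\frac{r}{\,r+i\Upsilon\,}\right)^{\alpha}}\, dr .
\]
The key observation is that each interference factor depends on $r$ and $\Upsilon$ only through the dimensionless ratio $r/\Upsilon$, since $\frac{r}{r+i\Upsilon} = \frac{r/\Upsilon}{\,r/\Upsilon + i\,}$. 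I would then substitute $v = r/\Upsilon$, so that $dr = \Upsilon\,dv$ and the limits $0$ and $\Upsilon/2$ become $0$ and $1/2$; the prefactor $2/\Upsilon$ multiplied by $\Upsilon\,dv$ yields $2\,dv$ with no residual $\Upsilon$, and the integrand loses all dependence on $\Upsilon$ as well. This produces
\[
p_c^{\{1,0\}}(T,\lambda,\alpha) = 4\int_0^{1/2}\ \prod_{i=1}^{\infty}\frac{1}{1 + T\left(\frac{v}{\,v+i\,}\right)^{\alpha}}\, dv ,
\]
which is manifestly a function of $T$ and $\alpha$ only, establishing the claim.

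The only step needing a little care is the legitimacy of the manipulations: one should check that the infinite product converges and that the change of variables is valid. For $\alpha > 1$ the $i$-th factor equals $1 - T v^{\alpha} i^{-\alpha} + O(i^{-2\alpha})$ uniformly for $v \in [0,1/2]$, so the product converges absolutely and uniformly on this compact interval and defines a continuous, bounded integrand; the substitution $v = r/\Upsilon$ is then a routine affine change of variables. I do not anticipate any genuine obstacle — the entire content of the theorem is the scale-invariance of $r^{-\alpha}$ under the densification rescaling $\Upsilon \mapsto 1/\lambda$, which makes the coverage probability constant in $\lambda$.
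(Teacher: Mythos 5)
Your proof is correct and follows essentially the same route as the paper: starting from Lemma~\ref{lemma_2} with $\sigma^2 \to 0$, rescaling by $t = r/\Upsilon$ so that the pure power law makes every factor depend only on the dimensionless ratio $r/\Upsilon$, and arriving at the same $\lambda$-free expression $4\int_0^{1/2}\prod_{i=1}^{\infty}\bigl(1 + T\,t^{\alpha}/(t+i)^{\alpha}\bigr)^{-1}dt$. Your added remark on the uniform convergence of the infinite product for $\alpha>1$ is a small but welcome extra justification that the paper omits.
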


\begin{proof}
The proof is based on Lemma \ref{lemma_2}. By substituting $t = \frac{r}{\Upsilon}$ and assuming $\sigma^2 \to 0$ we obtain
\begin{equation} 
p_c^{\{1,0\}} (T,\lambda, \alpha) = 4 \int_0^{\frac{1}{2}} \prod_{\substack{i=1}}^{\infty}\left(\frac{1}{1 + T\frac{t^{\alpha}}{(t + i)^{\alpha}}} \right) dt.
\end{equation}

From the above formula it can be easily seen that the probability of coverage is independent of $\lambda$, thus concluding the proof.
\end{proof}

\begin{theorem}
\label{theorem_4} 
The coverage probability of noise-less 1D regular networks under path-loss model $l_1(r)$ tends to 0 as BS density $\lambda \to \infty$, that is

\begin{equation} \lim_{\lambda \to \infty} p_c^{\{1,1\}} (T,\lambda, \alpha) =  0. \end{equation}
\end{theorem}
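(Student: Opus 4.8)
The plan is to reduce the statement to a limit of an integral over a fixed interval and then kill an infinite product by a truncation argument, in close analogy with the proof of Theorem~\ref{theorem_3}. First I would specialize Lemma~\ref{lemma_3} to the noise-less case $\sigma^2\to 0$, so the exponential noise factor drops out, and then apply exactly the change of variables $t=r/\Upsilon$ used in the proof of Theorem~\ref{theorem_3}. Using $\Upsilon=1/\lambda$, this turns the coverage probability into
\begin{equation*}
p_c^{\{1,1\}}(T,\lambda,\alpha) \;=\; 4\int_0^{1/2} \prod_{i=1}^{\infty}\frac{1}{1 + T\left(\frac{t^2 + (h\lambda)^2}{(t+i)^2 + (h\lambda)^2}\right)^{\alpha/2}}\, dt ,
\end{equation*}
so that the entire $\lambda$-dependence is concentrated in the ratios $\bigl(t^2+(h\lambda)^2\bigr)/\bigl((t+i)^2+(h\lambda)^2\bigr)$ inside the infinite product. (Here I take $h>0$; for $h=0$ the claim is false, since then $l_1=l_0$ and Theorem~\ref{theorem_3} applies.)

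The key observation is that, unlike the $l_0(r)$ case of Theorem~\ref{theorem_3} where the $i$-th factor is the $\lambda$-free quantity $\bigl(1+T(t/(t+i))^\alpha\bigr)^{-1}$, here each factor tends to $1/(1+T)<1$ as $\lambda\to\infty$, because $(h\lambda)^2\to\infty$ swamps the bounded quantities $t^2$ and $(t+i)^2$. An infinite product of factors all approaching a value strictly below $1$ ought to vanish, but this requires care because the product is itself a limit. I would make it rigorous by truncation: fix $N\in\mathbb{N}$; then for every $t\in[0,1/2]$ and every $1\le i\le N$,
\begin{equation*}
\frac{t^2+(h\lambda)^2}{(t+i)^2+(h\lambda)^2}\;\ge\;\frac{(h\lambda)^2}{(N+1)^2+(h\lambda)^2}\;\ge\;\frac12 \quad\text{whenever } \lambda\ge (N+1)/h ,
\end{equation*}
so each of the first $N$ factors is at most $q:=\bigl(1+T\,2^{-\alpha/2}\bigr)^{-1}<1$; since the remaining factors are all $\le 1$, the whole product is $\le q^{N}$, uniformly in $t\in[0,1/2]$, for every such $\lambda$.

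Feeding this bound into the integral gives $p_c^{\{1,1\}}(T,\lambda,\alpha)\le 4\cdot\tfrac12\cdot q^{N}=2q^{N}$ for all $\lambda\ge (N+1)/h$, hence $\limsup_{\lambda\to\infty}p_c^{\{1,1\}}(T,\lambda,\alpha)\le 2q^{N}$; since this holds for every $N$ and $q<1$, the $\limsup$ equals $0$, and non-negativity of $p_c$ finishes the proof. (Equivalently, the same truncation bound shows the integrand converges to $0$ pointwise on $[0,1/2]$, and one may then interchange limit and integral by dominated convergence with the integrable dominating constant $4$.)

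The main obstacle is exactly the interchange-of-limits hidden in the phrase ``an infinite product of factors each converging to $1/(1+T)$ converges to $0$'': one may not pass $\lambda\to\infty$ termwise and write $(1/(1+T))^{\infty}$, since the product is a limit over $i$ while we are also taking a limit over $\lambda$. The truncation step is what disentangles these two limits, replacing the infinite product by a finite product $q^{N}$ that is controlled uniformly in $t$ before any $\lambda$-limit is taken. Everything else --- the change of variables, verifying $q<1$, and sending $N\to\infty$ at the end --- is routine.
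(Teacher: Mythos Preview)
Your proof is correct and follows essentially the same route as the paper: specialize Lemma~\ref{lemma_3}, apply the change of variables $t=r/\Upsilon$ exactly as in the proof of Theorem~\ref{theorem_3}, and then argue that the infinite product collapses because every factor tends to $1/(1+T)$. The only difference is in how the final limit is justified: the paper invokes the Monotone Convergence Theorem to push $\lambda\to\infty$ through the integral and the product in one stroke (legitimate here, since each factor---and hence the integrand---is monotone decreasing in $\lambda$), whereas your truncation bound $p_c\le 2q^N$ makes the double-limit interchange explicit and is arguably the cleaner justification.
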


\begin{proof}
The proof is based on Lemma \ref{lemma_3}. By substituting $t = \frac{r}{\Upsilon}$, and given that $\Upsilon = \frac{1}{\lambda}$ we then obtain
\begin{equation}
\label{pc_1_reg}
p_c^{\{1,1\}} (T,\lambda, \alpha) = 4 \int_0^{\frac{1}{2}} e^{-\mu T \sigma^2 ((\frac{t}{\lambda})^2 + h^2)^{\alpha/2}} \cdot \prod_{\substack{i=1}}^{\infty}\left(\frac{1}{1 + T\left(\frac{t^2 + (h\lambda)^2}{(t + i)^2 + (h\lambda)^2}\right)^{\alpha/2}} \right) dt
\end{equation}

By taking $\lambda \to \infty$ and applying the Monotone Convergence Theorem to above formula, we further get 
\begin{equation} 
\lim_{\lambda \to \infty} p_c^{\{1,1\}} (T,\lambda, \alpha) = 4 \int_0^{\frac{1}{2}} \lim_{\lambda \to \infty} e^{-\mu T \sigma^2 ((\frac{t}{\lambda})^2 + h^2)^{\alpha/2}}  \cdot \prod_{\substack{i=1}}^{\infty} \lim_{\lambda \to \infty} \left(\frac{1}{1 + T\left(\frac{t^2 + (h\lambda)^2}{(t + i)^2 + (h\lambda)^2}\right)^{\alpha/2}} \right) dt.
\end{equation}

Simplifying the above expression yields 
\begin{equation}
\label{interm_t4_proof} 
\lim_{\lambda \to \infty} p_c^{\{1,1\}} (T,\lambda, \alpha) = 4 \int_0^{\frac{1}{2}} e^{-\mu T \sigma^2 h^{\alpha}} \prod_{\substack{i=1}}^{\infty} \left(\frac{1}{1 + T} \right) dt.
\end{equation}

Next, given that $T > 0$, 
\begin{equation} \prod_{\substack{i=1}}^{\infty} \frac{1}{1 + T} = 0, \end{equation} 
which brings the expression in (\ref{interm_t4_proof}) to zero.

\end{proof}

Theorems \ref{theorem_3} and \ref{theorem_4} show that similar to 1D irregular network, the 1D regular network does not exhibit SINR invariance property for $l_1(r)$ path-loss model. However, it does exhibit SINR invariance property (similar to noise-less irregular networks) for $l_0(r)$ path-loss model.

In the previous subsection, we showed the density countering condition for irregular 1D and 2D networks under $l_1(r)$. As shown in the following theorem, this condition also applies to 1D regular network.

\begin{theorem}
\label{theorem_5}
The coverage probability of noise-less 1D regular networks under path-loss model $l_1(r)$ is constant when $\lambda\, h = c$ and $c$ is constant.
\end{theorem}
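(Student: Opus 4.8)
The plan is to start directly from Lemma~\ref{lemma_3}, which gives the coverage probability of the 1D regular network under $l_1(r)$ as an integral over $r \in [0,\Upsilon/2]$ of $e^{-\mu T\sigma^2 (r^2+h^2)^{\alpha/2}}$ times the Laplace transform $\mathcal{L}_{I_r}$, with $\Upsilon = 1/\lambda$. The first step is to impose the noise-less condition $\sigma^2 \to 0$, which removes the exponential prefactor entirely, leaving only the infinite product coming from $\mathcal{L}_{I_r}$.

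Next I would perform the change of variables $t = r/\Upsilon = \lambda r$, exactly as in the proof of Theorem~\ref{theorem_4}. This maps the integration range $[0,\Upsilon/2]$ to $[0,1/2]$, contributes a Jacobian factor of $\Upsilon$ that cancels against the $2/\Upsilon$ out front (and combines with the factor $2$ in $\mathcal{L}_{I_r}$ to give the constant $4$), and—crucially—turns each factor of the product into
\begin{equation*}
\frac{1}{1 + T\left(\dfrac{t^2 + (h\lambda)^2}{(t+i)^2 + (h\lambda)^2}\right)^{\alpha/2}},
\end{equation*}
so that $\lambda$ enters the expression only through the combination $h\lambda$. This is precisely the structure already displayed in equation~(\ref{pc_1_reg}) with $\sigma^2=0$.

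The final step is to substitute $h\lambda = c$. The coverage probability then becomes
\begin{equation*}
p_c^{\{1,1\}}(T,\lambda,\alpha) = 4\int_0^{1/2} \prod_{i=1}^{\infty}\frac{1}{1 + T\left(\dfrac{t^2 + c^2}{(t+i)^2 + c^2}\right)^{\alpha/2}}\, dt,
\end{equation*}
which depends on $T$, $\alpha$ and the fixed constant $c$ only, and not on $\lambda$ separately. Hence along any trajectory with $\lambda h = c$ the coverage probability is invariant, establishing the density countering condition for the 1D regular network.

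There is essentially no hard analytic obstacle here: once the substitution $t=\lambda r$ is made, the claim is a matter of inspection, mirroring Theorem~\ref{theorem_2} for the irregular case. The only point worth a line of care is that the infinite product is well defined (it converges to a positive value, since $\sum_i (t+i)^{-\alpha}<\infty$ for $\alpha>1$), so the interchange of the limiting/substitution operations and the integral is harmless; but this is a routine consequence of monotone/dominated convergence already invoked in the preceding proofs and does not require new work.
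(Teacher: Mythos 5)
Your proposal is correct and follows essentially the same route as the paper: set $\sigma^2 \to 0$, apply the substitution $t = \lambda r$ to equation~(\ref{pc_1_reg}) as in the proof of Theorem~\ref{theorem_4}, and observe that $\lambda$ and $h$ appear only through the combination $\lambda h = c$, yielding an expression depending only on $T$, $\alpha$ and $c$. The added remark on convergence of the infinite product is a harmless bonus not present in the paper.
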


\begin{proof}
The proof is based on the proof of Theorem \ref{theorem_4}. By substituting $\sigma^2 = 0$ and $\lambda h = c$ in (\ref{pc_1_reg}) we obtain the following expression which is constant for a constant $c$
\begin{equation}
p_c^{\{1,1\}} (T,\lambda, \alpha) = 4 \int_0^{\frac{1}{2}} \prod_{\substack{i=1}}^{\infty}\left(\frac{1}{1 + T\left(\frac{t^2 + c^2}{(t + i)^2 + c^2}\right)^{\alpha/2}} \right) dt.
\end{equation}
\end{proof}

Lemma \ref{lemma_2} allows us to numerically calculate coverage probability for arbitrary $\alpha > 1$ and a generic path-loss model function $l(r)$. However, by considering some integer $\alpha$ values we can obtain a simpler expression which allow us to gain additional insight. In the following propositions, we derive coverage probability expressions for $\alpha = 2$ for the considered path-loss model functions.

\begin{proposition}
\label{proposition_1a}
The coverage probability for 1D regular network using the standard path-loss model function $l_0(r) = r^{-\alpha}$, when $\alpha = 2$ is
\begin{equation} 
\label{theorem_3_eq} 
p_c^{\{1,0\}} (T, \lambda, 2) =  \frac{(1+T)}{\pi} \int_0^{\pi} e^{-\frac{\mu T \sigma^2 x^2}{(2\pi\lambda)^2}} \cdot \frac{\cos(x)-1}{\cos(x)-\cosh(x\sqrt{T})} dx. \end{equation}
\end{proposition}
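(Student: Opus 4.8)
The plan is to specialize Lemma~\ref{lemma_2} to $\alpha=2$, evaluate the resulting infinite product in closed form by recognizing it as a ratio of Weierstrass sine products, and then recover the stated integral through one change of variables.

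\textbf{Step 1 (normalize the integrand).} Put $\alpha=2$ in Lemma~\ref{lemma_2}, so that $p_c^{\{1,0\}}(T,\lambda,2)=\tfrac{2}{\Upsilon}\int_0^{\Upsilon/2}e^{-\mu T\sigma^2 r^2}\,\mathcal{L}_{I_r}(\mu T r^2)\,dr$ with $\Upsilon=1/\lambda$, where the interfering base stations lie at distances $i\Upsilon-r$ and $i\Upsilon+r$, $i\ge 1$, so that the interference Laplace transform is the infinite product of the factors $\bigl(1+T r^2/(i\Upsilon-r)^2\bigr)^{-1}\bigl(1+T r^2/(i\Upsilon+r)^2\bigr)^{-1}$. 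Introduce $\rho:=r/\Upsilon=\lambda r$ and $x:=2\pi\rho=2\pi\lambda r$; then $r\in[0,\Upsilon/2]$ corresponds to $x\in[0,\pi]$, $dr=dx/(2\pi\lambda)$, $\tfrac{2}{\Upsilon}=2\lambda$, and $\mu T\sigma^2 r^2=\mu T\sigma^2 x^2/(2\pi\lambda)^2$. After dividing numerator and denominator of every factor by $\Upsilon^2$, the coverage probability becomes $\tfrac{1}{\pi}\int_0^\pi e^{-\mu T\sigma^2 x^2/(2\pi\lambda)^2}P(\rho)\,dx$ with
\[
P(\rho)=\prod_{i=1}^\infty\frac{(i-\rho)^2(i+\rho)^2}{[(i-\rho)^2+T\rho^2]\,[(i+\rho)^2+T\rho^2]}.
\]

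\textbf{Step 2 (evaluate the product --- the crux).} Completing the square gives $(i\mp\rho)^2+T\rho^2=i^2\mp 2i\rho+(1+T)\rho^2=(i\mp b)(i\mp\bar b)$, where $b=\rho(1+j\sqrt T)$ with $j=\sqrt{-1}$, so $b+\bar b=2\rho$ and $b\bar b=(1+T)\rho^2$. Hence $P(\rho)=\prod_{i\ge1}(i^2-\rho^2)^2/[(i^2-b^2)(i^2-\bar b^2)]$, and each of the three sub-products converges since its general term is $1+O(i^{-2})$, so the regrouping is legitimate. Applying $\sin(\pi z)=\pi z\prod_{n\ge1}(1-z^2/n^2)$ at $z=\rho,b,\bar b$ and using $\pi^2 b\bar b=\pi^2(1+T)\rho^2$ collapses this to
\[
P(\rho)=\frac{(1+T)\sin^2(\pi\rho)}{\sin(\pi b)\,\sin(\pi\bar b)}.
\]
Now $\sin(\pi b)\sin(\pi\bar b)=\sin(\pi\rho+j\pi\rho\sqrt T)\sin(\pi\rho-j\pi\rho\sqrt T)=\sin^2(\pi\rho)-\sin^2(j\pi\rho\sqrt T)=\sin^2(\pi\rho)+\sinh^2(\pi\rho\sqrt T)$, so rewriting numerator and denominator with $\sin^2\theta=\tfrac{1-\cos2\theta}{2}$, $\sinh^2\theta=\tfrac{\cosh2\theta-1}{2}$ and recalling $x=2\pi\rho$ yields $P(\rho)=(1+T)\,\dfrac{\cos x-1}{\cos x-\cosh(x\sqrt T)}$.

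\textbf{Step 3 (assemble).} Substituting this back into $\tfrac{1}{\pi}\int_0^\pi e^{-\mu T\sigma^2 x^2/(2\pi\lambda)^2}P(\rho)\,dx$ and pulling the constant $(1+T)$ out of the integral reproduces $(\ref{theorem_3_eq})$ exactly.

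The main obstacle is Step~2: one must notice that completing the square introduces a complex-conjugate pair of zeros, so the single real product factors into two sine products at conjugate complex arguments, and then reduce $\sin$ of a complex argument to a purely real combination of $\cos$ and $\cosh$; the $1+O(i^{-2})$ estimate that licenses regrouping the Weierstrass factors should also be stated for rigour. Everything else --- the $\alpha=2$ specialization, the change of variables chosen so the noise term lands on $x^2/(2\pi\lambda)^2$, and the final half-angle rewriting --- is routine.
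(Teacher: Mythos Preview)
Your proof is correct and rests on the same underlying tool as the paper's --- the Euler/Weierstrass product for $\sin(\pi z)$ --- but the packaging differs. The paper invokes a ready-made two-parameter identity
\[
\prod_{k=-\infty}^{\infty}\Bigl(1+\frac{z}{(k+a)^2+b^2}\Bigr)=\frac{\cos(2\pi a)-\cos(2\pi\sqrt{-b^2-z})}{\cos(2\pi a)-\cosh(2\pi b)},
\]
sets $b=0$, and then divides out the $k=0$ factor; this is quicker once the identity is granted and, with $b=h/\Upsilon$ instead, immediately yields the companion result for the $l_1$ model (Proposition~\ref{proposition_2a}). Your route --- factoring $(i\mp\rho)^2+T\rho^2$ into complex-conjugate linear factors, applying the sine product at $\rho,b,\bar b$, and reducing $\sin(\pi b)\sin(\pi\bar b)$ to $\sin^2(\pi\rho)+\sinh^2(\pi\rho\sqrt T)$ --- is more self-contained and makes the origin of the $(1+T)$ prefactor (namely $b\bar b/\rho^2$) transparent, at the cost of not extending quite as automatically to the $h>0$ case.
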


\begin{proof}
Using Lemma \ref{lemma_2}, we start by expressing the Laplace transform of $I_{r}$ as
\begin{equation} \mathcal{L}_{I_{r}}\left(s\right) = \prod_{\substack{i=-\infty \\ i \neq 0}}^{\infty}\left(\ \frac{1}{ 1 + \frac{\frac{s}{\mu \Upsilon^2}}{(\frac{r}{\Upsilon} + i)^2} }\right).\end{equation} 

By using the following expression for an infinite product
\begin{equation} 
\label{inf_product}
\prod_{k=-\infty}^{\infty}\left( 1+ \frac{z}{(k+a)^2+ b^2 }\right) = \frac{\cos(2\pi a) - \cos(2\pi\sqrt{-b^2-z})}{\cos(2\pi a) - \cosh(2\pi b)},
\end{equation} 
and with the setting of $b=0$, $z = \frac{-s}{\mu \Upsilon^2}$, $a=\frac{r}{\Upsilon}$ and the exclusion of $k=0$,we obtain
\begin{equation} \mathcal{L}_{I_{r}}\left(s\right) = \frac{(r^2 + \frac{s}{\mu})(\cos(2\pi\frac{r}{\Upsilon})-1)}{r^2(\cos(2\pi\frac{r}{\Upsilon})-\cosh(2\pi\frac{r}{\Upsilon}\sqrt{T}))}. \end{equation} 

Plugging in $s=\mu T r^2$ and substituting $x=2\pi\lambda r$ to the above result concludes the proof.
\end{proof}

\begin{proposition} 
\label{proposition_2a}
The coverage probability for 1D regular network using the path-loss model function $l_1(r)$, when $\alpha = 2$ can be expressed as
\begin{multline} p_c^{\{1,1\}} (T, \lambda, 2) =  \frac{1+T}{\pi} \int_0^\pi e^{-\frac{\mu T \sigma^2 x^2}{(2\pi\lambda)^2}} \cdot \frac{\cos(x)-\cosh(2\pi\lambda h)}{\cos(x)-\cosh(\sqrt{(2\pi\lambda h)^2 + T (x^2 + (2\pi\lambda h)^2)})} dx. \end{multline}
\end{proposition}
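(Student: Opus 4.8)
The plan is to reproduce the proof of Proposition \ref{proposition_1a} step for step, the only genuinely new feature being the additive constant \(h^{2}\) sitting inside every denominator of the interference Laplace transform. I would start from Lemma \ref{lemma_3} with \(\alpha=2\), so that \(p_c^{\{1,1\}}(T,\lambda,2)=\frac{2}{\Upsilon}\int_0^{\Upsilon/2} e^{-\mu T\sigma^{2}(r^{2}+h^{2})}\,\mathcal{L}_{I_r}\!\big(\mu T(r^{2}+h^{2})\big)\,dr\), and write the interference Laplace transform in its bilateral form (as in the proof of Proposition \ref{proposition_1a}), namely \(\mathcal{L}_{I_r}(s)=\prod_{i\neq 0}\big(1+\frac{s/(\mu\Upsilon^{2})}{(r/\Upsilon+i)^{2}+(h/\Upsilon)^{2}}\big)^{-1}\). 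Rewriting each factor as \(\frac{(r/\Upsilon+i)^{2}+(h/\Upsilon)^{2}}{(r/\Upsilon+i)^{2}+(h/\Upsilon)^{2}+s/(\mu\Upsilon^{2})}\) and dividing its numerator and denominator by \((r/\Upsilon+i)^{2}\), I would recast \(\mathcal{L}_{I_r}(s)\) as the ratio of the two \emph{individually convergent} products \(\prod_{i\neq 0}\big(1+\frac{(h/\Upsilon)^{2}}{(r/\Upsilon+i)^{2}}\big)\) and \(\prod_{i\neq 0}\big(1+\frac{(h/\Upsilon)^{2}+s/(\mu\Upsilon^{2})}{(r/\Upsilon+i)^{2}}\big)\).

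Each of these is of the form handled by identity (\ref{inf_product}) with \(a=r/\Upsilon\) and \(b=0\), so that \(\cosh(2\pi b)=1\) and \(\cos(2\pi\sqrt{-z})=\cosh(2\pi\sqrt{z})\) for \(z>0\). I would apply (\ref{inf_product}) to the full product over all \(k\in\mathbb{Z}\) and then divide out the \(k=0\) factor, which equals \((r^{2}+h^{2})/r^{2}\) for the numerator product and \((r^{2}+h^{2}+s/\mu)/r^{2}\) for the denominator product. Taking the ratio, the common \(\cos(2\pi r/\Upsilon)-1\) factors and the surviving powers of \(r^{2}\) cancel, leaving
\[
\mathcal{L}_{I_r}(s)=\frac{r^{2}+h^{2}+s/\mu}{r^{2}+h^{2}}\cdot\frac{\cos(2\pi r/\Upsilon)-\cosh(2\pi h/\Upsilon)}{\cos(2\pi r/\Upsilon)-\cosh\!\big(\tfrac{2\pi}{\Upsilon}\sqrt{h^{2}+s/\mu}\big)}.
\]
Substituting \(s=\mu T(r^{2}+h^{2})\) collapses the prefactor to \(1+T\) and turns the remaining hyperbolic-cosine argument into \(\tfrac{2\pi}{\Upsilon}\sqrt{h^{2}+T(r^{2}+h^{2})}\). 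Plugging this back into the integral of Lemma \ref{lemma_3} and changing variables via \(x=2\pi\lambda r\) (recalling \(\Upsilon=1/\lambda\), so \(\frac{2}{\Upsilon}\,dr=\frac{1}{\pi}\,dx\), the interval \([0,\Upsilon/2]\) maps onto \([0,\pi]\), \(2\pi r/\Upsilon=x\), \(2\pi h/\Upsilon=2\pi\lambda h\), and \(\tfrac{2\pi}{\Upsilon}\sqrt{h^{2}+T(r^{2}+h^{2})}=\sqrt{(2\pi\lambda h)^{2}+T(x^{2}+(2\pi\lambda h)^{2})}\)) yields the stated formula.

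The step I would be most careful about is the regrouping in the first paragraph: identity (\ref{inf_product}) is a statement about the product over \emph{all} integers, whereas the interference sum excludes the serving BS (the \(i=0\) term), so one must split off and divide by exactly that single factor; moreover the two ``unnormalised'' products \(\prod_{i\neq 0}\big[(r/\Upsilon+i)^{2}+(h/\Upsilon)^{2}\big]\) and \(\prod_{i\neq 0}\big[(r/\Upsilon+i)^{2}+(h/\Upsilon)^{2}+s/(\mu\Upsilon^{2})\big]\) each diverge, so it is essential to pair each term with \(1/(r/\Upsilon+i)^{2}\) \emph{before} invoking the identity rather than forming a naive quotient of divergent products. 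Everything else is the routine bookkeeping of the two substitutions, exactly as in Proposition \ref{proposition_1a}; in particular the Gaussian-noise exponential rides along unchanged through all of these steps.
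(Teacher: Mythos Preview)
Your argument is correct and lands on the same closed form, but it takes a slightly longer path than the paper's. The paper simply reruns the proof of Proposition~\ref{proposition_1a} with the single change that the parameter \(b\) in identity~(\ref{inf_product}) is set to \(h/\Upsilon\) rather than \(0\); one application of the identity (followed by division by the \(k=0\) factor) then yields the Laplace transform directly. You instead keep \(b=0\) and split \(\mathcal{L}_{I_r}(s)\) into a ratio of two individually convergent products, applying (\ref{inf_product}) separately to numerator and denominator before recombining. Both routes are valid and the algebra collapses to the same expression; the paper's version is more economical because the \(b\)-parameter in (\ref{inf_product}) is precisely designed to absorb the additive \(h^{2}\) term, whereas your decomposition effectively rederives that case from the \(b=0\) case. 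Your remark about the care needed in excluding the \(i=0\) term and avoiding quotients of divergent products is well taken and applies equally to the paper's shorter argument.
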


\begin{proof} 
The proof follows that of Proposition \ref{proposition_1a} but in (\ref{inf_product}), the substitution $b=\frac{h}{d}$ is used instead.
\end{proof}

\subsection{Regular 2D network}

Similar to the 1D regular network, we consider a network generated by the deterministic point process $\Phi^{HEX}$, and we study the coverage probability of a user located at the origin. We first see that the BS locations form the following point process on $\mathbb{R}^2$
\begin{equation}\hat{\Phi}^{HEX}_{r,\theta} = \{(x - r\cos\theta, y - rsin\theta)\,|\, (x,y) \in \Phi^{HEX} \},\end{equation}
where $r$ and $\theta$ are the distance and angle between the user, and its serving BS, respectively.

Following a similar approach as in 1D regular network, we first provide the following lemma for the coverage probability in 2D regular network, assuming a generic path-loss model.

\begin{lemma}
\label{lemma_4}
The coverage probability for 2D regular network under $l_0(r)$ path-loss model function can be expressed as
\begin{equation} 
\label{lemma_4_pc}
p_c^{\{2,0\}} (T, \lambda, \alpha) = \frac{12}{\Upsilon^2\sqrt{3}} \int_0^{\frac{\pi}{3}}\int_0^{\frac{\Upsilon}{2 \sin(\theta + \frac{\pi}{3})}} e^{-\mu \sigma^2 T r^{\alpha}} \cdot \mathcal{L}_{I_{r,\theta}}\left(\mu Tr^{\alpha}\right) r dr\, d\theta, 
\end{equation} where $\Upsilon = \sqrt{\frac{2}{\lambda \sqrt{3}}}$, and

\begin{multline} \mathcal{L}_{I_{r,\theta}}\left(s\right) = \prod_{\substack{(n,m) \in \\ \mathbb{Z}^2 \setminus \{(0,0)\}}}\Bigg[1 + \bigg(\Big(\Upsilon(m+\frac{n}{2}) - r\cos\theta\Big)^2 + \Big(\Upsilon n\frac{\sqrt{3}}{2} - r\sin\theta\Big)^2\bigg)^{-\frac{\alpha}{2}} \frac{s}{\mu} \Bigg]^{-1}. \end{multline}

\end{lemma}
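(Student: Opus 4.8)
The plan is to mirror the derivation of Lemma~\ref{lemma_2}, replacing the one-dimensional lattice $\Phi^{LIN}$ by the hexagonal lattice $\Phi^{HEX}$, and to deal with the only genuinely new feature of the 2D case: the serving-BS distance alone no longer pins down the user's location within its cell, so a second (angular) integration is needed. I would begin from the generic coverage expression (\ref{eqn_pc_generic}) of Definition~\ref{definition_1}, specialised to $l(r)=l_0(r)=r^{-\alpha}$, which turns the noise factor into $e^{-\mu\sigma^2 T/l(r)}=e^{-\mu\sigma^2 T r^{\alpha}}$ and the Laplace-transform argument into $\mu T r^{\alpha}$.

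\textbf{Step 1 (geometry and user distribution).} In $\Phi^{HEX}$ the Voronoi cell of every BS is a regular hexagon of apothem $\Upsilon/2$; a user drawn uniformly over the plane is, conditioned on being served by a given BS, uniform over that hexagon, with density $\lambda=\tfrac{2}{\Upsilon^2\sqrt3}$ per unit area (recall $\lambda=\tfrac{2}{\Upsilon^2\sqrt3}$, i.e.\ $\Upsilon=\sqrt{2/(\lambda\sqrt3)}$). Writing the user position relative to its serving BS in polar coordinates $(r,\theta)$, the joint density is $\lambda r$. Using the sixfold rotational symmetry of the hexagon I would fold the angular integral onto the wedge $\theta\in[0,\tfrac{\pi}{3}]$, which produces the factor $6$ --- hence the constant $\tfrac{12}{\Upsilon^2\sqrt3}=6\lambda$ in the statement --- while the hexagon edge, written in polar form, gives the radial bound $r\le\tfrac{\Upsilon}{2\sin(\theta+\pi/3)}$ once $\theta$ is measured from the appropriate reference direction. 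This reproduces precisely the density (\ref{eqn_fr_reg_2d}) together with its limits of integration.

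\textbf{Step 2 (interference Laplace transform).} Shifting the lattice so that the typical user sits at the origin, the interferer indexed by $(n,m)\in\mathbb{Z}^2\setminus\{(0,0)\}$ occupies $\big(\Upsilon(m+\tfrac n2)-r\cos\theta,\ \Upsilon n\tfrac{\sqrt3}{2}-r\sin\theta\big)$ and lies at squared distance $D_{n,m}^2$ equal to the bracketed expression in the statement. The computation then repeats steps (a)--(b) of the proof of Lemma~\ref{lemma_2}: the fading gains $g_{n,m}$ are i.i.d.\ exponential with rate $\mu$ and independent of the deterministic interferer positions, so
\begin{multline*}
\mathcal{L}_{I_{r,\theta}}(s)=\prod_{(n,m)\neq(0,0)}\mathbb{E}_g\big[e^{-s\,g\,l_0(D_{n,m})}\big]\\
=\prod_{(n,m)\neq(0,0)}\frac{1}{1+(s/\mu)\,D_{n,m}^{-\alpha}},
\end{multline*}
and rewriting $D_{n,m}^{-\alpha}=(D_{n,m}^2)^{-\alpha/2}$ yields the displayed product; it converges because $\sum_{(n,m)}D_{n,m}^{-\alpha}<\infty$ for $\alpha>2$. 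Substituting $s=\mu T r^{\alpha}$ and inserting this, together with the density and limits of Step~1, into (\ref{eqn_pc_generic}) gives (\ref{lemma_4_pc}).

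I expect the only real obstacle to be the geometric bookkeeping in Step~1 --- pinning down the polar equation of the hexagonal Voronoi cell, the symmetry-reduction factor, and the alignment of $\theta$ with the wedge $[0,\pi/3]$ --- since Step~2 is a routine transcription of the Laplace-transform argument of Lemma~\ref{lemma_2}. One point worth a brief remark is that, because $l_0$ is strictly decreasing, the ``strongest-BS'' association rule coincides with nearest-BS association, so the Voronoi tessellation is indeed the correct serving-region partition and the conditional-uniform claim in Step~1 is legitimate.
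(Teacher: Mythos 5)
Your proposal is correct and follows essentially the same route as the paper: substitute the polar density of the user's position over the hexagonal Voronoi wedge into the generic coverage expression (\ref{eqn_pc_generic}), then evaluate $\mathcal{L}_{I_{r,\theta}}(s)$ by factoring over the deterministic lattice points using the i.i.d.\ exponential fading, exactly as in Lemma~\ref{lemma_2}. The only difference is cosmetic — you additionally derive the density (\ref{eqn_fr_reg_2d}) from the hexagon geometry and remark on convergence of the product for $\alpha>2$, whereas the paper takes (\ref{eqn_fr_reg_2d}) as given from the system model.
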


\begin{proof}
The proof of Lemma \ref{lemma_4} is analogous to that of Lemma \ref{lemma_2} and is based on the proof of Theorem 1 provided in \cite{andrews2011tractable}. As the desired signal is assumed to be exponentially distributed, by substituting (\ref{eqn_fr_reg_2d}) in (\ref{eqn_pc_generic}), the coverage probability can be expressed as

\begin{equation}
\label{eqn_pc_generic_2d}
p_c^{\{2,i\}} (T, \lambda, \alpha) = \frac{12}{\Upsilon^2\sqrt{3}} \int_0^{\frac{\pi}{3}}\int_0^{\frac{\Upsilon}{2 \sin(\theta + \frac{\pi}{3})}} e^{-\frac{\mu \sigma^2 T}{l(r)}} \cdot \mathcal{L}_{I_{r,\theta}}\left(\frac{\mu T}{l(r)}\right) r dr\, d\theta. 
\end{equation}

Similar to Lemma \ref{lemma_1}, by using the definition of the Laplace transform we can show that

\begin{multline}
   \mathcal{L}_{I_{r,\theta}}\left(s\right) = \mathbb{E}_{I_{r,\theta}}[e^{-sI_{r,\theta}}] = \mathbb{E}_{\{g_u\}}\left[\exp(-s\sum_{u \in \hat{\Phi}^{HEX}_{r,\theta} \setminus \{b_o\}} g_u l(\|u\|))\right] \overset{(a)}{=} \prod_{u \in \hat{\Phi}^{HEX}_{r,\theta} \setminus \{b_o\}} \mathbb{E}_g [\exp (-sg\, l(\|u\|))] \\ 
   \overset{(b)}{=} \prod_{u \in \hat{\Phi}^{HEX}_{r,\theta} \setminus \{b_o\}} \frac{\frac{\mu}{l(\|u\|)}}{\frac{\mu}{l(\|u\|)} + s}
\end{multline}
 
where (a) follows from the i.i.d distribution of $g_u$ and it is independence of the point process $\hat{\Phi}^{HEX}_{r,\theta}$, and (b) follows from the Laplace transform of an exponential random variable with mean $\frac{\mu}{l(\|u\|)}$. Using the above, and taking into account that locations of BSs in 2D regular network follow ${\Phi}^{HEX}$, the Laplace transform can be further expressed as 
\begin{equation*} \mathcal{L}_{I_{r,\theta}}\left(s\right) = \prod_{\substack{(n,m) \in \\ \mathbb{Z}^2 \setminus \{(0,0)\}}}\Bigg[1 + l(\|\Upsilon(m+\frac{n}{2}) - r\cos\theta, \Upsilon n\frac{\sqrt{3}}{2} - rsin\theta)\|) \frac{s}{\mu}\Bigg]^{-1}. \end{equation*}

By substituting $l(r) = l_0(r)$ into (\ref{eqn_pc_generic_2d}), and removing the absolute value expression from the product in the expression for the Laplace transform presented above, we immediately obtain the final result which concludes the proof.
\end{proof}

\begin{lemma}
\label{lemma_5}
The coverage probability for 2D regular network under $l_1(r)$ path-loss model function can be expressed as
\begin{multline}
\label{lemma_5_pc} 
p_c^{\{2,1\}} (T, \lambda, \alpha) = \frac{12}{\Upsilon^2\sqrt{3}} \int_0^{\frac{\pi}{3}}\int_0^{\frac{\Upsilon}{2 \sin(\theta + \frac{\pi}{3})}} e^{-\mu \sigma^2 T (r^2 + h^2)^{\alpha/2}} \cdot \mathcal{L}_{I_{r,\theta}}\left(\mu T(r^2 + h^2)^{\alpha/2}\right) r dr\, d\theta, 
\end{multline} where 
\begin{multline} \mathcal{L}_{I_{r,\theta}}\left(s\right) = \prod_{\substack{(n,m) \in \\ \mathbb{Z}^2 \setminus \{(0,0)\}}}\Bigg[1 + \bigg(\Big(\Upsilon(m+\frac{n}{2}) - r\cos\theta\Big)^2 + \Big(\Upsilon n\frac{\sqrt{3}}{2} - r\sin\theta\Big)^2 + h^2\bigg)^{-\frac{\alpha}{2}} \frac{s}{\mu} \Bigg]^{-1}. \end{multline}

\end{lemma}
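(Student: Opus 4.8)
The plan is to reproduce the derivation of Lemma~\ref{lemma_4} line for line, with the bounded path-loss function $l_1(r)=(r^2+h^2)^{-\alpha/2}$ in place of $l_0(r)=r^{-\alpha}$ — exactly the way Lemma~\ref{lemma_3} is obtained from Lemma~\ref{lemma_2}. First I would take the generic coverage expression~(\ref{eqn_pc_generic}) and specialize it to the 2D regular deployment by inserting the distance density~(\ref{eqn_fr_reg_2d}) of a user placed uniformly in a hexagonal cell; this yields the double integral over $\theta\in[0,\pi/3]$ and $r\in[0,\Upsilon/(2\sin(\theta+\pi/3))]$ of~(\ref{eqn_pc_generic_2d}), in which $1/l(r)$ is now $1/l_1(r)=(r^2+h^2)^{\alpha/2}$. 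That substitution alone turns the noise factor into $e^{-\mu\sigma^2 T (r^2+h^2)^{\alpha/2}}$ and forces the Laplace transform to be evaluated at $s=\mu T (r^2+h^2)^{\alpha/2}$, matching the prefactor and integrand of~(\ref{lemma_5_pc}).

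Next I would evaluate $\mathcal{L}_{I_{r,\theta}}(s)$. Writing $I_{r,\theta}=\sum_{u\in\hat{\Phi}^{HEX}_{r,\theta}\setminus\{b_o\}} g_u\, l(\|u\|)$ and using that the fading variables $g_u\sim\exp(\mu)$ are i.i.d.\ and independent of the deterministic shifted lattice $\hat{\Phi}^{HEX}_{r,\theta}$, the expectation factorizes over interferers into per-term contributions $\mathbb{E}_g\bigl[e^{-sg\,l(\|u\|)}\bigr]=\bigl(1+l(\|u\|)\,s/\mu\bigr)^{-1}$ — the Laplace transform of an exponential random variable with mean $\mu/l(\|u\|)$. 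Enumerating the interferers by $(n,m)\in\mathbb{Z}^2\setminus\{(0,0)\}$, with $(0,0)$ excluded because that BS carries the desired signal rather than interference, the squared distances are $\|u\|^2=\bigl(\Upsilon(m+n/2)-r\cos\theta\bigr)^2+\bigl(\Upsilon n\sqrt{3}/2-r\sin\theta\bigr)^2$. Substituting $l_1(\|u\|)=(\|u\|^2+h^2)^{-\alpha/2}$ into the product then reproduces exactly the stated expression for $\mathcal{L}_{I_{r,\theta}}(s)$, and combining it with the double integral finishes the argument.

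The only step that needs more than mechanical care is justifying the infinite product. Because $l_1(\|u\|)\sim\|u\|^{-\alpha}$ with $\alpha>2$, the sum $\sum_{(n,m)\neq(0,0)} l_1(\|u\|)$ over the 2D lattice converges, so the product over $(n,m)$ converges absolutely and the interchange of expectation with the (infinite) product is legitimate — e.g.\ by passing to the limit in the finite partial products via dominated convergence. This is the same convergence issue that underlies Lemma~\ref{lemma_4}; apart from it, there is no genuine obstacle, since the result is a direct transcription of Lemma~\ref{lemma_4} with $l_0\mapsto l_1$.
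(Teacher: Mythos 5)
Your proposal is correct and follows essentially the same route as the paper, which simply states that the proof of Lemma~\ref{lemma_5} follows that of Lemma~\ref{lemma_4}: substitute the hexagonal-cell distance density~(\ref{eqn_fr_reg_2d}) into the generic expression~(\ref{eqn_pc_generic}), factorize the Laplace transform of the interference over the i.i.d.\ exponential fading variables on the shifted lattice $\hat{\Phi}^{HEX}_{r,\theta}$, and replace $l_0$ by $l_1$ throughout. Your added remark on the absolute convergence of the infinite product for $\alpha>2$ is a justified (and welcome) detail the paper leaves implicit.
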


\begin{proof}
The proof follows that of Lemma \ref{lemma_4}.
\end{proof}

\begin{theorem}
\label{theorem_6}
The coverage probability for noise-less 2D regular network under the standard path-loss model function $l_0(r) = r^{-\alpha}$ does not depend on BS density $\lambda$.

\end{theorem}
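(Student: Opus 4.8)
The plan is to mirror the argument used for Theorem \ref{theorem_3}, but now starting from Lemma \ref{lemma_4} instead of Lemma \ref{lemma_2}. The essential point is that the standard path-loss model $l_0(r)=r^{-\alpha}$ is a pure power law and hence scale-invariant, so a single change of variables that rescales distances by the inter-site distance $\Upsilon=\sqrt{2/(\lambda\sqrt{3})}$ should eliminate all dependence on $\lambda$.

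First I would set $\sigma^2\to 0$ in the expression of Lemma \ref{lemma_4}, which turns the factor $e^{-\mu\sigma^2 T r^{\alpha}}$ into $1$, and substitute $s=\mu T r^{\alpha}$ into the Laplace transform, putting the inner product in the form $\prod [1 + T r^{\alpha} / D(n,m,r,\theta)^{\alpha/2}]^{-1}$, where $D(n,m,r,\theta)=(\Upsilon(m+n/2)-r\cos\theta)^2+(\Upsilon n\sqrt{3}/2 - r\sin\theta)^2$ is the squared distance from the origin to the lattice point indexed by $(n,m)$.

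Then I would apply the change of variables $r=\Upsilon t$, so that $t=r/\Upsilon$, $r\,dr=\Upsilon^2 t\,dt$, and the radial limit $\Upsilon/(2\sin(\theta+\pi/3))$ becomes $1/(2\sin(\theta+\pi/3))$. The prefactor $12/(\Upsilon^2\sqrt{3})$ combines with $\Upsilon^2 t\,dt$ to leave $12/\sqrt{3}$; inside the product $r^{\alpha}=\Upsilon^{\alpha}t^{\alpha}$ in the numerator, while $D(n,m,\Upsilon t,\theta)=\Upsilon^2\big[(m+n/2-t\cos\theta)^2+(n\sqrt{3}/2-t\sin\theta)^2\big]$, so its $\alpha/2$-power carries a factor $\Upsilon^{\alpha}$ that cancels exactly the one coming from $r^{\alpha}$. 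What remains is an expression depending only on $T$, $\alpha$, $t$ and $\theta$, hence independent of $\lambda$, which concludes the proof.

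The computation is routine; the only things to watch are that the $\Upsilon$-scaling is applied consistently to all three places where $\Upsilon$ appears (the prefactor, the radial integration limit, and the lattice-point distances inside the product), and — if one wishes to be careful — noting that the infinite product converges for $\alpha>2$ so the manipulations are legitimate. Unlike the proof of Theorem \ref{theorem_4}, no interchange of a limit with the integral is needed here, since this is a pure change of variables; so this proof is in fact more elementary than that of Theorem \ref{theorem_4}.
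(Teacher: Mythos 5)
Your proposal is correct and is essentially the paper's own argument: set $\sigma^2\to 0$ in Lemma \ref{lemma_4} and rescale distances by $\Upsilon$ so that the $\Upsilon^{\alpha}$ factors from $r^{\alpha}$ and from the lattice-point distances cancel inside the product, leaving an expression free of $\lambda$. The only (immaterial) difference is that the paper uses the substitution $t=r^2/\Upsilon^2$ while you use $t=r/\Upsilon$, which changes the integrand by a Jacobian but not the conclusion.
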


\begin{proof}
The proof is based on Lemma \ref{lemma_4}. By substituting $t = \frac{r^2}{\Upsilon^2}$ and assuming $\sigma^2 \to 0$, the coverage probability of 2D regular network can be expressed as
\begin{multline*} p_c^{\{2,0\}} (T, \lambda, \alpha) = 2 \sqrt{3} \int_0^{\frac{\pi}{3}}\int_0^{\frac{1}{2 \sin(\theta + \frac{\pi}{3})}} \prod_{\substack{(n,m) \in \\ \mathbb{Z}^2 \setminus \{(0,0)\}}}\left(\frac{1}{1 + \frac{T \,t^{\frac{\alpha}{2}}}{\left((m+\frac{n}{2} - \sqrt{t}\cos\theta)^2 + (n\frac{\sqrt{3}}{2} - \sqrt{t}\sin\theta)^2\right)^{\frac{\alpha}{2}}}} \right) dt d\theta. \end{multline*}

From the above formula it can be easily seen that the coverage probability is independent of $\lambda$.
\end{proof}

\begin{theorem}
\label{theorem_7} 
The coverage probability of 2D regular networks under the path-loss model $l_1(r)$ tends to 0 as BS density $\lambda \to \infty$, that is
\begin{equation} \lim_{\lambda \to \infty} p_c^{\{2,1\}} (T,\lambda, \alpha) =  0. \end{equation}
\end{theorem}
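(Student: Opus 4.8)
The plan is to follow the argument used for the one-dimensional regular network in Theorem~\ref{theorem_4}, now starting from Lemma~\ref{lemma_5}. First I would normalise the spatial scale by the change of variables $t = r^2/\Upsilon^2$ (leaving $\theta$ untouched) --- the same substitution used in the proof of Theorem~\ref{theorem_6} --- together with $\Upsilon^2 = \tfrac{2}{\lambda\sqrt{3}}$. This substitution does three things at once: it makes the region of integration independent of $\lambda$, it rescales the hexagonal lattice $\Upsilon(m+n/2,\,n\sqrt{3}/2)$ to the fixed lattice $(m+n/2,\,n\sqrt{3}/2)$, and, crucially, it replaces the antenna-height term $h^2$ everywhere by $h^2/\Upsilon^2 = \tfrac{\lambda h^2\sqrt{3}}{2}$, a quantity that diverges as $\lambda\to\infty$. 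After the substitution, each factor of the interference product becomes $\bigl(1 + T\bigl(\tfrac{t + h^2/\Upsilon^2}{\delta_{n,m}(t,\theta)^2 + h^2/\Upsilon^2}\bigr)^{\alpha/2}\bigr)^{-1}$, where $\delta_{n,m}(t,\theta)^2 = (m+n/2-\sqrt{t}\cos\theta)^2 + (n\sqrt{3}/2-\sqrt{t}\sin\theta)^2$ is the squared distance from the rescaled typical user to the rescaled lattice point indexed by $(n,m)$.

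Next I would let $\lambda\to\infty$ inside the double integral. The noise exponential $e^{-\mu\sigma^2 T (r^2+h^2)^{\alpha/2}}$ tends to the constant $e^{-\mu\sigma^2 T h^{\alpha}}\le 1$ because $r^2 = \Upsilon^2 t\to 0$, and, exactly as in Theorem~\ref{theorem_4}, each factor of the product tends to $\tfrac{1}{1+T}$ because the diverging additive term $h^2/\Upsilon^2$ swamps both the bounded numerator $t$ and the bounded denominator $\delta_{n,m}(t,\theta)^2$, forcing the ratio inside the bracket to $1$. The integrand is uniformly bounded by $1$ on the bounded $(t,\theta)$-domain, so the convergence theorems used for Theorems~\ref{theorem_1} and~\ref{theorem_4} let me pass the limit through the integral and through the product, leaving a constant multiple of $\prod_{(n,m)\in\mathbb{Z}^2\setminus\{(0,0)\}}\tfrac{1}{1+T}$. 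Since this is an infinite product of identical factors strictly less than $1$ (equivalently $\sum\log(1+T) = +\infty$ for $T>0$), it equals $0$; hence the limiting integrand vanishes almost everywhere on a set of finite measure and $\lim_{\lambda\to\infty}p_c^{\{2,1\}}(T,\lambda,\alpha)=0$.

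The step I expect to be the main obstacle is making the interchange of the $\lambda$-limit with the infinite product rigorous: the convergence of each factor to $\tfrac{1}{1+T}$ is not uniform over the lattice --- for any fixed $\lambda$ the factors indexed by $(n,m)$ with $\delta_{n,m}^2$ large compared with $h^2/\Upsilon^2$ are still close to $1$, not to $\tfrac{1}{1+T}$ --- so ``the product of the limits'' is not automatically ``the limit of the products''. The clean way around this (implicit already in the proofs of Theorems~\ref{theorem_1} and~\ref{theorem_4}) is to bound the full integrand above by the truncated product over the first $N$ lattice points, let $\lambda\to\infty$ to get the bound $(1+T)^{-N}$ for that truncation, and then let $N\to\infty$; this forces the pointwise limit of the integrand to be $0$, after which dominated convergence on the bounded domain gives the integral limit. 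The remaining pieces --- the Jacobian of the substitution, the explicit rescaled distances, the $\theta$-integration --- are routine and parallel the proof of Theorem~\ref{theorem_6}.
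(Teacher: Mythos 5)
Your proposal follows essentially the same route as the paper's proof: substitute $t = r^2/\Upsilon^2$ with $\Upsilon^2 = \tfrac{2}{\lambda\sqrt{3}}$ so that the antenna-height term appears as the diverging quantity $\tfrac{\sqrt{3}}{2}\lambda h^2$ in every factor, pass the limit inside the integral, observe that each factor tends to $\tfrac{1}{1+T}$ and the noise exponential to $e^{-\mu\sigma^2 T h^{\alpha}}$, and conclude via $\prod_{(n,m)}\tfrac{1}{1+T}=0$. Your extra truncation argument (bounding by the product over the first $N$ lattice points, sending $\lambda\to\infty$ to get $(1+T)^{-N}$, then $N\to\infty$) makes rigorous the limit--product interchange that the paper handles only implicitly, but the overall argument is the same.
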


\begin{proof}
The proof is based on Lemma \ref{lemma_5}. We start by substituting $t = \frac{r^2}{\Upsilon^2}$ and $\Upsilon^2 = \frac{2}{\lambda\sqrt{3}}$ into (\ref{lemma_5_pc}). Next, by applying the Monotone Convergence Theorem to the obtained formula and some algebraic manipulation, we obtain the following expression 

\begin{multline} 
\label{limit_expression}
\lim_{\lambda \to \infty} p_c^{\{2,1\}} (T,\lambda, \alpha) = 2 \sqrt{3} \int_0^{\frac{\pi}{3}}\int_0^{\frac{1}{2 \sin(\theta + \frac{\pi}{3})}} \lim_{\lambda \to \infty} e^{-\mu \sigma^2 T (\frac{2}{\lambda\sqrt{3}} t + h^2)^{\alpha/2}} \\ \cdot \prod_{\substack{(n,m) \in \\ \mathbb{Z}^2 \setminus \{(0,0)\}}} \lim_{\lambda \to \infty} \left(\frac{1}{1 + T \left(\frac{t + \frac{\sqrt{3}}{2} \lambda h^2 }{(m+\frac{n}{2} - \sqrt{t}\cos\theta)^2 + (n\frac{\sqrt{3}}{2} - \sqrt{t}\sin\theta)^2 + \frac{\sqrt{3}}{2} \lambda h^2}\right)^{\frac{\alpha}{2}}} \right) dt d\theta. \end{multline}
The above formula can be further simplified to

\begin{equation} 
\label{interm_t7_proof}
\lim_{\lambda \to \infty} p_c^{\{2,1\}} (T,\lambda, \alpha) = 2 \sqrt{3} \int_0^{\frac{\pi}{3}}\int_0^{\frac{1}{2 \sin(\theta + \frac{\pi}{3})}} e^{-\mu \sigma^2 T h^\alpha} \cdot \prod_{\substack{(n,m) \in \\ \mathbb{Z}^2 \setminus \{(0,0)\}}}\left(\frac{1}{1 + T}\right)dt d\theta.\end{equation}

Next, given that $T > 0$, 
\begin{equation} \prod_{\substack{(n,m) \in \\ \mathbb{Z}^2 \setminus \{(0,0)\}}}\left(\frac{1}{1 + T}\right) = 0, \end{equation} 
and thus bring the expression (\ref{interm_t7_proof}) to zero.
\end{proof}

Similar to the previous scenarios, the following theorem shows that the density countering condition also applies to 2D regular network.

\begin{theorem}
\label{theorem_8}
The coverage probability of a noise-less 2D regular network for the path-loss model function $l_1(r)$ does not change when $\lambda h^2 = c$ and $c$ is constant.
\end{theorem}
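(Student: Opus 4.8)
The plan is to mirror the arguments already used for the density countering condition in Theorems~\ref{theorem_2} and~\ref{theorem_5}: I want to exhibit a form of the coverage probability in which the network parameters $\lambda$ and $h$ enter only through the product $\lambda h^2$. I would start from the closed expression of Lemma~\ref{lemma_5}, reuse the change of variables that already appears in the proof of Theorem~\ref{theorem_7}, and then impose the noise-less assumption.

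Concretely, I would first substitute $t = r^2/\Upsilon^2$ together with $\Upsilon^2 = \frac{2}{\lambda\sqrt{3}}$ into~(\ref{lemma_5_pc}). As in the derivation leading to~(\ref{limit_expression}), this rescaling makes the region of integration $\theta\in[0,\pi/3]$, $t\in[0,\frac{1}{2\sin(\theta+\pi/3)}]$ independent of $\lambda$, turns the prefactor into $e^{-\mu\sigma^2 T(\frac{2}{\lambda\sqrt{3}}t + h^2)^{\alpha/2}}$, and rewrites each interference factor in the infinite product as
\begin{equation*}
\frac{1}{1 + T\left(\dfrac{t + \frac{\sqrt{3}}{2}\lambda h^2}{\left(m+\frac{n}{2} - \sqrt{t}\cos\theta\right)^2 + \left(n\frac{\sqrt{3}}{2} - \sqrt{t}\sin\theta\right)^2 + \frac{\sqrt{3}}{2}\lambda h^2}\right)^{\alpha/2}}.
\end{equation*}
The key observation is that after this substitution the factor $\Upsilon^2=\frac{2}{\lambda\sqrt{3}}$ has cancelled against $r^2=\Upsilon^2 t$ in every summand, so that $\lambda$ now survives only multiplying $h^2$.

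Next I would set $\sigma^2=0$, which removes the exponential prefactor entirely, and substitute $\lambda h^2 = c$. The remaining double integral and the product over $(n,m)\in\mathbb{Z}^2\setminus\{(0,0)\}$ then depend on the network parameters only through $c$, $T$ and $\alpha$,
\begin{equation*}
p_c^{\{2,1\}}(T,\lambda,\alpha) = 2\sqrt{3} \int_0^{\frac{\pi}{3}}\int_0^{\frac{1}{2\sin(\theta+\frac{\pi}{3})}} \prod_{\substack{(n,m)\in \\ \mathbb{Z}^2\setminus\{(0,0)\}}}\left(\frac{1}{1 + T\left(\dfrac{t + \frac{\sqrt{3}}{2}c}{\left(m+\frac{n}{2} - \sqrt{t}\cos\theta\right)^2 + \left(n\frac{\sqrt{3}}{2} - \sqrt{t}\sin\theta\right)^2 + \frac{\sqrt{3}}{2}c}\right)^{\alpha/2}}\right) dt\, d\theta,
\end{equation*}
which is manifestly invariant as long as $c$ is held fixed, completing the argument.

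The only point requiring care is the legitimacy of carrying the change of variables through the infinite product and of interchanging the product with the double integral, but this is exactly the manipulation already justified (together with the Monotone Convergence Theorem) in the proof of Theorem~\ref{theorem_7}, so no new technical ingredient is needed. I therefore do not anticipate a genuine obstacle here: the substance of the theorem is simply that the rescaling used to study the $\lambda\to\infty$ limit in Theorem~\ref{theorem_7} already isolates $\lambda h^2$ as the only relevant combination once noise is dropped, and everything else is bookkeeping.
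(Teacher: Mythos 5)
Your proposal is correct and follows essentially the same route as the paper's own proof: substitute $t = r^2/\Upsilon^2$ with $\Upsilon^2 = \frac{2}{\lambda\sqrt{3}}$ into the expression of Lemma~\ref{lemma_5} (as in the proof of Theorem~\ref{theorem_7}), set $\sigma^2 = 0$, and observe that $\lambda$ and $h$ then appear only through $\lambda h^2 = c$. The resulting integrand matches the paper's displayed expression, so there is nothing further to add.
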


\begin{proof} The proof is based on the proof of Theorem \ref{theorem_7}. By substituting $\sigma^2 = 0$ and $\lambda h^2 = c$ in (\ref{lemma_5_pc}), we obtain the following expression which is constant for a constant $c$ 
\begin{equation*} \lim_{\lambda \to \infty} p_c^{\{2,1\}} (T,\lambda, \alpha) = 2 \sqrt{3} \int_0^{\frac{\pi}{3}}\int_0^{\frac{1}{2 \sin(\theta + \frac{\pi}{3})}}  \prod_{\substack{(n,m) \in \\ \mathbb{Z}^2 \setminus \{(0,0)\}}} \left(\frac{1}{1 + T \beta(n,m)} \right) dt d\theta, \end{equation*}
where 
\begin{equation*}
\beta(n,m) =\bigg(\frac{t + \frac{\sqrt{3}}{2} c }{(m+\frac{n}{2} - \sqrt{t}\cos\theta)^2 + (n\frac{\sqrt{3}}{2} - \sqrt{t}\sin\theta)^2 + \frac{\sqrt{3}}{2} c}\bigg)^{\frac{\alpha}{2}}.
\end{equation*}

\end{proof}

In contrast to Lemma \ref{lemma_2}, the Laplace transform of the interference power $\mathcal{L}_{I_{r,\theta}}\left(s\right)$ in Lemma \ref{lemma_4} does not have a closed form expression. If desirable, the coverage probability under $l_0(r)$, $l_1(r)$ may be calculated numerically.

In Table \ref{tab:summary_1}, we summarize the conditions when SINR invariance property holds. For the considered scenarios  when the SINR invariance property does not hold, we show that the coverage probability tends to zero.

\begin{table}
\centering
\caption{Summary of conditions when SINR invariance property holds for regular/irregular 1D/2D networks}
	\begin{tabular}{|p{0.8cm}|p{1.4cm}|p{1.4cm}|p{1.4cm}|p{1.4cm}|}
		\hline
		\multirow{2}{4em}{Path-loss models} & \multicolumn{2}{|c|}{Irregular} & \multicolumn{2}{|c|}{Regular}\\
		\cline{2-5}
		& {1D} & {2D} & {1D} & {2D} \\ \hline
		{$l_0(r)$} & {holds} & {holds \cite{andrews2011tractable}} & {holds} & {holds} \\ \hline
		{$l_1(r)$} & {$\lambda h=c$} & {$\lambda h^2=c$} & {$\lambda h=c$} & {$\lambda h^2=c$} \\
		\hline
		\multicolumn{5}{|l|}{Note: $c$ is constant}\\
		\hline
	\end{tabular}

	\label{tab:summary_1}
\end{table}

\section{Average Achievable Rate Analysis}
\label{sec:aara}
In the following section, we focus on the analysis of the mean achievable data rate over a cell. More specifically, we compute the ergodic capacity which measures the long-term achievable rate averaged over all channel and network realizations \cite{li2001capacity}.

\begin{definition}
\label{definition_2} 
The average ergodic rate achievable over a cell in the downlink, assuming Rayleigh fading for desired signal, can be represented as (see Theorem 3 in \cite{andrews2011tractable}) 
\begin{multline} 
\label{tau_generic}
\tau^{\{d,i\}} (\lambda, \alpha) \overset{\scriptscriptstyle\Delta}{=} \frac{1}{\ln{2}\,}\mathbb{E}[\ln(1+SINR)]
= \frac{1}{\ln{2}\,} \int_{\mathbb{R}^d} f_r(r) \int_{\gamma_0}^\infty e^{-\frac{\mu \sigma^2 (e^t -1)}{l(r)}} \mathcal{L}_{I_r}\left(\frac{\mu (e^t -1)}{l(r)}\right) dt dr \\
= \frac{1}{\ln{2}\,} \int_{\gamma_0}^\infty p_c^{\{d,i\}}(e^t - 1, \lambda, \alpha) dt
\end{multline}

where $d$ is the dimension of the Euclidean space and $i \in \{0, 1\}$ indicates the path-loss model function. The quantity $\gamma_0$ is the minimum working SINR. In this paper, we set $\gamma_0=0$ to investigate the performance upper bound.
\end{definition}

The average achievable rate per cell can be computed numerically using Definition \ref{definition_2} with formulas for the coverage probability. However, since we have derived some closed form expressions for the coverage probability under certain conditions, further insights can be obtained for the per cell average achievable rate. In the following, we focus on deriving the average achievable rate per cell for these conditions.

\subsection{Irregular network}
\label{ar_irregular}
The average ergodic rate in the downlink of noise-less irregular network for path-loss model $l_1(r)$ for 1D and 2D can be respectively expressed as
\begin{multline} \tau^{\{1,1\}} (\lambda, 2) = \frac{2\lambda}{\ln{2}\,}\int_0^\infty \int_0^\infty e^{-2\lambda r} \cdot e^{2 \lambda \frac{(e^t -1)(h^2 + r^2) (\arctan \left( r ( (e^t-1) (h^2 + r^2) + h^2 )^{-1/2}\right) -\pi/2 )   }{ \sqrt{(e^t -1) (h^2 + r^2) + h^2}} }dr\, dt \end{multline}
and
\begin{equation} \tau^{\{2,1\}} (\lambda, \alpha) =  \frac{1}{\ln{2}\,} \int_0^\infty \frac{e^{-\pi\lambda h^2(\rho_{2}(e^t-1,\alpha))} }{1+\rho_{2}(e^t-1,\alpha)} dt. \end{equation}

The above results are obtained by substituting (\ref{eqn_pc_2d1_first}) into (\ref{tau_generic}) and (\ref{eqn_pc_2d2_first}) into (\ref{tau_generic}), as well as setting $\sigma^2 = 0$.

Note that by simply setting $h = 0$ in the equations above, we can obtain the average ergodic rate for path-loss $l_0(r)$ as
\begin{equation} \label{1d_irr_tau}\tau^{\{d,0\}} (\lambda, \alpha) =  \frac{1}{\ln{2}\,} \int_0^\infty \frac{1}{1+\rho_d(e^t-1,\alpha)} dt. \end{equation}

As we have learned from the previous section the SINR invariance property holds for the path-loss model $l_0(r)$ but not for $l_1(r)$. From the above, we can see that the average rate per cell is unaffected by the BS density for $l_0(r)$ but it is affected by the BS density for $l_1(r)$ (see also Figs. \ref{fig:aer_1d}-\ref{fig:as_conjecture}). Based on Definition \ref{definition_2} and Theorem \ref{theorem_1}, we further know that the average rate per cell approaches zero eventually. Interestingly, it is possible to counter this decay by tuning the antenna height according to BS density in order to maintain per cell average rate. This can be observed by the countering condition described in Theorem \ref{theorem_2} in conjunction with Definition \ref{definition_2}. 

\begin{figure*}[!ht]
 \begin{minipage}[l]{1.0\columnwidth}
	\centering
	\includegraphics[scale=0.50]{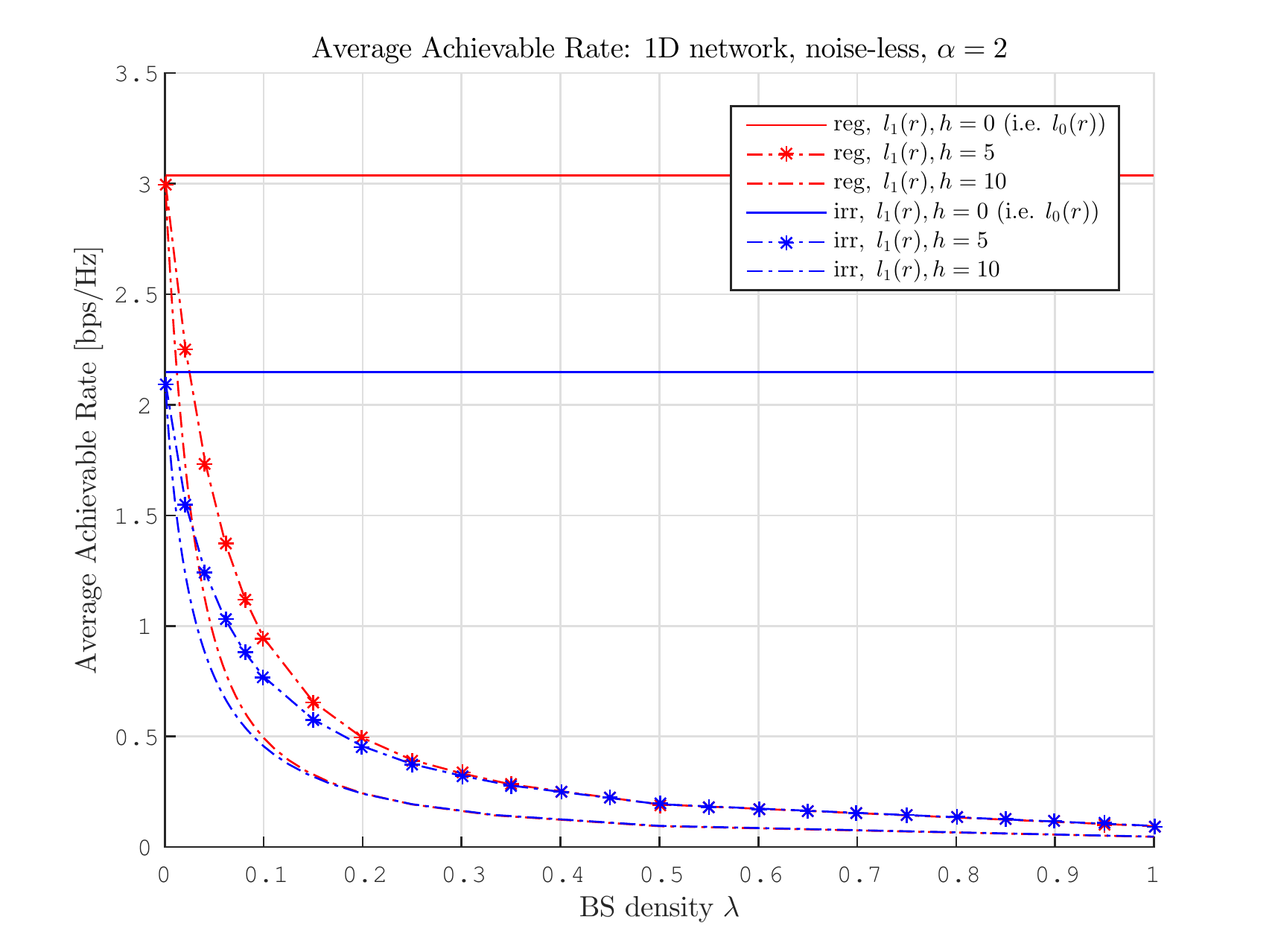}
	\caption{Impact of BS density $\lambda$ and $h$ on average ergodic rate $\tau$ of noise-less 1D network for different path-loss models with $\alpha=2$. The path-loss parameter $h$ affects the rate of decay.}
	\label{fig:aer_1d}
 \end{minipage}
 \hfill{}
 \begin{minipage}[r]{1.0\columnwidth}
	\centering
	\includegraphics[scale=0.50]{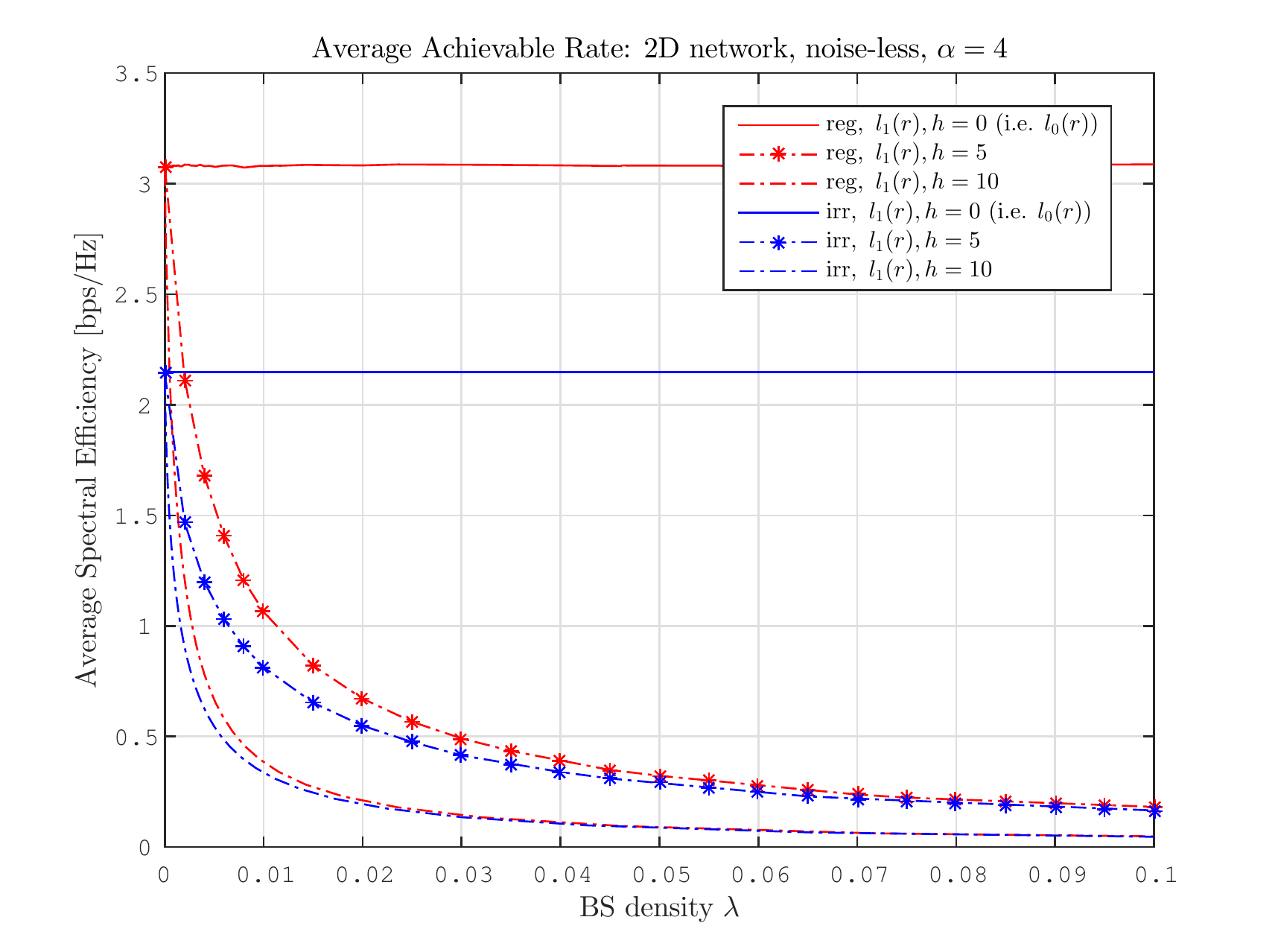}
	\caption{Impact of BS density $\lambda$ and $h$ on average ergodic rate $\tau$ of noise-less 2D network for different path-loss models withS $\alpha=4$. The path-loss parameter $h$ affects the rate of decay.}
	\label{fig:as_conjecture}
 \end{minipage}
\end{figure*}

\subsection{Regular 1D network}
In the previous section, we have derived a closed form coverage probability expression for noise-less 1D network under $l_0(r)$ path-loss model with $\alpha=2$. By substituting (\ref{theorem_3_eq}) into (\ref{tau_generic}) and employing a change of variables $x = 2\pi \lambda r$, $k=\sqrt{e^t - 1}$, we obtain the average rate per cell as
\begin{equation} \tau^{\{1,0\}} (\lambda, 2) =  \frac{2}{\ln{2}\,\pi} \int_0^\infty \int_0^\pi \frac{k(\cos(x)-1)}{\cos(x)-\cosh(xk)} dx dk.  \end{equation}

The above expression indicates a constant per cell average achievable rate regardless of BS density which is similar to what was shown in \cite{andrews2011tractable} for irregular networks. This expression also allows us to numerically compute the per cell average rate. We have found that regular network does indeed give a higher per cell average rates than that of the irregular network. For regular network, the approximate average rate per cell is $\approx 3.037 \mathrm{bits/sec/Hz}$. For irregular network, the approximate average rate per cell can be computed by (\ref{1d_irr_tau}) which gives $\approx 2.148 \mathrm{bits/sec/Hz}$. By comparing the above results of regular and irregular networks, performing site planning results in an approximate gain of $1.414$ times higher achievable rate. 

Next, for path-loss model $l_1(r)$, by substituting (\ref{theorem_3_eq}) into (\ref{tau_generic}) and employing a change of variables $k=e^t -1$, the average achievable rate per cell with $\alpha = 2$ can be expressed as
\begin{equation} \tau^{\{1,1\}} (\lambda, 2) = \frac{1}{\ln{2}\,\pi} \cdot \int_0^\infty \int_0^\pi \frac{\cos(x)-\cosh(2\pi\lambda h)}{\cos(x)-\cosh(\sqrt{((2\pi\lambda h)^2(1+ k) + x^2 k})} dx dk. \end{equation}

From the last result, we can see that the average rate per cell for $l_1(r)$ with $\alpha = 2$ depends on BS density as also seen in Figure \ref{fig:aer_1d}. However, by setting $h=\frac{c}{\lambda}$ where $c$ is some constant positive value, we can maintain a constant average rate per cell regardless of BS density. With Theorem \ref{theorem_5}, the above argument can be extended to a general condition of $\alpha>1$.

From the last result, by considering $l_1(r)$ with $\alpha = 2$, we see that the average achievable rate per cell depends on BS density (see also Fig. \ref{fig:aer_1d}). Similar to the irregular network, the average achievable rate per cell approaches zero when BS density goes to infinity. However, by setting $h=\frac{c}{\lambda}$ where $c$ is some constant positive value, we can maintain a constant average achievable rate per cell regardless of BS density. This argument is also valid for general condition of $\alpha>1$ by observing Theorems \ref{theorem_4} and \ref{theorem_5} in conjunction with Definition \ref{definition_2}.

\subsection{Regular 2D network}

As the probability of coverage for 2D regular network does not yield any closed form expression, we seek numerical computation to calculate its average achievable rate per cell. We present the numerical results in Figure \ref{fig:as_conjecture} illustrating the impact of $\lambda$ and $h$ on the average ergodic rate per cell $\tau^{\{2, 1\}}(\lambda, \alpha)$. As can be seen, the rate of decay of $\tau^{\{2, 1\}}(\lambda, \alpha)$ for an increasing $\lambda$ depends on $h$ and $\tau^{\{2, 1\}}(\lambda, \alpha)$ tends to zero. This is in line with the results for other network configurations discussed in the previous subsections.
 
Additionally, based on Theorem \ref{theorem_8} and Definition \ref{definition_2}, it can be easily shown that the average achievable rate per cell in the noise-less 2D regular network under path-loss model $l_1(r)$ does not change when $\lambda h^2 = c$ and $c$ is constant. In other words, it is possible to maintain per cell average achievable rate by countering the increase in BS density through lowering the antenna height accordingly.

Furthermore, Theorem \ref{theorem_6} and Definition \ref{definition_2} indicate that the 2D noise-less regular network under $l_0(r)$ exhibits SINR invariance property for all conditions. Consequently, the average rate per cell for $l_0(r)$ does not depend on BS density.

\section{Performance Limits in UDN}
In the previous section, we have derived the coverage probability and average achievable rate per cell for various scenarios. In this section, we investigate how extreme densification affects gains from site planning and its impact on ASE.

\begin{figure*}[!ht]
 \begin{minipage}[l]{1.0\columnwidth}
     \centering
     \includegraphics[scale=0.50]{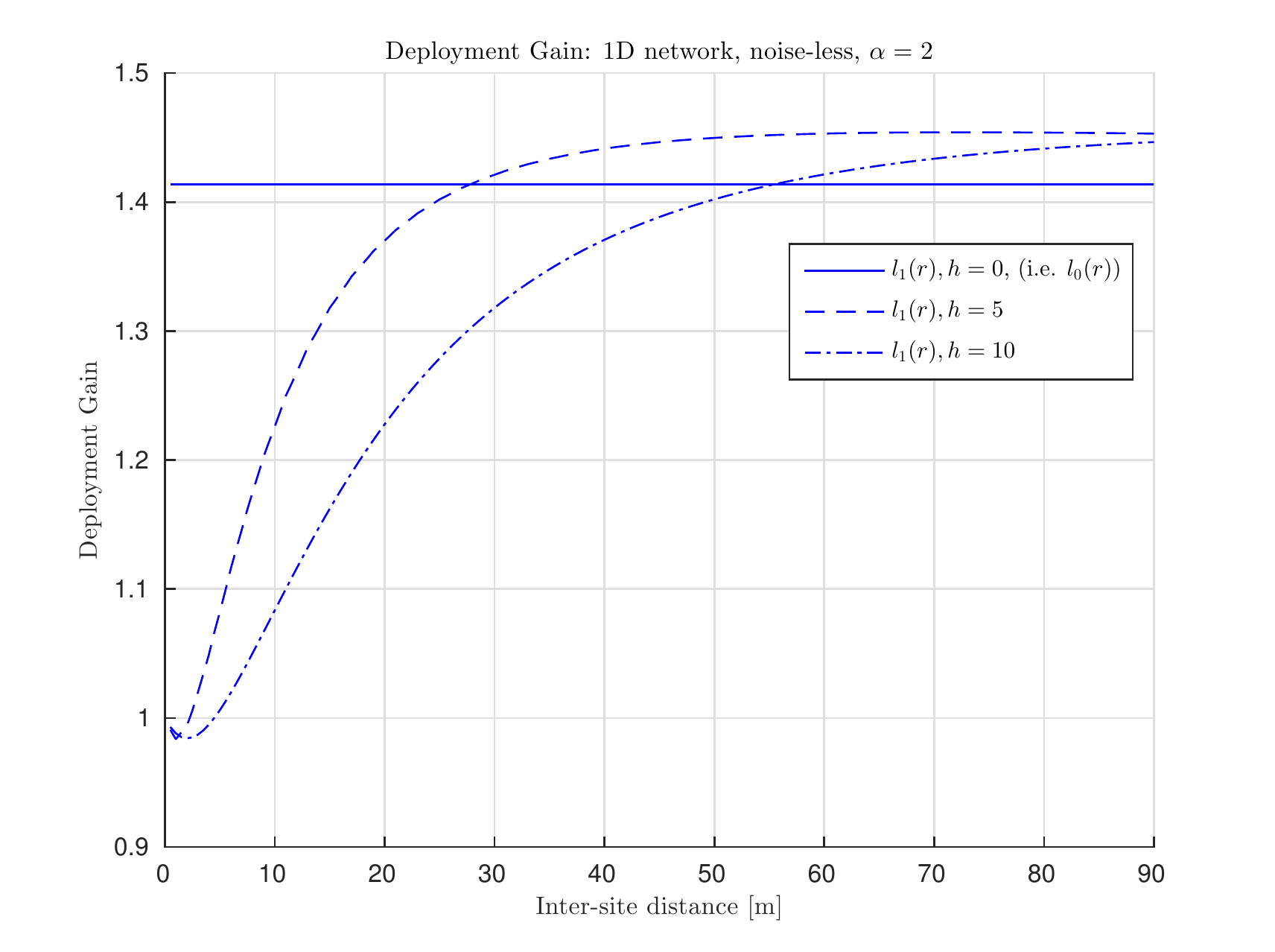}
     \caption{Deployment gain for 1D noise-less network with $\alpha=2$.}\label{fig:densification}
 \end{minipage}
 \hfill{}
 \begin{minipage}[r]{1.0\columnwidth}
     \centering
     \includegraphics[scale=0.50]{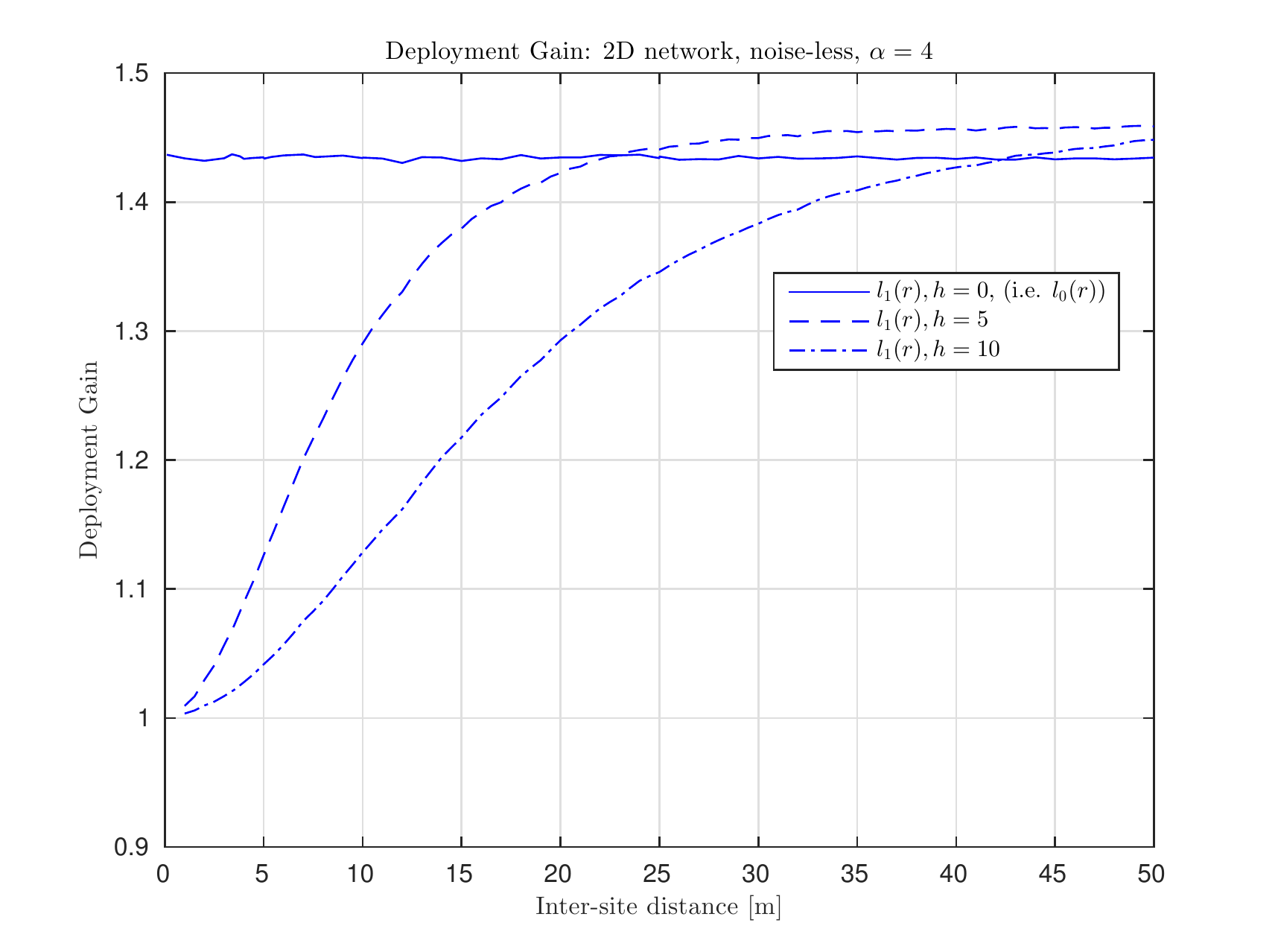}
     \caption{Deployment gain for 2D noise-less network with $\alpha=4$.}\label{fig:densification2}
 \end{minipage}
\end{figure*}

\subsection{Deployment gain}

We define deployment gain as the ratio of average achievable rate per cell for regular network to that of the irregular network. In Section \ref{sec:aara}, we have presented the average achievable rate per cell for different scenarios in Figures \ref{fig:aer_1d} and \ref{fig:as_conjecture}. When comparing the rate performance between the regular and irregular networks, we made the following observations: (i) for the case where SINR invariance property holds, the deployment gain remains constant regardless of BS density, (ii) for the case where SINR invariance property does not hold, the deployment gain depends on BS density and no gain is observed when BS density tends to infinity, (iii) in general, we achieve a performance gain when the deployment of BSs follows a regular pattern. To illustrate our observations, we further plot the deployment gain in Figure \ref{fig:densification} for 1D network with $\alpha=2$ and Figure \ref{fig:densification2} for 2D network with $\alpha=4$. To focus on the performance in ultra dense region, our plots use average inter-site distance on x-axis.

From Figure \ref{fig:densification}, we first see a constant deployment gain of approximately 1.414 for 1D network when SINR invariance property holds. When SINR invariance property does not hold, the deployment gain varies based on BS density. We see from the figure that the deployment gain increases as the average inter-site distance decreases. After peaking at a certain point, the gain appears to decrease to zero when the inter-site distance approaches zero. In other words, there appears to be no gain when BS density is extremely high. Interestingly, there is a small region of inter-site distance when deployment gain value falls below one, indicating negative impact of site planning. In Figure \ref{fig:densification2}, we observe similar behavior for the 2D network. In the following, we show that deployment gain vanishes when BS density approaches infinity.

\begin{proposition}
\label{proposition_10a}
System performance under $l_1(r)$ does not depend on BS deployment strategy (i. e. no gain from site planning) as $\lambda \to \infty$.
\end{proposition}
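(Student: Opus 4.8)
The plan is to show that both the regular and irregular per-cell average achievable rates converge to the \emph{same} limit as $\lambda \to \infty$ (in fact to zero), so that their ratio, the deployment gain, tends to $1$ --- or, more precisely, that no multiplicative gain survives in the limit. The cleanest route exploits what we already proved in Section~\ref{sec:cpa}: by Theorems~\ref{theorem_1}, \ref{theorem_4} and \ref{theorem_7}, the coverage probability under $l_1(r)$ tends to $0$ as $\lambda \to \infty$ for irregular 1D/2D and regular 1D/2D networks alike, with the limiting integrand in each case collapsing to an expression containing an infinite product of the form $\prod (1+T)^{-1}=0$. I would first recall these facts and then lift them from the coverage probability to the average rate via Definition~\ref{definition_2}.

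Concretely, the first step is to write, for each of the four configurations, $\tau^{\{d,1\}}(\lambda,\alpha) = \frac{1}{\ln 2}\int_0^\infty p_c^{\{d,1\}}(e^t-1,\lambda,\alpha)\,dt$ using Definition~\ref{definition_2} with $\gamma_0=0$. The second step is to pass the limit $\lambda \to \infty$ inside the outer $t$-integral. This requires a dominating function: since $p_c^{\{d,1\}}(e^t-1,\lambda,\alpha) \le p_c^{\{d,0\}}(e^t-1,\alpha)$ (adding antenna height only increases interference relative to signal, hence lowers coverage --- this can be read off the $l_1$ versus $l_0$ Laplace-transform expressions in Lemmas~\ref{lemma_1}--\ref{lemma_5}, where the ratio $(t^2+(h\lambda)^2)/(k^2+(h\lambda)^2) \le t^2/k^2$ makes the interference factor larger), and since $\int_0^\infty p_c^{\{d,0\}}(e^t-1,\alpha)\,dt < \infty$ (this is exactly $\ln 2\,\tau^{\{d,0\}}$, which is finite for $\alpha>d$ as in~\cite{andrews2011tractable}), the Dominated Convergence Theorem applies. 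The third step is then immediate: by Theorems~\ref{theorem_1}/\ref{theorem_4}/\ref{theorem_7}, $\lim_{\lambda\to\infty} p_c^{\{d,1\}}(e^t-1,\lambda,\alpha)=0$ pointwise in $t$, so $\lim_{\lambda\to\infty}\tau^{\{d,1\}}(\lambda,\alpha)=0$ for regular and irregular deployments in both dimensions. Finally, the deployment gain $G(\lambda)=\tau^{\{d,1\}}_{\mathrm{reg}}(\lambda,\alpha)/\tau^{\{d,1\}}_{\mathrm{irreg}}(\lambda,\alpha)$ therefore has numerator and denominator both vanishing; the conclusion that the gain itself vanishes (as asserted by the proposition, "no gain from site planning") follows by observing that in the limit the network performance --- coverage, rate, and hence ASE --- is zero irrespective of which deployment is used, so there is no performance advantage left to be had.

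The main obstacle is the deployment-gain ratio being a $0/0$ indeterminate form: strictly speaking one cannot conclude $\lim G(\lambda)=1$ from numerator and denominator each tending to $0$ without a finer rate comparison. I would handle this by interpreting the proposition in the natural operational sense used in the surrounding text --- namely that the \emph{difference} in achievable performance between regular and irregular deployments vanishes, $\tau^{\{d,1\}}_{\mathrm{reg}}(\lambda,\alpha)-\tau^{\{d,1\}}_{\mathrm{irreg}}(\lambda,\alpha)\to 0$, and likewise for coverage and ASE --- which follows immediately from both quantities converging to the same limit $0$. If a statement about the ratio is genuinely required, one would need to compare the leading-order decay rates of the two integrals, which is considerably more delicate; but the plots in Figures~\ref{fig:densification}--\ref{fig:densification2} and the operational reading make the difference-based statement the intended one, and that is what I would prove.
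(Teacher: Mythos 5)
There is a genuine gap, and it is precisely the one you flag at the end of your proposal but then argue away. Your argument establishes only that $\tau^{\{d,1\}}_{\mathrm{reg}}(\lambda,\alpha)\to 0$ and $\tau^{\{d,1\}}_{\mathrm{irreg}}(\lambda,\alpha)\to 0$, and hence that their \emph{difference} vanishes. That statement is vacuous as a comparison of deployment strategies (any two schemes whose performance both tend to zero satisfy it) and, more importantly, it is too weak for the role Proposition~\ref{proposition_10a} actually plays in the paper: in the proof of Theorem~\ref{theorem_aase} and of Propositions~\ref{proposition_12} and \ref{proposition_2d_2}, this proposition is invoked to transfer the \emph{nonzero} limits of the density-scaled quantity $\lambda\,\tau^{\{d,1\}}(\lambda,\alpha)$ (the ASE) from regular to irregular networks. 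For that, one needs the leading-order decay rates of the two per-cell rates to coincide, i.e.\ a statement at the level of the ratio (or of the scaled limit), not the difference. Your own caveat correctly identifies that the $0/0$ form cannot be resolved from your two limits, so the substantive content of the proposition is not proved by your route.

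The paper's proof takes a different and stronger tack: it compares the two deployments at the level of the interference geometry. For a PPP of density $\lambda$, the distance $\tilde{R}$ from the user to the $n$-th nearest BS has variance of order $1/\lambda^2$ in 1D and $1/\lambda$ in 2D, so as $\lambda\to\infty$ the random distances concentrate around their (deterministic) means and the point process effectively rigidifies; since $l_1(r)$ is bounded ($l_1(r)\le h^{-\alpha}<\infty$), the per-BS interference contribution $g\,l_1(\tilde{R})$ converges to the deterministic-distance contribution $g\,l_1(R)$. This yields convergence of the whole interference (and SINR) distribution of the irregular network to that of the regular one, which is exactly what is needed to equate the limits of any performance functional, including $\lambda\,\tau$. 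If you want to repair your proof, you would need to replace the difference-based conclusion with a rate comparison of this kind; the coverage-probability limits from Theorems~\ref{theorem_1}, \ref{theorem_4} and \ref{theorem_7} alone cannot deliver it.
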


\begin{proof}
We first study the distance between a user at a particular point and an arbitrary interfering BS. In regular network, since the location of the interfering BS is deterministic, the distance between the two is also deterministic. Let this distance be $R$. In irregular network, the location of the BS follows the PPP. The distance is a random variable, say $\tilde{R}$, which has the following statistical properties \cite{haenggi2009interference} for 1D network
\begin{equation} \mathbb{E}[\tilde{R}] = \frac{n}{2\lambda}\end{equation}
\begin{equation} \mathrm{var}(\tilde{R}) = \frac{n}{4\lambda^2},\end{equation}
and 2D network
\begin{equation} \mathbb{E}[\tilde{R}] = \frac{1}{\sqrt{\lambda \pi}} \frac{\Gamma(n + \frac{1}{2})}{\Gamma(n)}\end{equation}
\begin{equation} \mathrm{var}(\tilde{R}) = \frac{1}{\lambda \pi} \left(n - \left(\frac{\Gamma(n + \frac{1}{2})}{\Gamma(n)}\right)^2\right).\end{equation}

Considering an interfering BS located at a particular distance away from the user. For regular network, this distance is a fixed value $R$. For irregular network, this distance is $\tilde{R}$ with mean value of $R$. Notice that when $\lambda \to \infty$, the variance of $\tilde{R}$ tends to zero. In other words, when $\lambda$ continues to increase, the variance of this distance continues to decrease which reduces the deployment difference between irregular and regular networks.
For a regular network, the interference caused by this BS is $g \cdot l(R)$ where $g$ is a random variable describing the channel fading and $l(\cdot)$ is the path-loss expression. For an irregular network, the interference caused by this BS is $g \cdot l(\tilde{R})$. Given that $l_1(r)$ path-loss model is bounded, we have $l_1(r)<\infty$ for an arbitrary $r \geq 0$. When $\lambda \to \infty$, the variance of $\tilde{R}$ decreases to zero, we see that the interference characteristic caused by the BS in the irregular network converges to that of the regular network, that is $g l(\tilde{R}) \to g l(R)$. This shows that system performance under $l_1(r)$ for irregular network converges to that of the regular network when $\lambda \to \infty$.
\end{proof}
The above proposition can be extended to any bounded path-loss model. For the unbounded path-loss model, we show in Section \ref{sec:aara} that a constant performance gain is achieved when the deployment of BSs follows a regular pattern.

\subsection{Average Area Spectral Efficiency}

The ASE is a measure of the overall rate over a network area and is defined by

\begin{equation} ASE = \lambda \cdot \tau^{\{d, i\}}(\lambda, \alpha). \end{equation}

ASE may increase or decrease as network densifies. This depends on whether the decay of rate per cell as BS density increases, can be countered by spatial reuse, and therefore it is of interest to investigate whether continued network densification will still lead to ASE improvement. The finding may help network operators to optimize their investments in the infrastructure and identify when further densification may not be beneficial. 

In the previous section, we showed that when SINR invariance does not hold, the average achievable rate per cell goes to zero as $\lambda \to \infty$. However, as we show in the following, the average ASE does not necessarily go to zero. To support this, we formulate the following Theorem.

\begin{figure*}[!ht]
	\begin{minipage}[l]{1.0\columnwidth}
		\centering
		\includegraphics[scale=0.52]{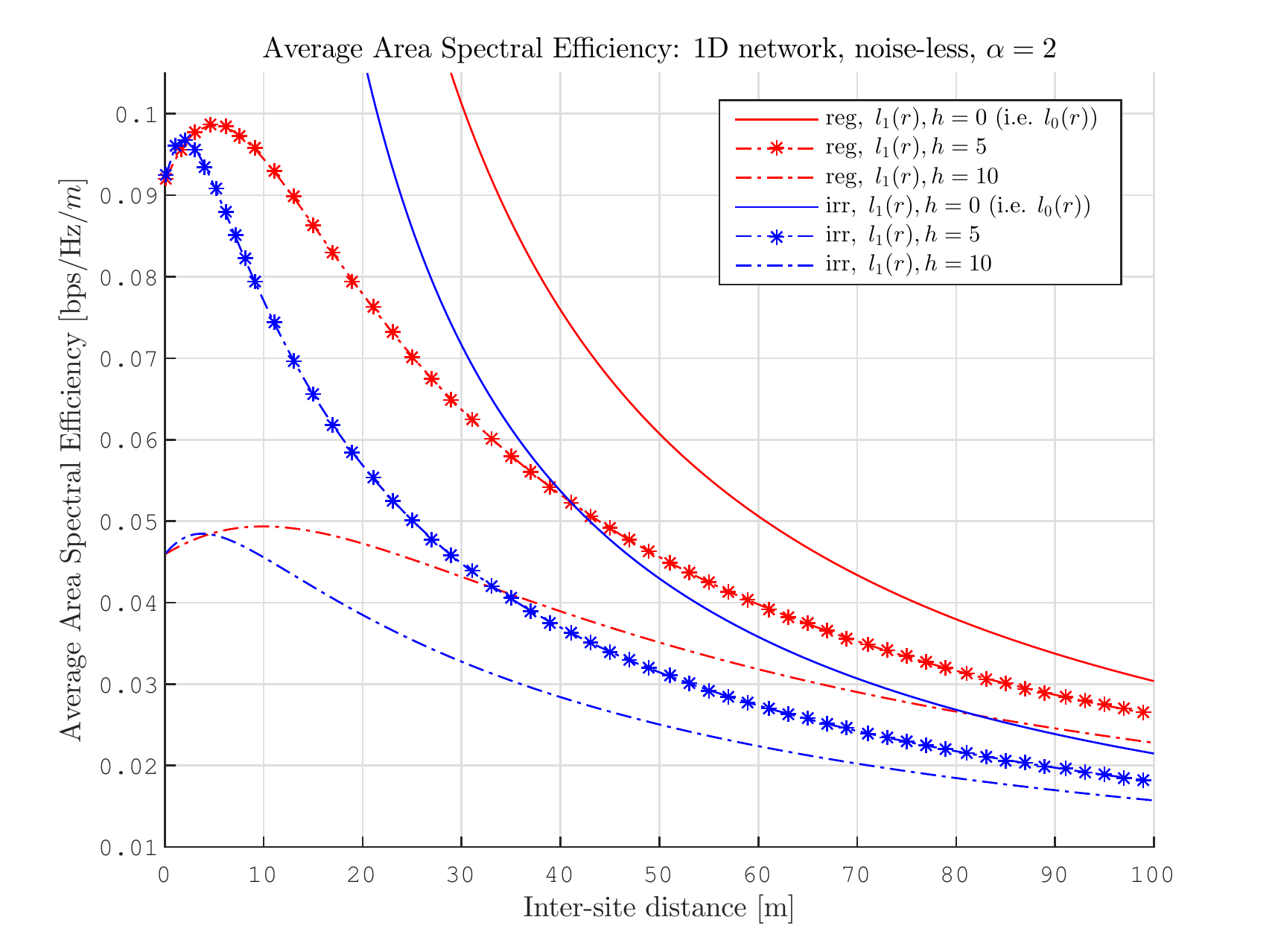}
		\caption{Average Area Spectral Efficiency for noise-less 1D networks for $\alpha=2$}
		\label{fig:ase_1d}
	\end{minipage}
	\hfill{}
	\begin{minipage}[r]{1.0\columnwidth}
		\centering
		\includegraphics[scale=0.52]{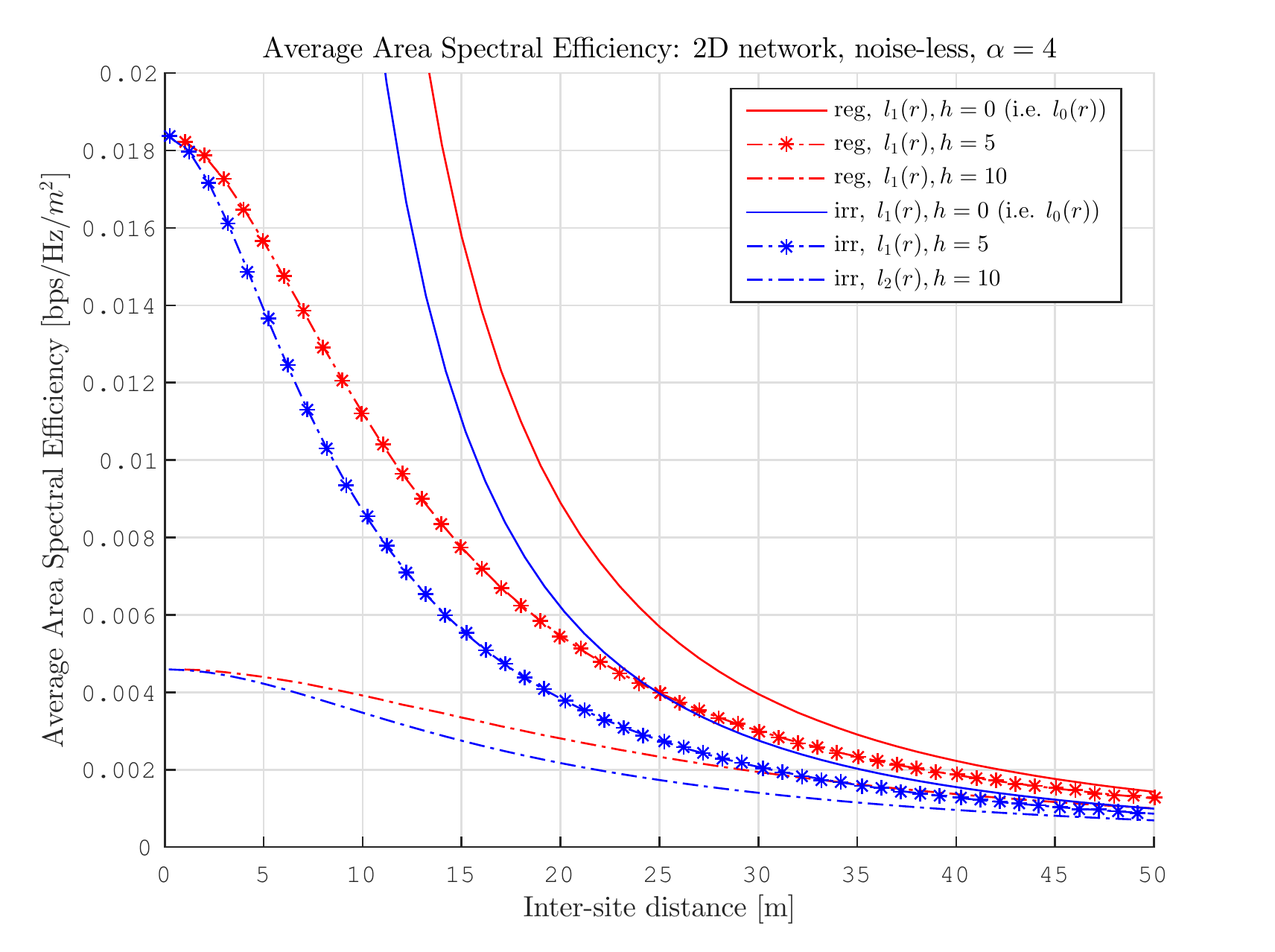}
		\caption{Average Area Spectral Efficiency for noise-less 2D networks for $\alpha=4$}
		\label{fig:ase_2d}
	\end{minipage}
\end{figure*}

\begin{theorem}
\label{theorem_aase}
Average ASE of cellular networks under $l_1(r)$ for $h>0$ as $\lambda \to \infty$, given $\alpha > 1$ for 1D or $\alpha > 2$ for 2D converges to a non-zero value. 
\end{theorem}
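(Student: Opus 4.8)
The plan is to exploit the identity $\mathrm{ASE} = \lambda\,\tau^{\{d,i\}}(\lambda,\alpha) = \frac{\lambda}{\ln 2}\int_0^\infty p_c^{\{d,i\}}(e^t-1,\lambda,\alpha)\,dt$ coming from Definition~\ref{definition_2}. Theorems~\ref{theorem_1}, \ref{theorem_4} and \ref{theorem_7} already say that this integral vanishes as $\lambda\to\infty$, so the whole content of the statement is that the prefactor $\lambda$ exactly cancels that decay. The mechanism to expose is that, after the change of variable $T=e^t-1$, the coverage probability $p_c^{\{d,i\}}(T,\lambda,\alpha)$ decays in $T$ on the scale $T\sim 1/\lambda$; accordingly I would substitute $T=s/\lambda$ to get $\mathrm{ASE} = \frac{1}{\ln 2}\int_0^\infty p_c^{\{d,i\}}(s/\lambda,\lambda,\alpha)\,\frac{ds}{1+s/\lambda}$ and then pass to the limit under the integral, the factor $\frac{1}{1+s/\lambda}\to1$ being harmless.

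The core step is computing $\lim_{\lambda\to\infty}p_c^{\{d,i\}}(s/\lambda,\lambda,\alpha)$. For the irregular 2D network I would use the closed form $p_c^{\{2,1\}}(T,\lambda,\alpha)=e^{-\lambda\pi h^2\rho_2(T,\alpha)}/(1+\rho_2(T,\alpha))$ from Lemma~\ref{lemma_1}: from the definition~(\ref{eqn_pc_d1_2}) of $\rho_d$ one reads off the small-argument behaviour $\rho_2(T,\alpha)\sim\frac{2}{\alpha-2}T$ as $T\to0^+$ (here $\alpha>2$ enters, so the tail integral converges), hence $\lambda\pi h^2\rho_2(s/\lambda,\alpha)\to\frac{2\pi h^2}{\alpha-2}s$ and $p_c^{\{2,1\}}(s/\lambda,\lambda,\alpha)\to e^{-2\pi h^2 s/(\alpha-2)}$, giving $\mathrm{ASE}\to\frac{\alpha-2}{2\pi h^2\ln 2}>0$. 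For the irregular 1D network the same substitution applied to the representation $p_c^{\{1,1\}}=2\int_0^\infty e^{-2k}e^{-2\xi(T,\alpha,k)}dk$ of Lemma~\ref{lemma_1}, together with the inner rescaling $t=h\lambda\tau$ inside $\xi$, yields $\xi(s/\lambda,\alpha,k)\to hs\int_0^\infty(1+\tau^2)^{-\alpha/2}d\tau$, a finite limit independent of $k$ exactly when $\alpha>1$; thus $p_c^{\{1,1\}}(s/\lambda,\lambda,\alpha)\to e^{-2hsB_\alpha}$ with $B_\alpha=\int_0^\infty(1+\tau^2)^{-\alpha/2}d\tau$, and $\mathrm{ASE}\to\frac{1}{2hB_\alpha\ln 2}>0$. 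As a by-product, reinstating a constant $\sigma^2$ merely multiplies the integrand by $\exp(-\mu\sigma^2(s/\lambda)(\cdot)^{\alpha/2})$, whose exponent is bounded and tends to $0$, so the noisy and noiseless limits agree — this is the equivalence asserted after the channel model.

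For the regular 1D and 2D networks, where the interference is an infinite product with no closed form, I would instead invoke Proposition~\ref{proposition_10a}: the regular and irregular per-cell rates coincide asymptotically, so $\lambda\tau_{\mathrm{reg}}$ and $\lambda\tau_{\mathrm{irr}}$ share the same nonzero limit. I expect the main obstacle to be the rigorous interchange of limit and integral: the trivial bound $p_c\le1$ is not integrable in $s$, so one needs a uniform-in-$\lambda$ integrable envelope, which I would extract from monotonicity of $p_c^{\{d,i\}}(\cdot,\lambda,\alpha)$ in $T$ together with an elementary lower bound $\rho_d(T,\alpha)\ge cT$ (resp.\ $\xi(T,\alpha,k)\ge c'\lambda hT$) valid uniformly for small $T$ and large $\lambda$, producing a dominating $e^{-c''s}$. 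The second delicate point is quantifying Proposition~\ref{proposition_10a} strongly enough — one needs $\lambda(\tau_{\mathrm{reg}}-\tau_{\mathrm{irr}})\to0$, not merely $\tau_{\mathrm{reg}}-\tau_{\mathrm{irr}}\to0$ — so that the regular case inherits the exact limit rather than only a positive $\liminf$.
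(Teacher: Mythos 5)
Your proposal is sound in outline but takes a genuinely different route from the paper. The paper never touches the coverage-probability integral directly: it first reduces the Rayleigh-fading network to a no-fading network (Lemma~\ref{lemma_appendix}, via a density-rescaling equivalence), then reduces to the degenerate configuration in which the user sits at its serving BS (Lemma~\ref{lemma_appendix2}), so that the SIR becomes a deterministic lattice sum; it lower-bounds $l_1$ by $l_2(r)=(\max(h,r))^{-\alpha}$, expresses the interference sum through Hurwitz zeta functions, and extracts the $\lambda\to\infty$ behaviour from Watson's-lemma asymptotics plus L'H\^opital, yielding explicit positive \emph{lower bounds} for the regular network which are then transferred to the irregular one via Proposition~\ref{proposition_10a}. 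You instead work directly on the irregular network's closed forms from Lemma~\ref{lemma_1} (fading already built in, so Lemma~\ref{lemma_appendix} is not needed), rescale $T=s/\lambda$, and obtain \emph{exact} limits: your $1/(2hB_\alpha\ln 2)$ with $B_\alpha=\int_0^\infty(1+\tau^2)^{-\alpha/2}d\tau$ reproduces the paper's Proposition~\ref{proposition_12} values for $\alpha=2,4,6$ (e.g.\ $B_2=\pi/2$ gives $1/(\pi h\ln 2)$), and your 2D value $(\alpha-2)/(2\pi h^2\ln 2)$ lands strictly inside the bracket of Proposition~\ref{proposition_2d_2} for $\alpha=4$, so your route actually sharpens the paper's 2D result to an exact constant. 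Two caveats. First, your dominating envelope needs slightly more care than ``$\rho_2(T,\alpha)\ge cT$ for small $T$'': since $s$ ranges over $(0,\infty)$, $T=s/\lambda$ is not uniformly small, and for large $T$ one only has $\rho_2(T,\alpha)\gtrsim T^{2/\alpha}$; splitting at $s=\lambda T_0$ and using $\lambda^{1-2/\alpha}\ge 1$ still produces an integrable majorant $e^{-c s^{2/\alpha}}$, so this is repairable but should be written out. Second, you are right that transferring to the regular lattice requires $\lambda(\tau_{\mathrm{reg}}-\tau_{\mathrm{irr}})\to 0$ rather than the qualitative statement of Proposition~\ref{proposition_10a}; note, however, that the paper's own proof leans on exactly the same unquantified proposition (in the opposite direction, lattice to PPP), so this is a shared gap rather than a defect specific to your argument.
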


\begin{proof}
Proof is given in Appendix \ref{theorem_aase_proof}. 
\end{proof}
As can be seen in the proof of Theorem \ref{theorem_aase}, the ASE of UDN under $l_1(r)$ has the following lower bounds when $\lambda \to \infty$, for 1D network
\begin{equation} \frac {1}{2\ln{2}\,\,h\left(1 + \frac{\Gamma(\alpha-1)}{\Gamma(\alpha)}\right)}, \end{equation}
and 2D network
\begin{equation} \frac {2\sqrt{3}}{12\ln{2}\,\,h^2\left(1+2\,\frac{\Gamma(\alpha-2)}{\Gamma(\alpha-1)}\right)}. \end{equation}

These lower bounds depend on $h$ and $\alpha$. They show that by lowering $h$, the ASE can be improved. This is very encouraging as network operators could theoretically provide services even with over-densified networks. In the extreme case when $h \to 0$, we have ASE approaching infinity. This is true since when $h=0$, the path loss model $l_1(r)$ reduces to $l_0(r)$, and SINR invariance property holds for this scenario. 

It is also worth noting here that the lower bounds also depend on $\gamma_0$ which is the minimum working SINR (see Definition \ref{definition_2}). The above results consider $\gamma_0=0$ setting. A higher $\gamma_0$ will lower the per cell rate and thus will reduce the ASE. With some sufficiently high value of $\gamma_0$, the ASE reduces to zero as indicated in \cite{ding2017performance}.

In the following, we show ASE performances for some specific scenarios using the above developed results. We derive exact ASE performance for 1D network when $\lambda \to \infty$ for various $\alpha$ values. 

\begin{proposition}
\label{proposition_12}
Average area spectral efficiency (in terms of $\mathrm{bits/Hz/m^2}$) of 1D network for $\alpha=\{2, 4, 6\}$ under $l_1(r)$ path-loss model when $\lambda \to \infty$ is

\begin{equation}\lim_{\lambda \to \infty} \lambda\,\tau^{\{1,1\}}(\lambda, 2)  = \frac{1}{\ln{2}\,\pi h},\end{equation}

\begin{equation}\lim_{\lambda \to \infty} \lambda\,\tau^{\{1,1\}}(\lambda, 4)  = \frac{2}{\ln{2}\,\pi h},\end{equation}

\begin{equation}\lim_{\lambda \to \infty} \lambda\,\tau^{\{1,1\}}(\lambda, 6)  = \frac{8}{3 \ln{2}\,\pi h}.\end{equation}

\end{proposition}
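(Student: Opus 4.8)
The plan is to read the limit off from the noise‑less $1$D coverage formula of Lemma~\ref{lemma_1} (the irregular‑network expression; for the regular deployment the same limit follows from Proposition~\ref{proposition_10a}) together with the rate identity of Definition~\ref{definition_2}, isolating the leading‑order term by zooming in on small SINR thresholds. First I would take $\gamma_0=0$ and $\sigma^2=0$, substitute $T=e^t-1$ in \eqref{tau_generic}, and write
\begin{equation*}
\lambda\,\tau^{\{1,1\}}(\lambda,\alpha)=\frac{\lambda}{\ln 2}\int_0^\infty\frac{p_c^{\{1,1\}}(T,\lambda,\alpha)}{1+T}\,dT ,
\end{equation*}
then insert the noise‑less form of \eqref{lemma1_pc_1}, namely $p_c^{\{1,1\}}(T,\lambda,\alpha)=2\int_0^\infty e^{-2k}e^{-2\xi(T,\alpha,k)}\,dk$ with $\xi$ as in Lemma~\ref{lemma_1}. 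By the density‑countering remark behind Theorem~\ref{theorem_2} this integrand depends on $\lambda$ and $h$ only through $\kappa\overset{\scriptscriptstyle\Delta}{=}\lambda h$ (it enters $\xi$ only via the ratio $(t^2+\kappa^2)/(k^2+\kappa^2)$), so the change of variable $T=\sigma/\kappa$ reduces the task to evaluating
\begin{equation*}
\lambda\,\tau^{\{1,1\}}(\lambda,\alpha)=\frac{1}{h\ln 2}\int_0^\infty\frac{p_c^{\{1,1\}}(\sigma/\kappa,\lambda,\alpha)}{1+\sigma/\kappa}\,d\sigma ,\qquad \kappa=\lambda h\to\infty .
\end{equation*}

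The core step is the pointwise limit of the rescaled integrand. For fixed $k,\sigma$, substituting $t=\kappa\tau$ in $\xi(\sigma/\kappa,\alpha,k)$ and using $(t^2+\kappa^2)/(k^2+\kappa^2)\to 1+\tau^2$ as $\kappa\to\infty$ gives
\begin{equation*}
\xi(\sigma/\kappa,\alpha,k)\;\longrightarrow\;\sigma B_\alpha ,\qquad B_\alpha\overset{\scriptscriptstyle\Delta}{=}\int_0^\infty\frac{d\tau}{(1+\tau^2)^{\alpha/2}} ,
\end{equation*}
a limit independent of $k$; dominated convergence inside the $k$‑integral then gives $p_c^{\{1,1\}}(\sigma/\kappa,\lambda,\alpha)\to 2\int_0^\infty e^{-2k}e^{-2\sigma B_\alpha}\,dk=e^{-2\sigma B_\alpha}$. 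A further dominated‑convergence step in $\sigma$ yields $\int_0^\infty\frac{p_c^{\{1,1\}}(\sigma/\kappa,\lambda,\alpha)}{1+\sigma/\kappa}\,d\sigma\to\int_0^\infty e^{-2\sigma B_\alpha}\,d\sigma=\tfrac{1}{2B_\alpha}$, so that $\lim_{\lambda\to\infty}\lambda\,\tau^{\{1,1\}}(\lambda,\alpha)=\tfrac{1}{2B_\alpha h\ln 2}$.

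It then remains to evaluate $B_\alpha$ and specialise. With $\tau=\tan\theta$ one has $B_\alpha=\int_0^{\pi/2}\cos^{\alpha-2}\theta\,d\theta=\tfrac{\sqrt\pi}{2}\,\Gamma(\tfrac{\alpha-1}{2})/\Gamma(\tfrac{\alpha}{2})$, so $B_2=\pi/2$, $B_4=\pi/4$ and $B_6=3\pi/16$; inserting these into $\tfrac{1}{2B_\alpha h\ln 2}$ reproduces exactly the three claimed values $\tfrac{1}{\ln 2\,\pi h}$, $\tfrac{2}{\ln 2\,\pi h}$ and $\tfrac{8}{3\ln 2\,\pi h}$.

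I expect the principal difficulty to be making the two limit interchanges rigorous — in particular exhibiting, for the $\sigma$‑integral, a majorant of $\frac{p_c^{\{1,1\}}(\sigma/\kappa,\lambda,\alpha)}{1+\sigma/\kappa}$ that is integrable on $(0,\infty)$ uniformly in $\kappa$. For moderate $\sigma$ a lower bound on $\xi$ of the form $\xi(\sigma/\kappa,\alpha,k)\ge c_\alpha\sigma-\varepsilon_\kappa(\sigma)\,k$ (with $\varepsilon_\kappa\to 0$ on bounded $\sigma$‑ranges), obtained by replacing $k^2+\kappa^2$ with $\kappa^2$ and rescaling, turns the $e^{-2k}$ factor into an integrable exponential in $k$ and yields a clean exponential $\sigma$‑majorant; for the tail $\sigma\gg\kappa$, where the prefactor $(1+\sigma/\kappa)^{-1}$ decays only like $\kappa/\sigma$ and is not integrable at infinity, one instead needs the genuine growth of $\xi$ in the SINR threshold (a stretched‑exponential bound of the type $e^{-2\xi}\le e^{-c'_\alpha(\sigma/\kappa)^{1/\alpha}}$). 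Once the majorant is in place the rest — the $k$‑integration, the Beta integral, and the arithmetic for $\alpha\in\{2,4,6\}$ — is routine. As a consistency check, a soft probabilistic argument gives the same constant: as $\lambda\to\infty$ the serving distance vanishes so $l_1(r)\to h^{-\alpha}$, while the aggregate interference concentrates about $\lambda\mu^{-1}\!\int_{\mathbb{R}}(h^2+v^2)^{-\alpha/2}\,dv=2\lambda B_\alpha h^{1-\alpha}/\mu$, whence $\tau^{\{1,1\}}\sim(\ln 2)^{-1}\mathbb{E}[SINR]\sim(2\lambda B_\alpha h\ln 2)^{-1}$.
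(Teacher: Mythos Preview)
Your argument is correct and the arithmetic checks out: the scaling $T=\sigma/\kappa$ together with $t=\kappa\tau$ reduces $\xi$ to $\sigma B_\alpha$ with $B_\alpha=\int_0^\infty(1+\tau^2)^{-\alpha/2}\,d\tau$, and $1/(2B_\alpha h\ln 2)$ reproduces the three constants. The route, however, is genuinely different from the paper's. The paper does \emph{not} work with the irregular coverage integral of Lemma~\ref{lemma_1}; instead it passes, via Proposition~\ref{proposition_10a} and Lemma~\ref{lemma_appendix}, to the \emph{regular} 1D network without fading with the user collocated with its serving BS, writes the ASE as $\Upsilon^{-1}\log_2\!\bigl(1+h^{-\alpha}/(\mu\sigma^2+2\Upsilon^{-\alpha}\sum_{i\ge 1}(i^2+(h/\Upsilon)^2)^{-\alpha/2})\bigr)$, evaluates the lattice sum in closed form using the identity \eqref{eqn_dbl_z} (for $\alpha=2$ this is the classical $\coth$ sum), and then applies L'H\^opital. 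Your method has the advantage that it yields a single formula $1/(2B_\alpha h\ln 2)$ valid for every $\alpha>1$, not just even integers where \eqref{eqn_dbl_z} is available, and it stays entirely within the Rayleigh/PPP model rather than invoking the chain Proposition~\ref{proposition_10a} $\to$ Lemma~\ref{lemma_appendix} $\to$ Lemma~\ref{lemma_appendix2}. The paper's method, conversely, avoids the delicate dominated-convergence step you flag: once the series is summed, the limit is a one-variable L'H\^opital computation with no uniform majorant to produce. Your sketch for the $\sigma$-majorant is reasonable (the tail bound $e^{-2\xi}\le e^{-c'_\alpha T^{1/\alpha}}$ follows from $\xi(T,\alpha,k)\ge\int_k^\infty\frac{dt}{1+T^{-1}(1+t^2/\kappa^2)^{\alpha/2}}$ after restricting to $t\le\kappa T^{1/\alpha}$, say), but this justification is the part that would need to be written out carefully in a full proof.
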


\begin{proof} Proof is given in Appendix \ref{proposition_12_proof}.\end{proof}

For 2D network, as a closed form expression for Laplace transform of $I_r(s)$ is unavailable, we seek to obtain a tighter bound for specific $\alpha$ values as follows.
\begin{proposition}
\label{proposition_2d_2}
The upper and lower bounds on the average area spectral efficiency (in terms of $\mathrm{bits/Hz/m^2}$) of 2D network for $\alpha=4$ under $l_1(r)$ path-loss model, when $\lambda \to \infty$ is

\begin{equation} \frac{2\sqrt{3}}{9\ln{2}\,h^2} \geq \lim_{\lambda \to \infty} \lambda\,\tau^{\{2,1\}}(\lambda, 4)  \geq \frac{2\sqrt{3}}{12\ln{2}\,h^2}. \end{equation} 

\end{proposition}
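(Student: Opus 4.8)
The plan is to derive the ASE in the limit $\lambda\to\infty$ directly, using the countering condition (Theorem~\ref{theorem_8}) together with the definition of average rate (Definition~\ref{definition_2}). The key observation is that setting $\lambda h^2 = c$ leaves the coverage probability $p_c^{\{2,1\}}(T,\lambda,4)$ constant, so by Theorem~\ref{theorem_8} and Definition~\ref{definition_2} the per-cell rate $\tau^{\{2,1\}}(\lambda,4)$ along the locus $\lambda h^2 = c$ equals a constant $\tau^*(c)$ that depends only on $c$; consequently $\lambda\,\tau^{\{2,1\}}(\lambda,4) = \frac{c}{h^2}\,\tau^*(c)/c = \frac{1}{h^2}\,\tau^*(\lambda h^2)$. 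First I would fix $h$ and take $\lambda\to\infty$, which drives $c = \lambda h^2 \to\infty$, so what is really needed is the behaviour of $\tau^*(c)$ as $c\to\infty$; equivalently, the limiting coverage probability of the noise-less 2D regular network under $l_1(r)$ when the scaled height parameter $c$ is large.

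The second step is to bound the limiting integrand. From Theorem~\ref{theorem_8}, the noise-less 2D coverage probability along the countering locus is
\begin{equation*}
p_c^{\{2,1\}} = 2\sqrt{3}\int_0^{\pi/3}\!\!\int_0^{\frac{1}{2\sin(\theta+\pi/3)}} \prod_{(n,m)\neq(0,0)} \frac{1}{1 + T\,\beta(n,m)}\, dt\, d\theta,
\end{equation*}
with $\beta(n,m)$ as defined there. For $\alpha=4$, $\beta(n,m) = \big(\frac{t+\frac{\sqrt3}{2}c}{D(n,m) + \frac{\sqrt3}{2}c}\big)^{2}$ where $D(n,m)$ abbreviates the squared distance $(m+n/2-\sqrt t\cos\theta)^2 + (n\sqrt3/2-\sqrt t\sin\theta)^2$. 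The plan is to estimate the infinite product $\prod (1+T\beta)^{-1}$ from above and below for large $c$. For the upper bound on ASE (lower bound on how fast $p_c$ decays is not needed; rather I want an upper bound on $c\cdot p_c$), I would use $1+T\beta \ge 1 + T\big(1 - \frac{D(n,m)-t}{D(n,m)+\frac{\sqrt3}{2}c}\big)^2$ and, since the deviations of $\beta$ from $1$ are $O(1/c)$, a Taylor expansion of $-\log(1+T\beta)^{-1}$ around $\beta=1$ gives $\sum_{(n,m)\neq(0,0)} \log(1+T\beta) \approx \log(1+T)\cdot(\text{count}) + \frac{T}{1+T}\sum \frac{D(n,m)-t}{\frac{\sqrt3}{2}c} + \cdots$; but the count is infinite, so instead I would factor $p_c = \big(\prod (1+T)^{-1}\big)\cdot\big(\prod \frac{1+T}{1+T\beta}\big)$ — the first factor is $0$, so this naive split fails, which signals that the correct scaling is different. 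The right move is to recognize that for the product to have a nonzero limit after multiplying by $c$, only finitely many terms can contribute at leading order: the terms with $D(n,m) = O(c)$ are close to $1$, but the terms with $D(n,m) = O(1)$ have $\beta = O(1/c) \to 0$, so $1+T\beta \to 1$ individually, yet their product over the infinitely many lattice points with $D = O(1)$... converges because $\sum_{(n,m)} \beta(n,m) \sim \frac{2}{\sqrt3 c}\sum_{(n,m)\neq 0} \frac{t + O(1)}{1} $ diverges. So in fact $p_c\to 0$ and the finite limit emerges only after multiplying by $\lambda$.

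Therefore the cleaner route, and the one I would actually pursue, is to mimic whatever Appendix-level argument proves Theorem~\ref{theorem_aase}: bound $\lambda\,\tau^{\{2,1\}}(\lambda,4)$ above and below by comparing the interference product to a Ginibre/Poisson-type integral. Concretely, for the \textbf{lower bound}, use the general lower bound $\frac{2\sqrt3}{12\ln2\,h^2(1+2\Gamma(\alpha-2)/\Gamma(\alpha-1))}$ already displayed after Theorem~\ref{theorem_aase}, which at $\alpha=4$ gives exactly $\frac{2\sqrt3}{12\ln2\,h^2}$ since $\Gamma(2)/\Gamma(3) = 1/2$. For the \textbf{upper bound}, I would bound the infinite lattice product of the regular network from above by retaining only the six nearest neighbours (or, more robustly, by the observation that each factor $\le 1$) — but a bare "$\le 1$" gives only the crude bound; to get the claimed $\frac{2\sqrt3}{9\ln2\,h^2}$ I need a sharper inequality. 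The natural candidate: keep the product over a sub-lattice whose interference sum can be evaluated, or Jensen-bound $\mathbb E[\ln(1+\mathrm{SINR})]$ using $\ln(1+x)\le x$ together with an explicit evaluation of $\mathbb E[\mathrm{SINR}]$ for the 2D regular network at $\alpha=4$, where $\sum_{(n,m)\neq 0} \|v_{n,m}\|^{-4}$ over the hexagonal lattice is a known Epstein zeta value; the constant $9$ versus $12$ should come out of that zeta evaluation combined with the $\frac{\sqrt3}{2}$ normalisation. The main obstacle I anticipate is precisely pinning down this upper-bound constant: showing the hexagonal-lattice interference sum (after the $c\to\infty$ rescaling that flattens all path-loss terms toward the "floor" value $h^{-\alpha}$) contributes exactly the factor needed to turn the generic $(1+2\Gamma(\alpha-2)/\Gamma(\alpha-1))$ denominator into a smaller one, reflecting that the regular network has strictly less variable (hence, after Jensen, more favourable) interference than the PPP. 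Once both bounds are in place the proposition follows by sandwiching; I would defer the detailed computation to an appendix as the paper does for Theorem~\ref{theorem_aase} and Proposition~\ref{proposition_12}.
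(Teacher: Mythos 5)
Your proposal contains a concrete error in the lower bound and leaves the upper bound essentially undetermined, so it does not establish the proposition as stated.

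For the lower bound: the general 2D bound displayed after Theorem~\ref{theorem_aase} is $\frac{2\sqrt{3}}{12\ln 2\, h^2\left(1+2\,\Gamma(\alpha-2)/\Gamma(\alpha-1)\right)}$, and at $\alpha=4$ the bracket equals $1+2\cdot\frac{\Gamma(2)}{\Gamma(3)}=1+2\cdot\frac{1}{2}=2$, so that bound evaluates to $\frac{2\sqrt{3}}{24\ln 2\, h^2}$ --- \emph{half} of the $\frac{2\sqrt{3}}{12\ln 2\, h^2}$ claimed in the proposition. The factor you lose is exactly why the proposition exists as a separate ``tighter bound for specific $\alpha$'': the paper recovers the constant $12$ by keeping $l_1$ exactly (rather than majorizing it with $l_2(r)=(\max(h,r))^{-\alpha}$ as in the proof of Theorem~\ref{theorem_aase}) inside the ring-sum upper bound on the hexagonal-lattice interference, $I_0 < 6\,l_1(\Upsilon)+\sum_{k\ge 2}6k\,l_1(\Upsilon\sqrt{3}k/2)$, which for $\alpha=4$ admits the closed form $\sum_{k>n}k/(k^2+c^2)^2=\frac{j}{4c}\left(\psi^{(1)}(n+jc)-\psi^{(1)}(n-jc)\right)$ and is then expanded asymptotically.

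For the upper bound: your Jensen/Epstein-zeta suggestion targets the wrong object. In the limit the SINR tends to zero, so $\log_2(1+\mathrm{SINR})\sim\mathrm{SINR}/\ln 2$ and the constant is fixed by the asymptotics of the lattice interference, which is dominated by the $\Theta(\lambda h^2)$ nearest base stations whose path loss saturates near $h^{-4}$ --- not by the far-field lattice sum $\sum\|v\|^{-4}$. The mechanism that actually produces the constant $9$ is the companion ring-sum \emph{lower} bound on the interference, $I_0>\sum_{k\ge 1}6k\,l_1(\Upsilon k)$ (all $6k$ ring-$k$ interferers pushed to the maximal ring distance $\Upsilon k$), again evaluated via the trigamma identity; its asymptotics give $I_0\sim 3/(\Upsilon^2h^2)$ versus $\sim 4/(\Upsilon^2h^2)$ for the other ring sum, whence $\frac{2\sqrt{3}}{9}$ versus $\frac{2\sqrt{3}}{12}$. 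You would also still need the reductions that make any of this applicable --- to the no-fading network (Lemma~\ref{lemma_appendix}), to the user collocated with its serving BS (Lemma~\ref{lemma_appendix2}), and from regular to irregular deployments (Proposition~\ref{proposition_10a}) --- which you only gesture at. Your opening reformulation via the countering condition, $\lambda\tau^{\{2,1\}}=h^{-2}\tau^*(\lambda h^2)$, is a nice observation, but as you note yourself it stalls precisely where the quantitative work begins.
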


\begin{proof} Proof is given in Appendix \ref{proposition_2d_2_proof}.\end{proof}

\section{Conclusion}

This paper investigated the impact of the relative antenna height between BS and UE antennas on the performance of UDNs. We showed that for any nonzero relative antenna height the highly desirable SINR invariance property does not hold, indicating that the UDN performance is dependant on BS density. Moreover, the per cell average achievable rate performance of UDN decays to zero, as BS density increases. In order to manage this decay, we showed that decreasing the relative antenna heights across the network can counter the decay of per cell average achievable rate. We explicitly derived the relationship between BS density and relative antenna height showing how this can be achieved. Interestingly, appropriate adjustment of BS antenna heights enables the network to retain SINR invariance property. 

We further studied the ASE performance of the network. Despite the pessimistic conclusion related to the per cell performance shown in the literature, in this paper we showed that ASE does not necessarily decay to zero as BS density approaches infinity. A non-negligible ASE can be achieved which is dependent of the path-loss exponent and relative antenna height. This result is optimistic as it indicates that theoretically even in a highly over-densified network, the network can continue to serve with a reduced performance. 

We finally investigated the deployment gain which studies the performance improvement due to careful site planning in UDNs. We showed that, in general, we can always achieve a performance gain when the deployment of BSs follows a regular pattern. This gain remains constant as BS density increases when SINR invariance property holds and deteriorates to zero otherwise. The insights provided in this work may help network operators to optimize their investments in the infrastructure and identify when further densification or careful site selection may no longer be beneficial due to low cost-effectiveness.


%
\appendices

\section{Proof of Theorem \ref{theorem_aase}}
\label{theorem_aase_proof}

We start by stating the following Lemmas which are then used in the proof of the theorem.

\begin{lemma}
\label{lemma_appendix}
Coverage probability performance of regular and irregular networks under $l_1(r)$ with Rayleigh fading converge to that of networks without fading as $\lambda \to \infty$.
\end{lemma}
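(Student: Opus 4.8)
The plan is to show that, under $l_1(r)$, the densification limit $\lambda\to\infty$ forces both the useful-signal path loss and the (fluctuation-normalized) aggregate interference to become deterministic, so that the Rayleigh fades play no asymptotic role and the coverage probability collapses onto the no-fading one. I would first condition on the geometry: for the irregular networks, condition on the PPP $\Phi$; for the regular networks, condition on the typical user's position inside its cell. In all four cases let $r$ be the serving distance and $\{r_i\}$ the interferer distances. Since the nearest-BS distance satisfies $r\to 0$ (almost surely for the PPP, deterministically for the lattices because $\Upsilon\to 0$), we get $l_1(r)=(h^2+r^2)^{-\alpha/2}\to h^{-\alpha}$, a strictly positive constant, so the desired-signal path loss is asymptotically deterministic.

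Next I would control the interference. Writing $I_r=\sum_i g_i\, l_1(r_i)$ with $g_i$ i.i.d.\ exponential of mean $1$ and $I_r^{\mathrm{det}}=\sum_i l_1(r_i)$ the corresponding no-fading interference, Campbell's theorem (for the PPP) and a Riemann-sum estimate (for the lattices) give $\mathbb{E}[I_r^{\mathrm{det}}]\asymp c_1\,\lambda\, h^{d-\alpha}\to\infty$ and, using $\mathrm{var}(\exp(1))=1$, $\mathrm{var}(I_r)\asymp c_2\,\lambda\, h^{d-2\alpha}$; the standing assumptions $\alpha>1$ ($d=1$) and $\alpha>2$ ($d=2$) are exactly what make the relevant spatial integrals $\int(h^2+\|x\|^2)^{-\alpha/2}\,dx$ converge. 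Hence the coefficient of variation of $I_r$ (and, by Campbell's variance formula, of $I_r^{\mathrm{det}}$ itself in the irregular case) is $\asymp(\lambda h^d)^{-1/2}\to 0$, so by Chebyshev $I_r/I_r^{\mathrm{det}}\to 1$ in probability and $I_r\to\infty$ in probability.

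Putting these together finishes the argument. In the no-fading model, coverage is the event $l_1(r)>T\,(I_r^{\mathrm{det}}+\sigma^2)$, i.e.\ $I_r^{\mathrm{det}}/l_1(r)<1/T$; since $I_r^{\mathrm{det}}/l_1(r)\to\infty$ in probability, $p_c^{\mathrm{nf}}\to 0$. In the Rayleigh model, conditioning on everything except the serving fade $g_0$ gives $p_c^{\mathrm{Rayleigh}}=\mathbb{E}\big[e^{-\mu T(I_r+\sigma^2)/l_1(r)}\big]$; the exponent tends to $-\infty$ in probability and the integrand is bounded by $1$, so bounded convergence yields $p_c^{\mathrm{Rayleigh}}\to 0$. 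Therefore $p_c^{\mathrm{Rayleigh}}-p_c^{\mathrm{nf}}\to 0$, i.e.\ the two coverage performances converge, which is the claim (and this recovers Theorems \ref{theorem_1}, \ref{theorem_4}, \ref{theorem_7} in a unified way); the regular and irregular cases differ only in replacing PPP sums by lattice sums, which does not change any of the scalings above.

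I expect the main obstacle to be making the interference-concentration step fully rigorous: one must rule out a pathological far-field contribution, which I would do by a truncation argument, splitting the interferers into those within distance $Mh$ of the user (a growing number, $\asymp\lambda h^d$, each with path loss $\asymp h^{-\alpha}$, giving the dominant, concentrating term) and those beyond (handled by a crude tail bound that is $o(1)$ relative to the main term after letting $M\to\infty$), together with the usual care in interchanging $\lambda\to\infty$ with the expectations. A secondary but important point, if the quantitative form is to be fed into Theorem \ref{theorem_aase}, is that the above convergence should be made uniform in $T$ over the range $T=e^{t}-1$ relevant to $\tau$; establishing that uniformity, rather than merely the pointwise-in-$T$ limit, is the delicate part.
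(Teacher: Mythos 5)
Your argument is internally consistent, but it proves the lemma only in a degenerate sense and misses the idea the paper's proof actually rests on. What you establish is that, for each fixed $T$, both $p_c^{\mathrm{Rayleigh}}(T,\lambda)$ and $p_c^{\mathrm{nf}}(T,\lambda)$ tend to $0$, hence their difference tends to $0$. That is already implied by Theorems \ref{theorem_1}, \ref{theorem_4} and \ref{theorem_7} (together with their no-fading analogues) and carries no information about \emph{how} the two curves approach zero. The lemma, however, is invoked in the proof of Theorem \ref{theorem_aase} precisely to transfer the ASE limit $\lim_{\lambda\to\infty}\lambda\,\tau$ from the no-fading network to the Rayleigh-fading one, and $\tau=\tfrac{1}{\ln 2}\int_{0}^{\infty}p_c(e^{t}-1,\lambda,\alpha)\,dt$ is an integral of $p_c$ over all thresholds multiplied by the diverging factor $\lambda$. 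Pointwise-in-$T$ vanishing of both coverage probabilities says nothing about whether $\lambda\tau^{\mathrm{Rayleigh}}$ and $\lambda\tau^{\mathrm{nf}}$ share a limit; one needs $p_c^{\mathrm{Rayleigh}}-p_c^{\mathrm{nf}}$ to be $o(1/\lambda)$ in a sense uniform enough to survive the integration over $T$. You flag exactly this in your closing paragraph, but flagging it is not closing it: as written, the substantive content of the lemma remains unproved.

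The paper's route is entirely different and sidesteps this issue. It invokes the equivalence (Baccelli et al., building on Zhang--Andrews and Madhusudhanan et al.) that a PPP network with i.i.d.\ fading at density $\lambda'$ has \emph{exactly} the same SINR distribution as a no-fading network at density $\lambda=\Gamma(1+d/\alpha)\,\lambda'$. Because this is an identity at every finite density, it transfers to any functional of the SINR distribution --- coverage at every $T$, the rate $\tau$, and the ASE --- with no uniformity questions; the only asymptotic step left is to argue that the constant density rescaling becomes immaterial under the bounded path loss $l_1$ as $\lambda\to\infty$, and to pass to lattices via Proposition \ref{proposition_10a}. If you want to salvage your concentration-based approach, the statement worth proving is not $p_c\to 0$ but a conditional one: given the geometry, the serving path loss tends to $h^{-\alpha}$ and $I_r/\mathbb{E}[I_r\mid\Phi]\to 1$, so that in the low-SINR regime $\lambda\,\mathbb{E}[\log_2(1+\mathrm{SINR})]\approx \lambda\,\mathbb{E}[g_0]\,h^{-\alpha}/(\ln 2\;\mathbb{E}[I_r])$ with $\mathbb{E}[g_0]=1$, which is the same leading term with or without the serving fade. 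That refinement, not the pointwise limit, is what would actually feed into Theorem \ref{theorem_aase}.
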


\begin{proof}
Consider an irregular network of density $\lambda'$ with Rayleigh fading and another irregular network of density $\lambda$ without fading. Using findings in \cite{zhang2014performance, madhusudhanan2012downlink}, Baccelli et al. show in \cite{baccelli2015correlated} that distributions of SINR for both networks are the same if $\lambda' = \lambda/\Gamma(\delta + 1)$ where $\delta = \frac{d}{\alpha}$. Since the distributions of SINR are the same, both networks give the same coverage probability performance. Given that the path-loss model $l_1(r)$ is bounded, with $\lambda \to \infty$, the coverage probability of the network with Rayleigh fading simply converges to that of the network without fading as $\lambda’ \to \infty$.

Using Proposition \ref{proposition_10a}, the above conclusion can be extended to the case of regular networks, thus concluding the proof.
\end{proof}

\begin{lemma}
\label{lemma_appendix2}
Given a path-loss model $l_1(r)$ and $h>0$, the ASE performance of a network when $\lambda \to \infty$ converges to the ASE performance of a network where each user is collocated with its serving BS.
\end{lemma}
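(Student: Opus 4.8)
The plan is to work from the representation $\mathrm{ASE}(\lambda)=\lambda\,\tau^{\{d,1\}}(\lambda,\alpha)=\frac{\lambda}{\ln 2}\,\mathbb{E}\!\left[\ln(1+\mathrm{SINR})\right]$, with $\mathrm{SINR}=g\,l_1(r)/(I_r+\sigma^2)$, $r=\|u_0-b_o\|$ the serving distance and $I_r=\sum_{x\in\Phi\setminus b_o}g_x\,l_1(\|x-u_0\|)$, and to couple this network with the collocated one on a common probability space: keep the BS process $\Phi$, the serving BS $b_o$, and the marks $\{g,g_x\}$ fixed, and only move the user from $u_0$ to $b_o$, so that $\mathrm{SINR}_{\mathrm{coll}}=g\,l_1(0)/(\widetilde I_r+\sigma^2)$ with $\widetilde I_r=\sum_{x\in\Phi\setminus b_o}g_x\,l_1(\|x-b_o\|)$. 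The goal is to show $\mathrm{ASE}(\lambda)\to\lim_{\lambda\to\infty}\frac{\lambda}{\ln 2}\,\mathbb{E}[\ln(1+\mathrm{SINR}_{\mathrm{coll}})]$.

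First I would prove that the serving distance vanishes, $r\to 0$ as $\lambda\to\infty$: for the regular deployments $r$ is deterministically bounded by the shrinking cell size $O(\Upsilon)$, and for the irregular deployment $r$ has the explicit density $f_r$ with scale $\lambda^{-1/d}$, so $r\to 0$ almost surely (or in probability, which suffices). Continuity of $l_1$ at $0$ then gives $l_1(r)/l_1(0)\to 1$ for the desired signal. For the interference I would show $I_r/\widetilde I_r\to 1$ by writing $I_r-\widetilde I_r=\sum_x g_x\bigl(l_1(\|x-u_0\|)-l_1(\|x-b_o\|)\bigr)$, bounding each term by $g_x\,r\sup_\rho|l_1'(\rho)|$ with $|l_1'(\rho)|\le\alpha(h^2+\rho^2)^{-(\alpha+1)/2}$, and splitting the interferers into a near field (the $O(1)$ interferers within $O(\Upsilon)$, where $l_1'$ is $O(\Upsilon h^{-\alpha-2})$, contributing $O(\lambda^{-2/d})$) and a far field (where $\sum_x\|x-b_o\|^{-(\alpha+1)}\asymp\lambda$ by the regime condition $\alpha>d$, contributing $O(r\lambda)=O(\lambda^{1-1/d})$). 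Since $\widetilde I_r\asymp\lambda$, both contributions are $o(\widetilde I_r)$, hence $I_r/\widetilde I_r\to 1$ and, combining with the signal, $\mathrm{SINR}/\mathrm{SINR}_{\mathrm{coll}}\to 1$ pathwise (under the natural coupling across $\lambda$); the $+\sigma^2$ in both denominators is harmless since $\widetilde I_r\to\infty$.

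Next I would pass from SINR to rate. By Theorems~\ref{theorem_1}, \ref{theorem_4} and \ref{theorem_7} the SINR of all the considered deployments tends to $0$, so $\ln(1+\mathrm{SINR})=\mathrm{SINR}\,(1+o(1))$ and likewise for $\mathrm{SINR}_{\mathrm{coll}}$; together with the previous step this gives $\ln(1+\mathrm{SINR})/\ln(1+\mathrm{SINR}_{\mathrm{coll}})\to 1$. It remains to promote this to $\lambda\,\mathbb{E}[\ln(1+\mathrm{SINR})]\to\lambda\,\mathbb{E}[\ln(1+\mathrm{SINR}_{\mathrm{coll}})]$, i.e. to interchange the limit, the multiplication by the diverging factor $\lambda$, and the expectation. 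I would do this with a uniform-integrability argument: conditioning on $r$ (whose law is explicit in every case) every interferer lies at distance $\ge r$, so $\lambda\ln(1+\mathrm{SINR})\le\lambda\,\mathrm{SINR}\le\lambda\,g\,l_1(0)/\bigl(g_{x^\star}l_1(r)\bigr)$ with $x^\star$ the nearest interferer, and the conditional expectation of this bound is finite uniformly in $\lambda$ (the extra $\lambda$ being absorbed by the linear growth of the interference floor); one may also first reduce to the no-fading network via Lemma~\ref{lemma_appendix}, which makes the interference deterministic given $\Phi$ and simplifies the estimate.

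The main obstacle is precisely this last step: the pointwise ratio convergence is elementary, but producing a dominating function for $\lambda\ln(1+\mathrm{SINR})$ that is integrable uniformly in $\lambda$ is delicate for the irregular network, since an atypically small serving distance $r$ simultaneously shrinks the leading interference term, so the heavy tail of $\mathrm{SINR}$ must be controlled jointly with the Rayleigh-type law of $r$; conditioning on $r$ and using that every interferer sits at distance at least $r$ is the device that closes this gap.
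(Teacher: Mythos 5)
Your pathwise part of the argument (serving distance $r\to 0$, the near/far-field split giving $I_r/\widetilde I_r\to 1$, and the linearization $\ln(1+\mathrm{SINR})\sim\mathrm{SINR}$) is sound and in the spirit of what the paper does, but the step you yourself flag as the main obstacle --- interchanging $\lim_{\lambda\to\infty}$, the factor $\lambda$, and $\mathbb{E}[\cdot]$ --- is where the proof actually breaks, and the device you offer does not close it. First, the proposed dominating function is not a valid bound: from ``every interferer lies at distance $\geq r$'' you get $l_1(\|x^\star-u_0\|)\leq l_1(r)$ for the nearest interferer $x^\star$, which is an \emph{upper} bound on its contribution, so you cannot conclude $I_r+\sigma^2\geq g_{x^\star}l_1(r)$. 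Second, even granting a bound of the form $\lambda\,g\,l_1(0)/(g_{x^\star}\cdot\mathrm{const})$, its conditional expectation is infinite, since for independent exponentials $\mathbb{E}[g/g_{x^\star}]=\mathbb{E}[g]\,\mathbb{E}[1/g_{x^\star}]=\infty$; the logarithm is precisely what tames this tail, and discarding it via $\ln(1+\mathrm{SINR})\leq\mathrm{SINR}$ before bounding the denominator by a \emph{single} interferer destroys integrability. Third, a single interferer cannot absorb the diverging factor $\lambda$: its received power stays bounded by $g_{x^\star}h^{-\alpha}$ as $\lambda\to\infty$, whereas the absorption of $\lambda$ requires the \emph{aggregate} interference, which grows linearly in $\lambda$ because $\Theta(\lambda R^d)$ BSs each contribute at least $g_x l_1(R)$. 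So the uniform-integrability step would have to be rebuilt around the full sum $\widetilde I_r$ with a lower bound that holds uniformly over realizations, which is exactly the delicate joint control of the $r$-law and the interference tail you identify but do not supply.

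The paper sidesteps all of this, and the escape hatch you mention in passing is in fact its entire route: Lemma~\ref{lemma_appendix} first removes fading (so the interference is deterministic given the BS locations and no $g/g_{x^\star}$ ratio ever appears), and Lemma~\ref{lemma_appendix2} is then proved only for the \emph{regular} lattice, where the per-cell rate is a plain integral of $\log_2(1+l(x)/(\sum_i l(i\Upsilon\pm x)+\mu\sigma^2))$ over the normalized cell $k=2x/\Upsilon\in[0,1]$; monotone convergence applies directly to this explicit, bounded integrand, and the collocated limit drops out because the offset $\Upsilon k/2$ is confined to half an inter-site distance. The extension to irregular networks and to Rayleigh fading is then delegated to Proposition~\ref{proposition_10a} and Lemma~\ref{lemma_appendix}, rather than handled by a direct coupling as you attempt. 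Your approach is more ambitious (it would prove the statement for the fading irregular network in one pass), but as written it is incomplete at the decisive step; reducing to the no-fading regular case first, as the paper does, is the cleanest way to make it rigorous.
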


\begin{proof} Let $\hat{\tau}^{\{d,i\}}(\lambda, \alpha)$ be the average achievable rate per cell of a network without fading (i.e. $g_x \equiv 1$). Let $\hat{\tau}^{\{d,i\}}_0(\lambda, \alpha)$ be the achievable rate per cell of a network without fading given both a typical user and its serving BS are located in the origin.

For 1D regular network, $\hat{\tau}^{\{1,i\}}(\lambda, \alpha)$ can be expressed as
\begin{equation}
\hat{\tau}^{\{1,i\}}(\lambda, \alpha) = \int_0^{\frac{\Upsilon}{2}}\frac{2}{\Upsilon} \cdot \log_2 \bigg(1+\frac{l(x)}{\sum_{i=1}^{\infty}l(i\,\Upsilon + x) + \sum_{i=1}^{\infty}l(i\,\Upsilon - x) + \mu \sigma^2}\bigg)dx\end{equation}
where $\Upsilon = \frac{1}{\lambda}$. By substituting $x=k\frac{\Upsilon}{2}$ to the above result, we obtain
\begin{equation}
\label{lema_eq_app2}
\hat{\tau}^{\{1,i\}}(\lambda, \alpha) =  \int_0^1\log_2 \bigg(1 +\frac{l(\Upsilon\frac{k}{2})}{\sum_{i=1}^{\infty}l(\Upsilon i + \Upsilon \frac{k}{2}) + \sum_{i=1}^{\infty}l(\Upsilon i - \Upsilon \frac{k}{2}) + \mu \sigma^2}\bigg)dk.\end{equation}

Next, by assuming a bounded path-loss model function (i.e. $l(0) < \infty)$) and applying the Monotone Convergence Theorem, we derive the following limit expression
\begin{multline}
\label{eq_lemma_appendix2_1}
\lim_{\lambda\to \infty} \hat{\tau}^{\{1,i\}}(\lambda, \alpha) = \lim_{\lambda\to \infty} \int_0^1\log_2 \bigg(1+\frac{l(0)}{\sum_{i=1}^{\infty}l(\frac{i}{\lambda}) + \sum_{i=1}^{\infty}l(\frac{i}{\lambda}) + \mu \sigma^2}\bigg)dk \\=
\lim_{\lambda\to \infty} \log_2 \bigg(1+\frac{l(0)}{\sum_{i=1}^{\infty}l(\frac{i}{\lambda}) + \sum_{i=1}^{\infty}l(\frac{i}{\lambda}) + \mu \sigma^2}\bigg)
.\end{multline}

Now we express $\hat{\tau}^{\{d,i\}}_0(\lambda, \alpha)$ for 1D regular network under a bounded path-loss model function as
\begin{equation}
\label{eq_lemma_appendix2_2}
\hat{\tau}^{\{1,i\}}_0(\lambda, \alpha) = \log_2 \bigg(1+\frac{l(0)}{\sum_{i=1}^{\infty}l(i\,\Upsilon) + \sum_{i=1}^{\infty}l(i\,\Upsilon) + \mu \sigma^2}\bigg).\end{equation}

Using (\ref{eq_lemma_appendix2_1}) and (\ref{eq_lemma_appendix2_2}), we can easily obtain the following
\begin{equation} 
\lim_{\lambda\to \infty} \lambda \hat{\tau}^{\{d,i\}}(\lambda, \alpha) = \lim_{\lambda\to \infty} \lambda \hat{\tau}^{\{d,i\}}_0(\lambda, \alpha).
\end{equation}

We can repeat the above approach for 2D networks, thus concluding the proof.
\end{proof}

In the following we provide the proof of Theorem \ref{theorem_aase}.

\begin{proof}
We begin the proof by focusing on the 1D network. Using Lemma \ref{lemma_appendix2}, we write the limit of ASE for 1D regular network without fading when $\lambda \to \infty$ as
\begin{multline}
\label{generic_1d_limit}
\lim_{\lambda\to \infty} \lambda \hat{\tau}^{\{1,i\}}_0(\lambda, \alpha)  = \lim_{\lambda\to \infty} \lambda \log_2 \bigg(1+\frac{l(0)}{\sum_{i=1}^{\infty}l(i\,\Upsilon) + \sum_{i=1}^{\infty}l(i\,\Upsilon) + \mu \sigma^2}\bigg).
\end{multline}

To derive the lower bound of the above expression for $l_1(r)$, we introduce a new path-loss model function $l_2(r) = (max(h, r))^{-\alpha}$ where $l_2(r) \geq l_1(r)$, for all $r\geq 0, h \geq 0$. By substituting $l(r) = l_1(r)$ and $l(r) = l_2(r)$ in (\ref{generic_1d_limit}) it can be easily seen that
\begin{equation}\lim_{\lambda\to \infty} \lambda \hat{\tau}^{\{1,1\}}_0(\lambda, \alpha) \geq \lim_{\lambda\to \infty} \lambda \hat{\tau}^{\{1,2\}}_0(\lambda, \alpha). \end{equation}

The above inequality can be further simplified into the following expression 
\begin{multline}
\lim_{\lambda\to \infty} \lambda \hat{\tau}^{\{1,1\}}_0(\lambda, \alpha)  \geq \lim_{\lambda\to \infty} \lambda \log_2 \bigg(1+ \frac{h^{-\alpha}}{\mu \sigma^2 + 2 \big(n h^{-\alpha} + \Upsilon^{-\alpha}\zeta(\alpha, 1+n)\big)}\bigg),
\end{multline}
where
\begin{equation} \zeta(\alpha, 1+n) = \sum_{i=n+1}^{\infty}\frac{1}{i^{\alpha}} \end{equation}
is the Hurwitz Zeta function and $n = \lfloor \frac{h}{\Upsilon} \rfloor$.

We substitute $t = \frac{h}{\Upsilon}$ and without loss of generality, as $\lambda \to \infty$ (or $\Upsilon \to 0$), we can safely assume $t \in \mathbb{N}$ which leads to the following expression
\begin{equation}
\label{generic_1d_limit_2} 
\lim_{t\to \infty} \frac{t}{h}\log_2\left(1+ \frac{1} {\mu\sigma^2h^{\alpha}  + 2 \left(t\, + t^{\alpha} \zeta\left(\alpha, 1+ t \right)\right)}\right).
\end{equation}

By applying Watson's Lemma to the integral representation of the Hurwitz Zeta function we obtain the asymptotic expansion of $\zeta(\alpha, 1+ t)$ for $t \to \infty$ as

\begin{multline} \zeta(\alpha, 1 + t) = \frac{1}{\Gamma(\alpha)}  \int_{0}^{\infty} \frac{x^{\alpha - 1} e^{-(1+t)x}}{1-e^{-x}}dx  \sim \frac{\Gamma(\alpha - 1)}{\Gamma(\alpha)}(1+t)^{-\alpha + 1} + \frac{1}{2} (1+t)^{-\alpha} + O(t^{-\alpha - 1}).\end{multline}
By substituting $\zeta(\alpha, 1 + t)$ with it's asymptotic expansion and by applying the L'Hopital rule we can rewrite (\ref{generic_1d_limit_2}) and simplify it to obtain the following result

\begin{align} 
&\lim_{t \to \infty} \frac{t}{h} \log_2\Bigg[1+ \bigg(\mu\sigma^2\, h^{\alpha} + 2 \Big( t+ t^{\alpha}\frac{\Gamma(\alpha-1)}{\Gamma(\alpha)} \cdot (1+t)^{-\alpha+1} + \frac{1}{2}(1+t)^{-\alpha}\Big)\bigg)^{-1}\Bigg] & \nonumber \\ 
& = \lim_{k \to 0^+} \frac{1}{k h} \log_2\Bigg[1+  k\cdot\bigg(k \mu \sigma^2 h^{\alpha} + 2 \Big( 1+ \frac{\Gamma(\alpha-1)}{\Gamma(\alpha)}\cdot (1+k)^{-\alpha+1} + \frac{k}{2}(1+k)^{-\alpha}\Big)\bigg)^{-1}\Bigg] & \nonumber \\ 
& = \frac {\Gamma(\alpha)}{2 \ln2\,h\left(\Gamma(\alpha) + \Gamma(\alpha-1)\right)}.
\end{align}

From the above result it can be easily seen that the lower bound of ASE as $\lambda \to \infty$ for 1D regular network without fading under $l_1(r)$ is greater than zero and the lower bound depends on $h$ and $\alpha$. Next, using Proposition \ref{proposition_10a} and Lemma \ref{lemma_appendix} it can be shown that this result can be extended to 1D irregular network and 1D regular network with Rayleigh fading, thus concluding the proof for 1D network.

For 2D network, similar to 1D network, we use Lemma \ref{lemma_appendix2} and write the limit of ASE for 2D regular network without fading when $\lambda \to \infty$ as

\begin{equation} 
\label{generic_2d_limit}
\lim_{\lambda\to \infty} \lambda \hat{\tau}^{\{2, i\}}_0(\lambda, \alpha) 
 =\lim_{\lambda\to \infty} \lambda \log_2\left(1+ \frac{l(\|b_0\|)}{\mu\sigma^2 + \sum_{\Phi^{HEX} \setminus b_o} l(\|b_i\|)}\right).
\end{equation}

Comparing the limit of ASE under the path-loss models between $l_1(r)$ and $l_2(r)$, we get
\begin{equation} \lim_{\lambda\to \infty} \lambda \hat{\tau}^{\{2,1\}}_0(\lambda, \alpha) \geq \lim_{\lambda\to \infty} \lambda \hat{\tau}^{\{2,2\}}_0(\lambda, \alpha). \end{equation}

It can be also easily seen that $I_0 = \sum_{\Phi^{HEX} \setminus b_o} l(\|b_i\|)$ in (\ref{generic_2d_limit}) is the cumulated interference from BSs deployed according to $\Phi^{HEX}$ given that a typical user and its serving BS are located at the origin. As this expression does not have a closed-form we use the expressions for lower and upper bounds as presented in \cite{haenggi2009interference}
\begin{equation} \sum_{k=1}^{\infty} 6k \cdot l(\Upsilon k)< I_0 < 6\,l(\Upsilon) + \sum_{k=2}^{\infty} 6k \cdot l(\Upsilon \sqrt{3}/2\, k)\end{equation}
By substituting $l(r) = l_2(r)$ and using the definition of Hurwitz zeta function, we can simplify the expressions for lower and upper bound as presented below

\begin{equation*} \sum_{k=1}^{n_1} 6k \cdot h^{-\alpha} + 6\Upsilon^{-\alpha} \zeta(\alpha -1, 1+n_1) < I_0\end{equation*}
\begin{equation*} \sum_{k=1}^{n_2} 6k \cdot h^{-\alpha} + 6\Upsilon^{-\alpha}(\sqrt{3}/2)^{-\alpha}\zeta(\alpha-1,1+n_2) > I_0 \end{equation*}
where $n_1 = \lfloor \frac{h}{\Upsilon} \rfloor$, $n_2 = \lfloor \frac{2h}{\sqrt{3}\Upsilon} \rfloor$ and $n_1 \geq 0$, $n_2\geq 1$.

By taking the upper bound of the cumulated interference we derive the lower bound of $\lim_{\lambda \to \infty} \lambda \hat{\tau}^{\{2, 2\}}_0(\lambda, \alpha)$ as presented below 

\begin{multline} 
\label{proposition_2d_proof_eq}
\lim_{\lambda\to \infty} \lambda\,\hat{\tau}^{\{2,2\}}_0(\lambda, \alpha) > \\
\lim_{\Upsilon\to 0^+} \frac{2}{\Upsilon^2\sqrt{3}}\log_2\left(1+ \frac{h^{-\alpha}} {\mu\sigma^2  + 3(n_2+1)n_2h^{-\alpha} + 6 \Upsilon^{-\alpha} (\sqrt{3}/2)^{-\alpha} \zeta\left(\alpha-1, 1+ n_2\right)}\right) = \\ \lim_{\Upsilon\to 0^+} \frac{2}{\Upsilon^2\sqrt{3}}\log_2\left(1+ \frac{1} {\mu\sigma^2 h^{\alpha}  + 3(\lfloor \frac{2h}{\sqrt{3}\Upsilon} \rfloor+1)\lfloor \frac{2h}{\sqrt{3}\Upsilon} \rfloor + 6 \left(\frac{h}{\Upsilon}\right)^{\alpha} (\sqrt{3}/2)^{-\alpha} \zeta\left(\alpha-1, 1+ \lfloor \frac{2h}{\sqrt{3}\Upsilon} \rfloor\right)}\right).
\end{multline}

Note that it is also the lower bound of $\lim_{\lambda \to \infty} \lambda \hat{\tau}^{\{2, 1\}}_0(\lambda, \alpha)$. We substitute $k = \frac{2 h}{\sqrt{3}\Upsilon}$ in (\ref{proposition_2d_proof_eq}). As $\lambda \to \infty$, we have $\Upsilon \to 0$ and hence we can safely assume that $k \in \mathbb{N}$ which leads to the following expression

\begin{multline}
\label{proposition_2d_proof_eq_2}
\lim_{\lambda\to \infty} \lambda\,\hat{\tau}^{\{2, 1\}}_0(\lambda, \alpha) \geq \lim_{\lambda\to \infty} \lambda\,\hat{\tau}^{\{2, 2\}}_0(\lambda, \alpha) >  \\ \lim_{k \to \infty} \frac{k^2\sqrt{3}}{2h^2}  \cdot \log_2\left(1+ \frac{1} {\mu\sigma^2 h^{\alpha}  + 3k(k+1) + 6 k^{\alpha} \zeta\left(\alpha-1, 1+ k\right)}\right).
\end{multline}
Next, by applying Watson's Lemma to the integral representation of the Hurwitz Zeta function we obtain the asymptotic expansion of $\zeta(\alpha-1, 1+ k)$ for $k \to \infty$.

\begin{multline} \zeta(\alpha-1, 1 + k) = \frac{1}{\Gamma(\alpha-1)}  \int_{0}^{\infty} \frac{x^{\alpha - 2} e^{-(1+k)x}}{1-e^{-x}}dx \sim \frac{\Gamma(\alpha - 2)}{\Gamma(\alpha-1)}(1+k)^{-\alpha + 2} + \frac{1}{2} (1+k)^{-\alpha+1} + O(k^{-\alpha}).\end{multline}

By substituting $\zeta(\alpha-1, 1 + k)$ for its asymptotic expansion in (\ref{proposition_2d_proof_eq_2}) and by applying the L'Hopital rule $ \lim_{x\to c} \frac{f(x)}{g(x)} = \lim_{x\to c} \frac{f'(x)}{g'(x)} $ we provide the lower bound for $l_1(r)$ as

\begin{align} 
&\lim_{\lambda\to \infty} \lambda\,\hat{\tau}^{\{2,1\}}_0(\lambda, \alpha) > \lim_{k \to \infty} \frac{k^2\sqrt{3}}{2h^2}\log_2\Bigg[1+ \bigg(\mu\sigma^2 h^{\alpha} + 3k(k+1) + 6 k^{\alpha} \Big(\frac{\Gamma(\alpha - 2)}{\Gamma(\alpha-1)(1+k)^{\alpha - 2}}& \nonumber \\ 
& + \frac{1}{2(1+k)^{\alpha - 1}}\Big)\bigg)^{-1}\Bigg] = \lim_{t \to 0^+} \frac{\sqrt{3}}{2t^2h^2}\log_2\Bigg[1+ t^2 \cdot \bigg(\mu\sigma^2 h^{\alpha}t^2  + 3(1+t)  
+ 6 \Big(\frac{\Gamma(\alpha - 2)}{\Gamma(\alpha-1)(1+t)^{\alpha - 2}} & \nonumber \\
& + \frac{t}{2(1+t)^{\alpha - 1}}\Big)\bigg)^{-1}\Bigg] =   \frac {2\sqrt{3}}{12\ln2\,h^2\left(1+2\,\frac{\Gamma(\alpha-2)}{\Gamma(\alpha-1)}\right)}.
\end{align}

From the above result it can be easily seen that the lower bound of ASE as $\lambda \to \infty$ for 2D regular network without fading under $l_1(r)$ is greater than zero and depends on $h$ and $\alpha$. Next, similar to 1D network, by using Proposition \ref{proposition_10a} and Lemma \ref{lemma_appendix} it can be shown that this result can be extended to 2D irregular network and 2D regular network with Rayleigh fading, thus concluding the proof for 2D network.
\end{proof}

\section{Proof of Proposition \ref{proposition_12}}
\label{proposition_12_proof}

The proof follows that of Theorem \ref{theorem_aase}. We start by substituting $l(r) = l_1(r)$ into (\ref{generic_1d_limit}) and rearranging the terms to obtain the following expression for the ASE limit for 1D regular network without fading

\begin{equation*} 
\lim_{\lambda \to \infty} \lambda\,\tau^{\{1,1\}} (\lambda, \alpha)  = 
\lim_{\Upsilon \to 0^+} \frac{1}{\Upsilon} \log_2 \Bigg(1 + \frac{h^{-\alpha}}{\mu\sigma^2 + 2\Upsilon^{-\alpha}\sum_{i=1}^{\infty}\left(i^{2} + \left(\frac{h}{d}\right)^2\right)^{-\alpha/2}} \Bigg).
\end{equation*}

Then, by substituting $k \to \frac{h}{\Upsilon}$ we get

\begin{equation}
\label{eq_prop12_1} 
\lim_{\lambda \to \infty} \lambda\,\tau^{\{1,1\}} (\lambda, \alpha)  = 
\lim_{k \to \infty} \frac{k}{h} \log_2 \left(1 + \frac{h^{-\alpha}}{\mu \sigma^2 + 2h^{-\alpha} k^\alpha \sum_{i=1}^{\infty}\left(i^{2} + k^2\right)^{-\alpha/2}} \right).
\end{equation}

By using (\ref{eqn_dbl_z}), given $n=\alpha/2$, and assuming $ \alpha = 2 $, we obtain the following expression 

\begin{figure*}[!t]
\normalsize
\setcounter{equation}{96}
\begin{multline}
\label{eqn_dbl_z}
\sum_{k=1}^{\infty} \frac{1}{(c^2+k^2)^{n}} =  \frac{\Gamma(n-\frac{1}{2})}{2 \sqrt{\pi}\,  \Gamma(n)\, c^{2n}} \left(\pi\,c\,\coth(\pi c)-1 \right) + \left( \frac{j}{2 c}\right)^n \sum_{k=0}^{n-2} \frac{(n)_k}{k!} \left(\frac{j }{2c}\right)^k \cdot \\ \Big( (-1)^n \zeta \left(n-k,1-jc\right) + (-1)^k \zeta \left(n-k, 1 + jc\right)\Big) 
\end{multline}
\hrulefill
\vspace*{4pt}
\end{figure*}

\setcounter{equation}{94}

\begin{equation} \sum_{i=1}^{\infty} \frac{1}{(i^2+k^2)} =  -\frac{1}{2 k^2} + \frac{\pi \coth(\pi k) }{2k}. \end{equation}

We use the above expression to simplify (\ref{eq_prop12_1}). By further applying the L'Hopital rule,  we obtain the ASE limit for $\alpha = 2$ as follows

\begin{multline}
\lim_{\lambda \to \infty} \lambda\,\tau^{\{1,1\}} (\lambda, \alpha)  = \lim_{k\to \infty} \frac{k}{h}\log_2 \left( 1+ \frac{h^{-2}} {\mu\sigma^2  + 2\, h^{-2}\, k^{2} \left( -\frac{1}{2 k^2} + \frac{\pi \coth(\pi k) }{2k}\right)}\right) 
= \frac{1}{\ln2\,\pi h}.
\end{multline}

Following a similar approach, and by using (\ref{eqn_dbl_z}) for  $n=\alpha/2$ we can prove ASE limits for $\alpha = 4$ and $\alpha = 6$.

Next, by using Proposition \ref{proposition_10a} and Lemma \ref{lemma_appendix} it can be shown that the results can be extended to 1D irregular network and 1D regular network with Rayleigh fading, thus concluding the proof.

\setcounter{equation}{97}

\section{Proof of Proposition \ref{proposition_2d_2}}
\label{proposition_2d_2_proof}

The proof follows that of Theorem \ref{theorem_aase}. We start by substituting $l(r) = l_1(r)$ into (\ref{generic_2d_limit}) and rearranging the terms to obtain the following limit expression 

\begin{equation}
\lim_{\lambda\to \infty} \lambda \hat{\tau}^{\{2, 1\}}_0(\lambda, \alpha)
 = \lim_{\lambda\to \infty} \lambda \log_2\left(1+ \frac{l_1(\|b_0\|)}{\mu\sigma^2 + \sum_{\Phi^{HEX} \setminus b_o} l_1(\|b_i\|)}\right).
\end{equation}

Next, similar to the proof of Theorem \ref{theorem_aase}, we use the expressions for lower and upper bounds of cumulated interference in hexagonal (triangular) lattice given that a typical user and its serving BS are located at the origin

\begin{equation} \sum_{k=1}^{\infty} 6k \cdot l_1(\Upsilon k)< I_0 < 6\,l_1(\Upsilon) + \sum_{k=2}^{\infty} 6k \cdot l_1(\Upsilon \sqrt{3}/2\, k)\end{equation}

Assuming $\alpha = 4$ and using the following formula for infinite series

\begin{equation} \sum_{k=n+1}^{\infty} \frac{k}{(k^2+c^2)^2} =  \frac{j}{4 c} (\psi^{(1)}(n+jc) - \psi^{(1)}(n-jc)) \end{equation}
we obtain inequalities presented below

\begin{equation} I_0 > 6\Upsilon^{-4} \frac{j}{4} \frac{\Upsilon}{h} (\psi^{(1)}(j\, h/\Upsilon) - \psi^{(1)}(-j\,h/\Upsilon))\qquad\end{equation}

\begin{multline}
I_0 < 6\Upsilon^{-4}\Bigg[ \bigg(1+\Big(\frac{h}{\Upsilon}\Big)^2\bigg)^{-2} + (\sqrt{3}/2)^{-3} \frac{j}{4} \cdot \frac{\Upsilon}{h} \bigg(\psi^{(1)}\Big(1+j\, \frac{2h}{\sqrt{3}\Upsilon}\Big) - \psi^{(1)}\Big(1-j\,\frac{2h}{\sqrt{3}\Upsilon}\Big)\bigg)\Bigg]
\end{multline}
where $\psi^{(1)}(x)$ is the Polygamma function of the first order, and $j=\sqrt{-1}$.

By taking the lower bound of the cumulated interference and substituting $k=\frac{h}{\Upsilon}$ we obtain the upper bound of $\lim_{\lambda \to \infty} \lambda \tau^{\{2, 2\}}_0(\lambda, 4)$ in the form presented below

\begin{multline}
\lim_{\lambda\to \infty} \lambda\,\tau^{\{2,1\}}_0(\lambda, 4) = \lim_{k\to \infty} \frac{2k^2}{h^2\sqrt{3}}  \cdot\log_2\left(1+ \frac{1} {\mu\sigma^2h^{4}  + 6 k^{3} \frac{j }{4} (\psi^{(1)}(jk) - \psi^{(1)}(-jk))}\right).
\end{multline}

Next, by applying Watson's Lemma to the integral representation of the Polygamma function of the first order we obtain the asymptotic expansion of $\psi^{(1)}(jk)$ in the following form
\begin{equation} \psi^{(1)}(jk) = \int_{0}^{\infty} \frac{x e^{jk x}}{1-e^{-x}}dx \sim (jk)^{-1} + \frac{1}{2} (jk)^{-2} + O((jk)^{-3}).\end{equation}

By substituting $\psi^{(1)}(jk)$ and $\psi^{(1)}(-jk)$ with their asymptotic expansions and then applying the L'Hopital rule, we obtain the upper bound for 2D regular network without fading.

\begin{multline}
\lim_{\lambda\to \infty} \lambda\,\tau^{\{2,1\}}_0(\lambda, 4) = \lim_{k\to \infty} \frac{2k^2}{h^2\sqrt{3}}
\cdot\log_2\left(1+ \frac{1} {\mu\sigma^2h^{4}  + 6 k^{3} \frac{j }{4} ((jk)^{-1} - (-jk)^{-1})}\right) \\
= \lim_{t\to 0^+} \frac{2}{t^2h^2\sqrt{3}} \cdot\log_2\left(1+ \frac{t^2} {\mu\sigma^2h^{4}t^2  + 3}\right) 
= \frac{2\sqrt{3}}{9\ln2\,h^2}
\end{multline}

Following a similar approach the lower bound of $\lim_{\lambda \to \infty} \lambda \tau^{\{2, 2\}}_0(\lambda, 4)$ can also be obtained. 

Next, using Proposition \ref{proposition_10a} and Lemma \ref{lemma_appendix} it can be shown that the results can be extended to 2D irregular networks and 2D regular networks with Rayleigh fading, thus concluding the proof.




\ifCLASSOPTIONcaptionsoff
  \newpage
\fi



%
%
%

\addcontentsline{toc}{chapter}{Bibliography} 
\bibliographystyle{IEEEtran}
\bibliography{new_jrnl_antenna_final}

%




\end{document}